\title{\textbf{Component Stability in Low-Space Massively Parallel Computation}\thanks{This work is partially supported by a Weizmann-UK Making Connections Grant,  the Centre for Discrete Mathematics and its Applications (DIMAP), IBM Faculty Award, EPSRC award EP/V01305X/1, European Research Council (ERC) Grant No. 949083, and the European Union's Horizon 2020 programme under the Marie Sk{\l}odowska-Curie grant agreement No 754411.}}
\author{\textbf{Artur Czumaj}
\thanks{Department of Computer Science and Centre for Discrete Mathematics and its Applications (DIMAP), University of Warwick. Email: A.Czumaj@warwick.ac.uk.
}
	\and
\textbf{Peter Davies}
\thanks{Institute of Science and Technology Austria (IST Austria). Email: Peter.Davies@ist.ac.at.
}
	\and
\textbf{Merav Parter}
\thanks{Weizmann Institute of Science, Rehovot, Israel. Email: Merav.Parter@weizmann.ac.il.
}
}
\author{\textbf{Artur Czumaj} \\
    University of Warwick \\
    A.Czumaj@warwick.ac.uk
	\and
\textbf{Peter Davies} \\
    IST Austria \\
    Peter.Davies@ist.ac.at
	\and
\textbf{Merav Parter} \\
    Weizmann Institute of Science \\
    Merav.Parter@weizmann.ac.il
}
\newtheorem{theorem}{Theorem}
\newtheorem{corollary}[theorem]{Corollary}
\newtheorem{lemma}[theorem]{Lemma}
\newtheorem{proposition}[theorem]{Proposition}
\newtheorem{definition}[theorem]{Definition}
\newtheorem{question}[theorem]{Question}
\newtheorem{claim}[theorem]{Claim}
\newtheorem{fact}[theorem]{Fact}
\newenvironment{proofs}{%
	\proof}{\endproof}
\newcommand{\mydriver}{hypertex}
\renewcommand{\mydriver}{pdftex}
\newcommand{\congest}{\textsf{CONGEST}\xspace}
\newcommand{\congc}{\textsf{CONGESTED CLIQUE}\xspace}
\newcommand{\local}{\textsf{LOCAL}\xspace}
\newcommand{\LOCAL}{\local}
\newcommand{\MPC}{\textsf{MPC}\xspace}
\newcommand{\LCL}{\textsf{LCL}\xspace}
\newcommand{\Prob}[1]{\mathbf{Pr}\left[#1\right]}	
\newcommand{\nat}{\ensuremath{\mathbb{N}}}
\def\epsilon{\ensuremath{\varepsilon}}
\newcommand{\eps}{\ensuremath{\epsilon}}
\newcommand{\Vars}{\ensuremath{\mathcal V }}
\newcommand{\ents}{\ensuremath{\mathcal X }}
\newcommand{\MPCDetStable}{$\mathsf{S}$-$\mathsf{DetMPC}$}
\renewcommand{\MPCDetStable}{\textsf{S-DetMPC}\xspace}
\newcommand{\MPCDetNonStable}{$\mathsf{DetMPC}$}
\renewcommand{\MPCDetNonStable}{\textsf{DetMPC}\xspace}
\newcommand{\MPCRandStable}{$\mathsf{S}$-$\mathsf{RandMPC}$}
\renewcommand{\MPCRandStable}{\textsf{S-RandMPC}\xspace}
\newcommand{\MPCRandNonStable}{$\mathsf{RandMPC}$}
\renewcommand{\MPCRandNonStable}{\textsf{RandMPC}\xspace}
\newcommand{\poly}{\operatorname{\textrm{poly}}}
\newcommand{\polylog}{\operatorname{\textrm{polylog}}}
\newcommand{\polyloglog}{\operatorname{\textrm{polyloglog}}}
\newcommand{\cA}{\mathcal{A}}
\newcommand{\cP}{\mathcal{P}}
\newcommand{\sparam}{\ensuremath{\phi}\xspace}
\newcommand{\NO}{\textbf{NO}\xspace}
\newcommand{\YES}{\textbf{YES}\xspace}
\newcommand{\Artur}[1]{\footnote{{\sc\small \textcolor[rgb]{1.00,0.00,0.00}{\textbf{Artur:}}} \textcolor[rgb]{0.00,0.50,0.00}{\textsf{#1}}}}
\newcommand{\Peter}[1]{\footnote{{\sc\small \textcolor[rgb]{1.00,0.00,0.00}{\textbf{Peter:}}} \textcolor[rgb]{0.00,0.00,1.00}{#1}}}
\newcommand{\COMMENTED}[1]{{}}
\newcommand{\junk}[1]{\COMMENTED{#1}}
\newcommand{\hide}[1]{{}}
\def\zeit{\number\shorthour:\ifnum\shortminute<10 0\number\shortminute
	\else\number\shortminute\fi}
\begin{document}

\begin{titlepage}

\maketitle

\begin{abstract}
In this paper, we study the power and limitations of component-stable algorithms in the low-space model of \emph{Massively Parallel Computation (\MPC)}. Recently Ghaffari, Kuhn and Uitto (FOCS 2019) introduced the class of \emph{component-stable} low-space \MPC algorithms, which are, informally, defined as algorithms for which the outputs reported by the nodes in different connected components are required to be independent. This very natural notion was introduced to capture most (if not all) of the known efficient \MPC algorithms to date, and it was the first general class of \MPC algorithms for which one can show non-trivial conditional lower bounds.  In this paper we enhance the framework of component-stable algorithms and investigate its effect on the complexity of randomized and deterministic low-space \MPC. Our key contributions include:
	
\begin{itemize}
\item We revise and formalize the lifting approach of Ghaffari, Kuhn and Uitto. This requires a very delicate amendment of the notion of component stability, which allows us to fill in gaps in the earlier arguments.
\item We also extend the framework to obtain conditional lower bounds for deterministic algorithms and fine-grained lower bounds that depend on the maximum degree $\Delta$.
\item We demonstrate a collection of natural graph problems for which deterministic non-component-stable algorithms break the conditional lower bound obtained for component-stable algorithms. This implies that, in the context of deterministic algorithms, component-stable algorithms are conditionally weaker than the non-component-stable ones.
\item We also show that the restriction to component-stable algorithms has an impact in the randomized setting. We present a natural problem which can be solved in $O(1)$ rounds by a component-unstable \MPC algorithm, but requires $\Omega(\log\log^* n)$ rounds for any component-stable algorithm, conditioned on the connectivity conjecture.
\end{itemize}

Altogether our results imply that component-stability might limit the computational power of the low-space \MPC model, at least in certain contexts, paving the way for improved upper bounds that escape the conditional lower bound setting of Ghaffari, Kuhn, and Uitto.
\end{abstract}

\end{titlepage}




\section{Introduction}
\label{sec:intro}


The central goal of this paper is to advance our understanding of the computational power of low-space algorithms in the \emph{Massively Parallel Computation (\MPC)} model. Our main focus is on the notion of \emph{component-stable} low-space \MPC algorithms introduced recently by Ghaffari, Kuhn and Uitto \cite{GKU19} as the first general class of \MPC algorithms for which non-trivial conditional lower bounds can be obtained. Roughly speaking, in this class of algorithms the output of nodes in different connected components are required to be independent. While this definition has been introduced to capture most (if not all) of the known \MPC algorithms to date, and the notion of component-stable algorithms seems quite natural and unlimited, we demonstrate its effect on the complexity of randomized and deterministic low-space \MPC. Our main finding is that the notion of component-stability as defined in \cite{GKU19} is rather fragile and needs to be studied with care, leading us to a revision of this framework to make it robust. Our amended framework of com\-ponent-stable algorithms allows us to fill in gaps in the earlier arguments and make it more applicable.
In particular, the revised setup enables us to extend the framework of (conditional) lower bounds from \cite{GKU19} for component-stable randomized algorithms relating \LOCAL algorithms and low-space \MPC algorithms: we demonstrate that it can be parameterized with respect to the maximum graph degree $\Delta$ and holds also for deterministic algorithms, thereby making the framework more broadly applicable and proving for the first time a host of conditional lower bounds for a number of deterministic low-space component-stable \MPC algorithms.


Next, we will show that for some natural problems there are low-space \emph{component-unstable} \MPC algorithms (both randomized and deterministic) that are significantly more powerful than their component-stable counterparts. So, rather than being a technical triviality, component-stability is in fact a significant restriction on the power of the low-space \MPC model.

\paragraph{Background.}
The rapid growth of massively parallel computation frameworks, such as MapReduce~\cite{DG04}, Hadoop~\cite{White12}, Dryad~\cite{IBYBF07}, or Spark~\cite{ZCFSS10} resulted in the need of active research for understanding the computational power of such systems. The \emph{Massively Parallel Computation (\MPC)} model, first introduced by Karloff et al.\ \cite{KSV10} (and later refined in \cite{ANOY14,BKS13,GSZ11}) has became the standard theoretical model of algorithmic study, as it provides a clean abstraction of these frameworks. Over the past years, this model has been receiving a major amount of interest by several independent communities in theory and beyond. In comparison to the classical PRAM model, the \MPC model allows for a lot of local computation (in principle, unbounded) and enabled it to capture a more ``coarse--grained'' and meaningful aspect of parallelism (see, e.g., \cite{ASSWZ18,BHH19,CLMMOS18,GGKMR18,GU19}).

In the \MPC model, there are $M$ machines and each of them has $S$ words of local space at its disposal. Initially, each machine receives its share of the input.
%
In the context of graph problems where the input is a collection $V$ of nodes and $E$ of edges, $|V| = n$, $|E| = m$, the input is arbitrarily distributed among the machines (and so $S \cdot M \ge n + m$). 
In this model, the computation proceeds in synchronous \emph{rounds} in which each machine processes its local data and performs an arbitrary local computation on its data. At the end of each round, machines exchange messages. Each message is sent only to a single machine specified by the machine that is sending the message. All messages sent and received by each machine in each round, as well as the output have to fit into the machine's local space~$S$.

Our focus in this paper is on the \emph{low-space} setting where the local space of each machine is \emph{strongly sublinear} in the number of nodes, i.e., $S = n^{\sparam}$ for some
$\sparam \in (0,1)$. Our lower bounds will be against algorithms with any polynomial number of machines (i.e., $M = \poly(n)$), while our upper bounds (with the exception of the non-uniform general derandomization of Lemma \ref{lem:det-alg-many-machines}) will use at most $O(m+n^{1+\sparam})$ global space (i.e., $M = O(\frac{m}{n} + n^\sparam)$).

The low-space regime is particularly challenging due to the fact that a node's edges cannot necessarily be stored on a single machine, but rather are scattered over several machines. Nevertheless, for many classical graph problems, $\poly(\log n)$-round algorithms can be obtained, and recently we have also seen even \emph{sublogarithmic} solutions. Ghaffari and Uitto \cite{GU19} (see also \cite{Onak18}) presented a randomized graph sparsification technique resulting in $\widetilde{O}(\sqrt{\log\Delta})$ round algorithms for maximal matching and MIS, where $\Delta$ is the maximum degree. This should be compared, for example, with maximal matching algorithms with significantly more local space: Lattanzi et al.\ \cite{LMSV11} presented an $O(1/\eps)$-round randomized algorithm using $O(n^{1+\eps})$ local space and Behnezhad et al.\ \cite{BHH19} gave an $O(\log\log n)$-round randomized algorithm using $O(n)$ local space (see also \cite{ABBMS17,CLMMOS18,GGKMR18}). For the problem of $(\Delta+1)$-vertex coloring Chang et al. \cite{CFGUZ19} showed a randomized low-space \MPC algorithm that works in $O(\log\log \log n)$ rounds, when combined with the network decomposition result of Rohzo{\v{n}} and Ghaffari \cite{RG20}.

While we have seen some major advances in the design of low-space \MPC algorithms, no (unconditional) hardness results are known for any of the above problems in the low-space \MPC setting. A seminal work by Roughgarden et al.\ \cite{RVW18} provides an explanation for the lack of such lower bound results. They showed that obtaining any unconditional lower bound in the low-space \MPC (for algorithms with an arbitrarily polynomial number of machines) setting ultimately leads to a breakthrough result in circuit complexity, namely that $\textsf{NC}^1 \subsetneqq \textsf{P}$. This work has opened up a new avenue towards proving \emph{conditional} hardness results that are based on the widely believed \emph{connectivity conjecture}. This conjecture (extensively used in our current paper) states that there is no $o(\log n)$-round (randomized) low-space \MPC algorithm (even using any polynomial global space) for distinguishing between the input graph $G$ being an $n$-length cycle and two $\frac{n}{2}$-length cycles. 

The first conditional lower bounds in the low-space \MPC setting were presented by a recent insightful paper of Ghaffari, Kuhn and Uitto \cite{GKU19}. This work provides a collection of conditional hardness results for classical local problems by drawing a new connection between the round complexity of a given problem in the \LOCAL model \cite{Linial92}, and its corresponding complexity in the low-space \MPC model. Unlike the low-space \MPC setting, for the \LOCAL model, arguably one of the most extensively studied model in distributed computing, there is a rich collection of (unconditional) lower bound results. To enjoy these \LOCAL lower bound results in our context, \cite{GKU19} presented a quite general technique that for many graph problems translates an $\Omega(r)$-round \LOCAL lower bound (with an additional requirement of using shared randomness) into an $\Omega(\log r)$-round lower bound in the low-space \MPC model \emph{conditioned on the connectivity conjecture}. This beautiful lifting argument is surprisingly quite general, capturing the classical lower bounds for problems like MIS, maximal matching \cite{KMW06}, LLL (Lov{\'a}sz Local Lemma), and sinkless orientation \cite{BFH+16}. For example, one very strong implication of their technique is that conditioned on the connectivity conjecture, it shows that there is no randomized low-space \MPC algorithm for (even approximate) maximal matching or MIS problems using $o(\log \log n)$ rounds.

The framework of Ghaffari, Kuhn and Uitto \cite{GKU19} has one main caveat, which at first glance appears to be quite negligible, a mere technicality. Their conditional lower bounds do not hold for \emph{any} algorithms but rather only for the special class of \emph{component-stable} \MPC algorithms. The key property of these algorithms is that the output of the nodes in one connected component is independent of other components. More formally, in component-stable algorithms, the output of each node $v$ is allowed to depend (deterministically) only on the node $v$ itself, the initial distribution, the ID assignment of the connected component of $v$, and on the shared randomness. The class of component-stable algorithms is indeed quite natural, and at least by the time of publication of \cite{GKU19}, it appeared to capture most, if not all, of the existing \MPC algorithms, as explicitly noted by the authors:
\begin{mdframed}[hidealllines=true,backgroundcolor=gray!15]
\begin{quote}
\cite{GKU19}
\emph{To the best of our knowledge, all known algorithms in the literature are component-stable or can easily be made component-stable with no asymptotic increase in the round complexity.}
\end{quote}
\end{mdframed}
In this view, it appeared that the restriction to component-stable algorithms is no more than a minor technicality rather than an actual limitation on the low-space \MPC model.

The first indication that component-stability might actually matter was provided by recent works \cite{CDP20a,CDP20b}, which present deterministic low-space \emph{component-unstable} \MPC algorithms for several classic graph problems, even though the validity of solutions to these problems depends only on local information. Specifically, by derandomizing a basic graph sparsification technique, one can obtain $O(\log \Delta+\log\log n)$-round deterministic low-space component-unstable \MPC algorithms for MIS, maximal matching, and $(\Delta+1)$-coloring. A key ingredient of these algorithms is a global agreement on a logarithmic length seed, to be used by all nodes in order to simulate their randomized decisions. This global seed selection involves coordination between all the nodes, regardless of their components, thus yielding component-unstable algorithms. The component-instability here seems to be quite inherent to the derandomization technique, and it is unclear whether any component-stable method could perform equally well.

\subsection{Our aims}
\label{subsec:aims}

In this paper \emph{we thoroughly investigate the concept of component-stability and its impact on randomized and deterministic low-space \MPC algorithms}.
%

Upon examining the notion of component-stability in detail and after attempts to broaden its applications, it becomes apparent that the concept is highly sensitive to the exact definition used, and that one must be very careful in specifying what information the outputs of component-stable algorithms may depend on. For example, we must precisely specify whether we allow component-stable algorithms' outputs to depend on the input size $n$, and we find that either choice here holds problematic implications for the current lower bounds and for the analysis due to Ghaffari et al. \cite{GKU19}.

This raises the first main question of our work:
%

\begin{mdframed}[hidealllines=true,backgroundcolor=gray!25]\vspace{-8pt}
\begin{question}
\label{q:definition}
Can we revise the lifting framework and amend the definition of component-stability which both captures a wide array of algorithms, and also allows us to prove robust lower bounds?
\end{question}
\end{mdframed}


Having fixed such a definition, we ask to what extent component-stability restricts \MPC algorithms, and whether the concept is indeed a technicality or actually a significant limitation. The lifting arguments of \cite{GKU19} are designed for randomized algorithms, which raises the following question:

\begin{mdframed}[hidealllines=true,backgroundcolor=gray!25]\vspace{-8pt}
\begin{question}
\label{q:stable-rand}
Does component-instability help for obtaining improved \textsf{randomized} low-space \MPC algorithms for graph problems? Is there any \emph{separation between randomized compo\-nent-stable} and \emph{component-unstable} \MPC algorithms?
\end{question}
\end{mdframed}

\hide{We note that the lifting framework in \cite{GKU19} is not restricted to exact graph problems (e.g., \LCL problems, see \Cref{subsec:problem-def} and \cite{NS95,CKP19}), but it captures also lower bounds for the approximate variants of these problems (e.g., approximate maximum matching). However, one must restrict the problems considered in this setting to those whose solution can be defined in a component-stable manner. For example problems that involve global computations (e.g., aggregate function of all nodes) trivially cannot be addressed in this framework. Following \cite{GKU19}, we restrict attention to local graph problems (i.e., with polylogarithmic or sub-logarithmic \LOCAL complexity).
	
\Peter{ took out this paragraph since it doesn't seem too related here, and I'm not sure I fully agree with all of it (we don't explicitly restrict attention to local graph problems, we instead define our strange class of replicable problems etc..., but we don't really want to go into that here)}

}

We then turn to consider the impact of component-stability on \emph{deterministic} low-space \MPC algorithms. Since the recent derandomization technique of \cite{CDP20a,CDP20b} leads to inherently component-unstable algorithms, we ask:

\begin{mdframed}[hidealllines=true,backgroundcolor=gray!25]\vspace{-8pt}
\begin{question}
\label{q:stable-det}
Does component-instability help for obtaining improved \textsf{deterministic} low-space \MPC algorithms for graph problems? Is there any \emph{separation between deterministic compo\-nent-stable} and \emph{component-unstable} \MPC algorithms?
\end{question}
\end{mdframed}

\hide{To address this question, we first extend the lifting arguments of \cite{GKU19} to the deterministic setting. This provides conditional lower bounds for component-stable deterministic \MPC algorithms. The benefit in such a lifting result is that for several local problems, we will get exponentially stronger (conditional) lower bound results for deterministic algorithms than for randomized.
	
\Peter{I've postponed all 'answering' of questions (which this seems to be) until the end}}

Understanding the gap between randomized and deterministic solutions is one of the most fundamental and long-standing questions in graph algorithms. In a very related context, the last few years provided a sequences of major breakthrough results which almost tightly characterize the gap between the deterministic and randomized complexities for \emph{distributed computation} (in the \LOCAL model). Rohzo{\v{n}} and Ghaffari \cite{RG20} settled a several-decades-old open problems by presenting a deterministic polylogarithmic algorithm for network decomposition. Their result implies that any polylogarithmic-time randomized algorithm for \LCL problems \cite{CKP19,NS95} can be derandomized to a polylogarithmic-time deterministic algorithm. In other words, \emph{randomization does not help in the polylogarithmic-time regime}. On the other hand, Chang, Kopelowitz and Pettie \cite{CKP19} showed that \emph{in the sub-logarithmic-time regime, randomization might provide an exponential benefit}. For example, for 
$\Delta$-coloring of trees of maximum degree $\Delta$ there are $\Theta(\log_{\Delta}\log n)$ randomized-round algorithms and a deterministic lower bound of $\Omega(\log_{\Delta} n)$ rounds. Balliu et al. \cite{BBHORS19} recently showed that maximal matching and maximal independent sets cannot be found by $o(\Delta + \log\log n/ \log \log \log n)$-rounds randomized algorithms, and deterministically in $o(\Delta + \log n)$ rounds, thus provide an exponential gap in the lower bound for bounded-degree graphs. Finally, \cite{CQW+20} presented additional separation results for edge-coloring: $(2\Delta-2)$-edge-coloring requires $\Omega(\log_{\Delta} \log n)$ randomized rounds, and $\Omega(\log_\Delta n)$ deterministic rounds. In view of these separation results, we therefore ask in the context of related \MPC computation:

\begin{mdframed}[hidealllines=true,backgroundcolor=gray!25]\vspace{-8pt}
\begin{question}
\label{q:rand-det-stable-non-stable}
Is there a gap between component-stable \textsf{randomized} algorithms vs. component-stable \textsf{deterministic} algorithms? \hide{If so, does this gap exist for component-unstable algorithms? \Peter{Removed this part since the answer is no, and we say next that the answers are all yes.}}
\end{question}
\end{mdframed}

\hide{
	

Peter - I also removed this paragraph for now, since it refers to the removed Lemma, and I'm not sure it makes sense to have this here without it.


For the class of local graph problems (e.g., \LCL{}s, which have been extensively studied in the \LOCAL model) until now our arguments above demonstrate that
component-instability only help in reducing the deterministic complexity of the problem.

The latter is still bounded from below by the logarithm of the randomized complexity of these problems. We provide a partial explanation for this behavior. One of the greatest advantage of component-unstable \MPC algorithms is that they allow one to boost the success guarantee by a polynomial factor, without paying any slow-down in the round complexity. This allows one to consider, in principle, faster randomized \LOCAL algorithms which succeed just with a \emph{polynomially small probability} of $\frac{1}{\poly(n)}$ (rather than the standard requirement of at least $1-\frac{1}{n}$). The existence of an exponentially faster randomized algorithm would immediately provide an exponentially faster component-unstable \MPC algorithms, thus breaking the lifting argument for \LCL problems. We show (cf. Lemma \ref{lem:rand-barrier}) that no such algorithm exists but proving that for \LCL problems, the relaxation of the success guarantee of the \LOCAL randomized algorithm from the standard $1-\frac{1}{\poly(n)}$ into $\frac{1}{n}$ barely affects its round complexity. This stands in a stark contrast to the approximation variants of the \LCL problems. For the approximation of the maximum independent set problem, there is $O(1)$ round randomized algorithm that succeeds with constant probability, but any algorithm that succeeds with probability of at least $1-\frac{1}{\log n}$ requires $\Omega(\log^*n)$ rounds \cite{KKSS19}.

}

In this paper, we answer all four questions in the affirmative.

\subsection{Our contributions}
\label{subsec:contributions}

\hide{
To address Question \ref{q:definition}, we revise the framework of low-space component-stable \MPC algorithms due to Ghaffari et al.\ \cite{GKU19}.
We study in details the framework and demonstrate that in order to be broadly applicable, many aspects of the original setting are highly sensitive to the exact definition used, and require amendments to carefully specify what information the outputs of component-stable algorithms may depend on. We present a modified framework to reach a revised definition of component-stability which both encompasses many existing \MPC algorithms, and for which robust conditional lower bounds can be shown. Our revised framework does not only recover all main results from the framework of Ghaffari et al.\ \cite{GKU19}, but also extends the arguments to include
conditional lower bounds for \emph{deterministic algorithms} and fine-grained lower bounds that depend on 
$\Delta$.


To address Question \ref{q:stable-rand}, we present a local problem that admits an $\Omega(\log^* n)$ round lower bound in the \LOCAL model, and thus (by our revised framework of low-space component-stable \MPC algorithms extending \cite{GKU19}) a conditional lower bound of $\Omega(\log\log^* n)$ rounds for low-space component-stable \MPC algorithms. This problem admits a very simple $O(1)$-round randomized low-space \MPC algorithm which is component-unstable (in fact, this algorithm can also be derandomized in $O(1)$ rounds as well). The problem that separates component-stable and component-unstable randomized low-space \MPC algorithms is an approximation of maximum independent sets, recently studied in the \LOCAL model by \cite{KKSS19}.


To address Question~\ref{q:stable-det}, we rely on our generalization of the lifting framework revising \cite{GKU19} to the deterministic setting. This allows us to provide even stronger conditional lower bounds for deterministic low-space component-stable \MPC algorithms that those that could be obtained by comparison to randomized bounds via \cite{GKU19}. We then break many of these stronger lower bounds by exploiting component-\emph{instability}. Towards that goal, we provide a sequence of derandomization results using a collection of tools such as bounded-independence hash functions and pseudorandom generators (PRG) \cite{Vadhan12}. Using PRGs, we are able to provide a more general derandomization technique, that allows us to simulate a class of $r$-round randomized \LOCAL algorithms in $O(\log r+\log^* n)$ \emph{deterministic} \MPC rounds.


To address Question \ref{q:rand-det-stable-non-stable}, we show that for any $r$-round randomized low-space \MPC algorithm there exists a non-uniform and non-explicit $O(r)$-round deterministic low-space \MPC algorithm. This deterministic algorithm is inherently \emph{component-unstable}. On the other hand, for component-stable \MPC algorithms we extend the randomized lifting arguments from \cite{GKU19} to the deterministic setting and provide a \emph{conditional exponential separation} for several \LCL problems (e.g., those that admit an exponential separation in the \LOCAL model).

}
In this paper, we study the power and limitations of component-stable algorithms in the low-space \MPC model. This class of algorithms is to this date the only class for which conditional lower bound for local graph problems can be obtained.  Our main contribution is in demonstrating the impact of this property on the complexity of randomized and deterministic local graph problems. Towards that goal, we informally define the following four complexity classes:
\begin{itemize}
\item \textbf{\MPCDetNonStable:} deterministic low-space component-unstable \MPC algorithms,
\item \textbf{\MPCDetStable:} deterministic low-space component-stable \MPC algorithms,
\item \textbf{\MPCRandNonStable:} randomized low-space component-unstable \MPC algorithms,
\item \textbf{\MPCRandStable:} randomized low-space component-stable \MPC algorithms.
\end{itemize}

We refer to \Cref{subsec:complexity-classes} for the precise definitions.


\paragraph{A robust lifting framework.}
%

We rectify the framework of low-space component-stable \MPC algorithms due to Ghaffari et al.\ \cite{GKU19}.
We study the framework in detail and demonstrate that in order to be broadly applicable, many aspects of the original setting are highly sensitive to the exact definition used, and require amendments to carefully specify what information the outputs of component-stable algorithms may depend on. We present a modified framework, reaching a revised definition of component-stability (see Definition \ref{def:component-stable-algorithms}) which both encompasses many existing \MPC algorithms, and for which robust conditional lower bounds can be shown. This answers Question~\ref{q:definition}.

\paragraph{Extensions to deterministic and degree-dependent lower bounds.}

 Our revised framework not only recovers all main results from the framework of Ghaffari et al.\ \cite{GKU19}, but also extends the arguments to include
conditional lower bounds for \emph{deterministic algorithms} and fine-grained lower bounds that depend on 
$\Delta$. While our main theorem (\Cref{thm:lb}) lifting \LOCAL lower bounds to component-stable \MPC algorithms has several subtle assumptions, the main, informal claim is that for many graph problems $\mathcal{P}$, if $\mathcal{P}$ has a $T(n,\Delta)$-round (randomized or deterministic) lower bound in the \LOCAL model, then assuming the connectivity conjecture, any low-space component-stable (respectively, randomized or deterministic) \MPC algorithm solving $\mathcal{P}$ requires $\Omega(\log T(n,\Delta))$ rounds.


\paragraph{Instability helps \emph{randomized} \MPC algorithms.}

To address Question \ref{q:stable-rand}, we consider the problem of finding a large (specifically, of size $\Omega(n/\Delta$) independent set. This problem has been recently studied by 
Kawarabayashi et al.\ \cite{KKSS19}, who provided a randomized lower bound of $\Omega(\log^* n)$ rounds (for a specific range of $\Delta$)\hide{under the success guarantee of $1-\frac{1}{10\log n}$ - Peter: don't think this is important, since as we discussed, our lifting Thm. isn't set up to carry it through to MPC}. We show that their lower bound can be adapted to our revised lower-bound lifting framework of \Cref{thm:lb}, obtaining a conditional lower bound of $\Omega(\log\log^* n)$ rounds for component-stable \MPC algorithms.\footnote{The original lifting arguments of \cite{GKU19} only hold for \LOCAL lower bounds that hold under exact knowledge of $n$;  the lower bound of \cite{KKSS19} does not, but holds only under knowing a polynomial estimate of $n$, which is allowed for in our framework.} In contrast, we present a very simple $O(1)$-round component-unstable randomized algorithm for the problem. In fact, this algorithm can further be \emph{derandomized} within $O(1)$ rounds (see \Cref{thm:stableMPC-LB}), demonstrating an instance in which deterministic component-unstable algorithms are more powerful even than randomized component-stable algorithms (i.e., $\MPCDetNonStable \nsubseteq \MPCRandStable$).

\begin{theorem}
\label{thm:approx-IS}
Conditioned on the connectivity conjecture, any com\-po\-nent-stable low-space \MPC algorithm for computing an independent set of size $\Omega(n/\Delta)$ on $n$-node graphs (for the full range of $\Delta\in [1,n)$) and succeeding with probability at least $1-\frac 1n$, requires $\Omega(\log\log^*n)$ rounds. This problem admits a simple $O(1)$-round randomized low-space \MPC algorithm which is component-unstable; additionally, the algorithm can be derandomized within $O(1)$ rounds.
\end{theorem}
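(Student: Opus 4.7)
My plan mirrors the three claims in the theorem. For the conditional lower bound, I would apply our revised lifting theorem \Cref{thm:lb} to the \LOCAL lower bound of Kawarabayashi et al.\ \cite{KKSS19}, which shows that any randomized \LOCAL algorithm producing an independent set of size $\Omega(n/\Delta)$ with success probability $1-1/n$ requires $\Omega(\log^* n)$ rounds. Plugging this $T(n,\Delta)=\Omega(\log^* n)$ bound into \Cref{thm:lb} immediately yields the desired $\Omega(\log\log^* n)$ conditional lower bound for component-stable \MPC algorithms. The subtle point, already flagged in the footnote, is that the \cite{KKSS19} bound only holds when nodes know a polynomial estimate of $n$, not $n$ exactly; the original lifting of \cite{GKU19} does not accommodate this, but our revised definition (cf.\ \Cref{def:component-stable-algorithms}) does, since component-stability is defined relative to only the information visible from within a component, and a polynomial estimate of $n$ is consistent with that.

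For the $O(1)$-round randomized component-unstable upper bound, I would use the standard two-step sampler: each vertex $v$ independently sets $x_v=1$ with probability $p=1/(4\Delta)$, and $v$ joins the output iff $x_v=1$ and $x_u=0$ for every neighbor $u$. Both steps run in $O(1)$ low-space \MPC rounds, since the neighbor check is an edge-indexed aggregation and needs no vertex to store its whole neighborhood. Writing $S=\{v:x_v=1\}$, we have $\Exp{|S|}=n/(4\Delta)$ and $\Exp{|E(S)|}\leq m p^{2}\leq n/(32\Delta)$, so $\Exp{|S|-2|E(S)|}=\Omega(n/\Delta)$; a Markov bound on $|E(S)|$ together with Chernoff concentration for $|S|$ (the regime $\Delta=\Theta(n)$ is trivial since any vertex gives an IS of the required size $O(1)$) produces a valid IS with constant probability. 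To reach the stated $1-1/n$ success, I would run $\Theta(\log n)$ independent copies in parallel with fresh random seeds and use a global aggregation of depth $O(1)$, enabled by the polynomial global space, to pick a copy whose output certifies a valid large IS. This global selection is exactly what makes the algorithm component-unstable: the output of a node depends on which seed was globally chosen, which in turn depends on vertices outside its component.

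For the $O(1)$-round \emph{deterministic} variant, I would replace independent sampling by sampling from a family of $O(1)$-wise independent hash functions (equivalently, apply a constant-independence PRG from the derandomization toolbox developed elsewhere in the paper); the second-moment calculation above only uses pairwise independence of the $x_v$'s, so constant independence suffices to retain constant-probability success over the seed. The resulting seed length is $O(\log n)$, giving $\poly(n)$ candidate seeds that I enumerate across the available machines, one seed per machine, so that each machine deterministically simulates the sampler with its own seed; a final global aggregation step picks any seed whose output is a valid IS of size $\Omega(n/\Delta)$. Correctness is automatic as at least one seed in the family succeeds, and component-instability arises for the same reason as in the randomized case. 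The main delicate step I foresee is not in any single component but in verifying that the \cite{KKSS19} \LOCAL bound really transfers through our lifting framework under only a polynomial estimate of $n$; I would need to inspect the connectivity-conjecture reduction to confirm that the virtual \LOCAL instance it constructs exposes exactly the polynomial estimate of $n$ that \cite{KKSS19} requires, so that the hypothesis of \Cref{thm:lb} is genuinely met.
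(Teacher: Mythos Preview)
Your overall plan is sound, and for the upper bounds you arrive at correct algorithms by a somewhat different route than the paper. But the lower-bound part has two genuine omissions, and your explanation of the polynomial-estimate subtlety is misdirected.

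\textbf{Missing hypotheses for \Cref{thm:lb}.} To invoke the lifting theorem you must verify that (i) the problem is $O(1)$-\emph{replicable} and (ii) the \LOCAL lower bound holds \emph{with shared randomness}. You do neither. Replicability of ``IS of size $\Omega(n/\Delta)$'' is not automatic for an approximation problem; the paper proves it separately (Lemma~\ref{lem:IS-replicable}), and this is precisely why the replicability notion was introduced. Shared randomness is also not free: the \cite{KKSS19} bound is stated for private randomness, and the paper first invokes the shared-randomness strengthening of the ring-MIS bound from \cite{GKU19} (Fact~\ref{fc:MIS-ring-LB}) and then reruns the \cite{KKSS19} reduction on top of it (\Cref{thm:approxIS-LB}). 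Without these two checks the application of \Cref{thm:lb} is incomplete.

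\textbf{The polynomial-estimate point.} You correctly flag that \cite{KKSS19} only holds without exact knowledge of $n$, but your justification---that Definition~\ref{def:component-stable-algorithms} only exposes component-local information---is wrong: that definition \emph{explicitly} allows dependence on the exact global $n$. The actual fix lives in the \LOCAL side of the framework (\Cref{subsubsec:LOCAL-algs}), where algorithms are given only a polynomial estimate $N$; the proof of \Cref{thm:lb} constructs its simulated \LOCAL algorithm with such an $N$, and this is what makes the \cite{KKSS19} hypothesis match.

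\textbf{Upper bounds.} Your sampler (mark with probability $1/(4\Delta)$, keep isolated marked vertices, analyze via $|S|-2|E(S)|$) is a valid alternative to the paper's single Luby step (keep local minima of random values), and your amplification by $\Theta(\log n)$ parallel copies is exactly the paper's. For the derandomization, the paper uses pairwise independence together with the conditional-expectations/sparsification machinery of \cite{CDP20a}, which keeps global space at $O(m+n^{1+\sparam})$. Your brute-force enumeration of all $\poly(n)$ seeds also works, but note that ``one seed per machine'' cannot mean a single machine simulates the whole graph; you need $\poly(n)$ parallel \emph{distributed} runs, incurring a polynomial blow-up in global space. That is still within the model, but it is a weaker guarantee than what the paper obtains.
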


Stated in complexity language, \Cref{thm:approx-IS} provides a \emph{separation between the class of \MPCRandNonStable and \MPCRandStable} (conditioned on the connectivity conjecture, see \Cref{thm:complexity-SRan-Ran}). The basic observation providing this separation is the fact that one can easily compute
in $O(1)$ rounds (even in the \LOCAL model) an independent set of $\Omega(n/\Delta)$ nodes \emph{in expectation}. In the \LOCAL model, we need provably longer to achieve a high-probability success guarantee of $1-\frac 1n$. In the low-space \MPC model, however, we can perform the process of \emph{success probabiliy amplification}: we run $\Theta(\log n)$ \emph{parallel} repetitions of the basic algorithm, and choose a successful one if such exists, amplifying the success probability to $1-\frac 1n$ without any slow-down. This powerful process, though, is inherently component-unstable, since it relies on globally agreeing on one of the repetitions to use\footnote{Indeed, this causes an issue with the proof of Lemma III.1 of \cite{GKU19}, where success probability amplification is used in an algorithm $A_{\MPC}$ that is later (in Lemma IV.2 in \cite{GKU19}) assumed to be component-stable.}.


\paragraph{Instability helps \emph{deterministic} \MPC algorithms.}

We then turn to consider the effect of component-stability on deterministic \MPC algorithms. While the original setup of Ghaffari et al.\ \cite{GKU19} had been designed only for randomized algorithms, the revised framework developed in our paper in \Cref{subsubsec:component-stable-algorithms} extends naturally to the deterministic setting, providing a robust deterministic lifting analog in \Cref{thm:lb}. \Cref{thm:lb} provides a general framework lifting unconditional deterministic lower bounds for the \LOCAL model for many natural graph problems to conditional lower bounds for low-space component-stable \MPC algorithms 
in the same way as the randomized framework in \cite{GKU19}.

We then turn to show that with component-instability one can in fact surpass these conditional lower bounds and present several results showing a \emph{separation between \MPCDetNonStable and \MPCDetStable} (conditioned on the connectivity conjecture) and positively answering Question \ref{q:stable-det}. In \Cref{subsec:LLL-related}, we show that for several problems closely related to LLL, including sinkless orientation and some variants of edge-coloring and vertex-coloring, component-instability helps for deterministic algorithms. Finally, in \Cref{subsec:examps-complexity-separation}, we demonstrate a similar result for the class of all \LOCAL \emph{extendable} algorithms by combining the lifting of deterministic \LOCAL lower bounds in \Cref{thm:lb} with a derandomization technique using pseudorandom generators. To demonstrate the applicability of this derandomization recipe, we show how it can be used to \emph{improve the deterministic running times} of two cornerstone problems in low-space MPC: \emph{maximal independent set} and \emph{maximal matching}. And so, on one hand we prove (\Cref{thm:MIS-MM-1}) that conditioned on the connectivity conjecture, there is no deterministic low-space component-stable \MPC algorithm that computes a maximal matching or maximal independent set, even in forests, in $o(\log\Delta +\log\log n)$ rounds, and on the other hand, we give a deterministic low-space component-unstable \MPC algorithm for these problem running in $O(\log\log\Delta + \log\log\log n)$ rounds (when $\Delta=2^{\log^{o(1)} n}$, Corollary \ref{cor:thm:MIS}). (The resulting \MPC algorithm must either perform heavy local computations, or alternatively, the underlying PRGs can be hard-coded in the \MPC machines for a non-uniform but computationally-efficient algorithm.)


%


\paragraph{Relations between randomized and deterministic \MPC algorithms.}

Finally, we consider the interesting gap between randomized and deterministic algorithms in the low-space \MPC setting. As observed by \cite{CKP19} and \cite{GK19} (and see Lemma \ref{lem:det-large-prob} for a sketch of the argument), randomized algorithms that succeed with probability of $1-1/2^{n^2}$ can be turned into non-uniform deterministic algorithms. This result can also be extended to the low-space \MPC setting, with some caveat. In contrast to the \LOCAL model where the space of the nodes is unlimited, in the low-space \MPC setting, the transformation implied by \cite{CKP19} and \cite{GK19} yields a non-explicit algorithm (see Lemma \ref{lem:det-large-prob}). By using success probability amplification with $\poly(n)$ machines, one can boost the success guarantee of any randomized  \MPC algorithm from $1-1/\poly(n)$ to $1-1/2^{n^2}$ without any slowdown in the round complexity. From the complexity perspective, restricting to non-uniform and non-explicit computation, one therefore finds that $\MPCDetNonStable = \MPCRandNonStable$, see \Cref{thm:complexity-Det-Ran}. For some specific problems, we can perform more careful derandomization methods that \emph{do not} cause the resulting deterministic algorithms to be non-uniform, non-explicit, or to use excessive global space, demonstrating that component-stability restricts power even without these allowances.

As we have noted, success probability amplification is inherently component-unstable. Turning our focus to low-space component-stable \MPC algorithms, here we can provide a conditional separation between randomized and deterministic algorithms (\Cref{thm:instability-provably-helps}), positively answering Question \ref{q:rand-det-stable-non-stable}. This separation follows by combining (i) the conditional lifting for randomized component-stable algorithms and deterministic component-stable algorithm with (ii) local problems for which there is provable gap in their randomized and deterministic \LOCAL complexity.


\hide{Peter - I hid this because I think most of it has already been said in the previous paragraphs and I think our introduction is too long in general, but feel free to change if you disagree
	
\paragraph{Hypothesis for the role of instability.}

The main property of component-unstable algorithms that we leverage in this work is their ability to boost the success guarantee without incurring any slow-down in the round complexity. In the context of approximation problems this is useful already for randomized algorithm. In the context of \LCL problems, we prove that this might not help to improve their randomized complexity, but considerably helps to improve their deterministic complexity, for the following reason. Many of the gap results between randomized and deterministic algorithms, are based on exploiting the following dichotomy. {\color{red} A randomized algorithm, in principle, is allowed to err with probability at most $1/n^c$; in contrast, a deterministic algorithm, corresponds to a randomized algorithm with an exponentially smaller error probability of $1-1/2^{n^2}$.} This exponential gap on the allowed error leads to an exponential gap in the round complexity of  these algorithms. The lifting arguments imply that this exponential gap is at least conditionally preserved for component-stable algorithms. Instability however can beat the conditional deterministic bounds by using the polynomially many machines to reduce the polynomial error into an exponential error without any loss. This ultimately leads to strictly improved component-unstable deterministic algorithms for \LCL problems. Since these boosting arguments lead to non-uniform and non-explicit deterministic algorithms, and also use larger global space than we would like, we provide a more delicate and algorithmic derandomization for breaking the conditional deterministic lower bounds for specific problems. Finally, for the class of \LCL problems, we speculate that the \emph{randomized} lifting results of \cite{GKU19} and from this paper should hold also for component-unstable algorithms.
}

\paragraph{Complexity summary:}

Let us summarize the complexity results, assuming the connectivity conjecture, and allowing non-uniform \MPC algorithms. Our study demonstrates that in low-space \MPC, component-unstable algorithms are provably stronger than their component-stable counterparts, both for deterministic and randomized algorithms
(\Cref{thm:complexity-SDet-Det,thm:complexity-SRan-Ran}). Further, for component-stable algorithms, randomized algorithms are provably stronger than their deterministic counterparts
(\Cref{thm:complexity-SDet-SRan}). However, for arbitrary (possibly component-unstable) algorithms this is not the case: any randomized algorithm can be efficiently simulated by a deterministic one (Lemma \ref{lem:det-alg-many-machines} and \Cref{thm:complexity-Det-Ran}).


\section{Revised Framework of Component Stability}
\label{sec:prelim}



In this section we suggest an array of changes that may be made to the framework of component-stable algorithms due to Ghaffari et al.\ \cite{GKU19}, in order to reach a revised definition of component-stability (Definition~\ref{def:component-stable-algorithms}) which both encompasses many existing randomized \MPC algorithms, and for which robust conditional lower bounds can be shown. These changes also allow us to extend the original setting to both deterministic algorithms and those that have running-time dependency on maximum degree $\Delta$.

\subsection{Discussion of definitions of component stability}
\label{subsec:problems-with-def-comp-stab}

Let us first present the description of component stability from \cite[Section~II]{GKU19}:
\begin{quote}\small
\emph{
Formally, assume that for a graph $G$, $\mathcal{D}_G$ denotes the initial distribution of the edges of $G$ among the $M$ machines and the assignment of unique IDs to the nodes of $G$. For a subgraph $H$ of $G$ let $\mathcal{D}_H$ be defined as $\mathcal{D}_G$ restricted to the nodes and edges of $H$. Let $H_v$ be the connected component of node $v$. An \MPC algorithm $\cA$ is called \emph{component-stable} if for each node $v \in V$, the output of $v$ depends (deterministically) on the node $v$ itself, the initial distribution and ID assignment $\mathcal{D}_{H_v}$ of the connected component $H_v$ of $v$, and on the shared randomness $\mathcal{S}_M$.}
\end{quote}

We informally sketch the line of argument of \cite{GKU19} that leverages this definition to lift \LOCAL lower bounds to \MPC: first, it is shown that if there is a \LOCAL lower bound for a problem $P$, and an MPC algorithm $A_{MPC}$ is able to solve $P$ faster than the $\log$ of the \LOCAL lower bound, there there must exist two graphs $G$ and $G'$, which are locally indistinguishable but on which $A_{MPC}$'s output must differ (at least with some sufficiently large probability). In the terminology of \cite{GKU19}, $A_{MPC}$ must be `farsighted': able to somehow make use of information from far-away nodes.

These graphs $G$ and $G'$, and the assumed algorithm $A_{MPC}$, are then ingeniously used to construct an algorithm $B_{st-conn}$ that solves a connectivity problem conjectured to be hard. Specifically, $B_{st-conn}$ constructs a pair of simulation graphs based on its input to the connectivity problem. These simulation graphs consist of many disjoint copies of induced subgraphs of $G$ and $G'$ respectively. The construction is made in such a way that a \emph{full} copy of $G$ and $G'$ only appears if two particular nodes (designated $s$ and $t$) are connected in the input graph for the connectivity problem.

$B_{st-conn}$ simulates $A_{MPC}$ on this pair of simulation graphs. If $s$ and $t$ are connected, then full copies of $G$ and $G'$ are present as connected components in the simulation graphs, and $A_{MPC}$ should return \emph{different} outputs on them with sufficiently high probability. Otherwise, there are no full copies of $G$ and $G'$, and $A_{MPC}$ returns the same outputs on both simulation graphs. This difference in behavior is exploited to allow $B_{st-conn}$ to determine whether $s$ and $t$ are connected, and solve the connectivity problem faster than is conjectured to be possible.

The property of component-stability is crucial in this last step: we require that $A_{MPC}$ behaves the same on $G$ and $G'$ when they are connected components of the (much larger) simulation graphs as it does when they are the entire input (as was the case when showing that $A_{MPC}$ was farsighted). Otherwise, we could not say anything about $A_{MPC}$'s output on the simulation graphs. It transpires that this argument is quite fragile, and highly sensitive to the precise definition of component-stability used. We discuss some of the issues below.

\paragraph{Randomized component-stable algorithms must be allowed dependency on $n$.}

The first major point of consideration is that, as defined in \cite{GKU19}, the output of a node $v$ under a component-stable algorithm must depend only on shared randomness, the IDs of $v$ and its component, and the input distribution of edges to machines. In particular, no provision is made for dependency on the number of nodes $n$ in the input graph, and indeed, the arguments of \cite{GKU19} seem to forbid it.\footnote{Specifically in proof of \cite[Lemma IV.2]{GKU19}, where algorithm $A_{MPC}$ is simulated on large simulation graphs containing smaller components $G$ and $G'$, as discussed above, its behavior on $G$ and $G'$ as components is only identical to when run on them as sole input if no dependency on $n$ is permitted.} This is somewhat counter-intuitive for \MPC algorithms: while a \LOCAL algorithm can never determine $n$ unless it is given as input (and therefore it is commonly assumed that we provide the algorithm with at least a polynomial estimate of $n$), an \MPC algorithm can easily do so in $O(1)$ rounds, by simply summing counts of the number of nodes held on each machine. We can therefore assume any such algorithm \emph{has} knowledge of the exact value of $n$, and natural algorithmic approaches would generally make use of this knowledge.

Furthermore, the \emph{success} probability of correct randomized algorithms is defined to be at least $1-\frac 1n$, in accordance with the standard definition of \emph{with high probability correctness}. This causes a contradiction for algorithms with no output dependency on $n$:

Consider a correct component-stable \MPC algorithm $A$ for a problem in which the validity of a node's output can be verified by seeing its connected component (we will formalize this notion later), running on a $n$-node graph. This algorithm must produce a valid output for each node in the graph with probability at least $1-\frac 1n$.

We now add $\eta$ disconnected nodes to the input graph. If $A$'s output does not have any dependency on $n$, then it must be unchanged at each of the original nodes, since they are in an unchanged connected component. However, $A$ must succeed on the new graph with probability at least $1-\frac{1}{n+\eta}$. Since the problem is component-stable, the probability that $A$ succeeds on all of the nodes of the original graph is at least as high as the probability that it succeeds on the whole graph, i.e., $1-\frac{1}{n+\eta}$. So, $A$ must succeed on the original graph with probability at least $1-\frac{1}{n+\eta}$, and since we can set $\eta$ arbitrarily high, must therefore succeed with certainty; this requires $A$ to be deterministic!

Another problem with disallowing dependency on $n$ is that \emph{running times} generally depend on $n$. While, for some (deterministic) algorithms, this dependency could be replaced by other parameters (e.g., the maximum number of nodes in a connected component, or size of the ID space), this is not the case for randomized algorithms, where the running time usually directly affects the success probability, which itself must depend on $n$.

So, in short, a definition of component-stability which does not allow any dependency on $n$ includes essentially no non-trivial randomized algorithms.

\paragraph{If we allow dependency on $n$, we must restrict the class of problems in order to obtain \MPC lower bounds.}

We have seen that, to give results which apply to probabilistic algorithms, we must allow dependency on $n$. However, we cannot then hope to obtain a result akin to Theorem I.4 of \cite{GKU19} for \emph{all} graph problems.

As an example, consider the following problem: each node must output \YES if the entire graph is a simple path with consecutive node IDs, and \NO otherwise. Note that there is only one possible correct output for each node $v$, and that this output is a deterministic function of its component and the value of $n$ (since $v$'s output should be \YES iff its component is an $n$-node path with consecutive IDs). Furthermore, there is an $O(1)$-round \MPC algorithm for the problem: it is straightforward to check whether there are two nodes of degree $1$, $n-2$ nodes of degree $2$, and that each node's $1$-hop neighborhood is consistent with being in a path of consecutive IDs. So, if component-stability is defined to allow dependency on $n$, an $O(1)$-round deterministic component-stable algorithm for the problem exists.

However, the problem has a trivial $n-1$-round (randomized) \LOCAL lower bound, since a \YES instance can be changed to a \NO instance by only altering the ID of one endpoint of the path, and the other endpoint requires $n-1$ rounds to detect this change. Hence, we cannot hope for a universal method of lifting \LOCAL lower bounds to non-trivial component-stable \MPC lower bounds if component-stability allows dependency on $n$.

We will see, though, that such counterexamples are necessarily quite contrived, and that we \emph{can} prove such a result for a class that includes most problems of interest (such as, e.g., \emph{all locally-checkable (\LCL)} problems, see \Cref{subsec:problem-def}).

\paragraph{Uniqueness of identifiers.}

It is common in both \LOCAL and \MPC to assume that nodes of the input graph are equipped with identifiers (IDs) that are unique throughout the entire graph. This assumption, however, is somewhat at odds with the concept of component-stability: if, for example, a disconnected node is added to a valid graph, sharing an ID with an existing node, then the input becomes invalid. So, outputs for the original nodes are now allowed to change arbitrarily, even though their components have not altered.

We could instead require that IDs are only component-unique (i.e., they are allowed to be shared by disconnected nodes, but not connected ones). This is a weaker assumption which aligns well with component-stability, and is still sufficient for \LOCAL algorithms (in which nodes have no interaction with or dependency on disconnected nodes).

This approach, though, presents a problem in \MPC. Unlike in \LOCAL, where nodes are inherently separate computational entities which only need IDs for symmetry-breaking (particularly for deterministic or shared randomness algorithms), in \MPC an input graph node essentially \emph{is} its ID. The input is given only as a binary encoding of the IDs of nodes and edges, and so any two nodes with the same ID will be contracted to a single node when this input is interpreted as a graph. As a consequence, \MPC algorithms cannot work with graphs in which IDs are only component-unique.

Our solution to this problem is to separate the two functions of IDs. We will assume that IDs are only component-unique, and that component-stable \MPC algorithms can depend on these. However, we also provide \MPC algorithms with fully-unique \emph{names} for nodes, whose purpose is \emph{only} to allow the algorithm to distinguish the input graph's nodes as separate objects. Accordingly, we do not allow the output of component-stable algorithms to depend on the names.%
\footnote{Note that, unlike the changes regarding dependency on $n$ and problem class, this change is not necessary to show a general framework for conditional \MPC lower bounds --- the same results could be proven assuming fully-unique IDs (at least for randomized algorithms) using techniques from \cite{GKU19}. However, we feel that this definition better captures the `spirit' of component-stability.}

\paragraph{Initial distribution of input.}
The definition of \cite{GKU19} allows \MPC algorithms' outputs to depend on the initial distribution of the input to the machines. While this is natural to do, we observe that under our definition it is not necessary: given a component-stable algorithm $A_{MPC}$ whose output depends on this distribution, we can always create a new component-stable $B_{MPC}$ which does not.

Specifically, since the nodes have unique $poly(n)$ names (and we can also give edges unique $poly(n)$ names based on their endpoints), and we are allowed any $poly(n)$ number of machines, algorithm $B_{MPC}$ can first (in one round) redistribute each node and edge of the input to its own dedicated machine, with the same name as the corresponding node or edge. Then, it simulates $A_{MPC}$, and reaches a valid output, which is now independent of the initial distribution. Since $A_{MPC}$'s output is component-stable, $B_{MPC}$'s is also.\footnote{We will define component-stable outputs to not depend on the names of machines --- this is not a major restriction, since we are not aware of any \MPC algorithms which are not independent of renaming machines. However, it is an important point here, since $B_{MPC}$'s machine names now depend on node names, upon which $B_{MPC}$'s output must not depend.}

Therefore, a lower bound for component-stable algorithms that depend on input distribution implies a lower bound for those that do not. So, we can disallow this dependency from the definition without weakening our results.


\subsection{Graph families}
\label{subsec:graph-families}

In this section, we make some definitions concerning the input graphs on which \MPC algorithms run. Firstly, to address the problem concerning uniqueness of identifiers, we define \emph{legal graphs} to be those with separate unique node names and component-unique node IDs as discussed:

\begin{definition}
\label{def:legal-graphs}
A graph $G$ is called \textbf{legal} if it is equipped with functions $\textsf{ID}, \textsf{name} : V(G)\rightarrow [poly(n)]$ providing nodes with IDs and names, such that all names are fully unique and all IDs are unique in every connected component.
\end{definition}

Throughout the paper, we will always assume that input graphs for \MPC are legal (and we will \emph{ensure it when constructing inputs ourselves}). For component-stable algorithms, this is to allow a weaker dependency on the IDs and not the names, as discussed above. For non-component-stable algorithms, it is no different from the standard (fully-unique IDs) requirement, since their outputs are allowed to depend on the names, and so we can simply use the names as IDs.

Next, we make a definition which will allow us to show \MPC lower bounds on specific families of graphs. \LOCAL lower bounds are often proven using graphs from some specific family $\mathcal H$ as the \emph{``hard instances''}: in particular, many such bounds are proven on trees. Lower bounds on restricted families of graphs are stronger than those on the class of all graphs, and can also provide more meaningful hardness results for problems which are only \emph{possible} on restricted families (such as $\Delta$-vertex coloring, see Theorem \ref{thm:vcoloring-1}). When lifting \LOCAL lower bounds on restricted graph families to \MPC, we therefore wish to preserve the family on which the lower bound holds.

As discussed in Section \ref{subsec:problems-with-def-comp-stab}, the lines of argument made both by \cite{GKU19} and this work involve construction of a \emph{simulation graph} $H$ as the ``hard instance'' in \MPC. These simulation graphs (from the proof of Lemma IV.2 of \cite{GKU19} and Lemma \ref{lem:stconn} here) are constructed differently, but both consist of disjoint unions of induced subgraphs of some hard instance $G$ for the \LOCAL lower bound.

The simulation graph $H$ is not necessarily in $\mathcal H$, if $\mathcal H$ is an arbitrary graph family, and is not therefore a valid input for an \MPC algorithm defined to run on $\mathcal H$. This is not necessarily a problem for \cite{GKU19}, since one could require that their definition of component-stability should hold \emph{even if algorithms are given an invalid input}. Under this definition, the output of a component-stable \MPC algorithm $A_{MPC}$ on $H$ must be the disjoint union of $A_{MPC}$'s outputs on its connected components (which, in this case, are in $\mathcal H$) separately, and this circumvents the need for $H$ itself to be in $\mathcal H$.

However, as we notice in \Cref{subsec:problems-with-def-comp-stab}, to incorporate non-trivial randomized algorithms, any suitable definition of component-stable algorithms must allow dependency on $n$. Then, the output of $A_{MPC}$ of $H$ need not be the union of that on $H$'s connected components, since the inputs have differing values of $n$. This necessitates several changes from the proofs of \cite{GKU19}, one of which is that we \emph{do} require $H\in \mathcal H$.

To ensure that this is the case, we prove our lower-bound lifting argument only for the following families of graphs: 

\begin{definition}[Normal families of graphs]
\label{def:normal-graphs}
A class of graphs $\mathcal{H}$ is called \textbf{normal} if it is hereditary (i.e., closed under removal of nodes) and closed under disjoint union.
\end{definition}

The set of \emph{all graphs} is a normal family, and can always be used in the worst case. Further, observe that the \emph{class of all trees is not a normal family} of graphs; however, the family of \emph{all forests} is normal. Therefore, for example, Theorem \ref{thm:lb} implies that \LOCAL lower bounds on \emph{trees} can be lifted to conditional \MPC lower bounds on \emph{forests} (but not trees).

\subsection{Types of graph problems and replicability}
\label{subsec:problem-def}

We next define the types of \emph{problem} we will encompass with this work. We will focus on graph problems, and let $\mathbb{G}$ be the collection of all legal input graph instances.

We consider only graph problems where \emph{each node of the input graph must output some label} from a finite set $\Sigma$. For example, for the vertex coloring problem the label of a node corresponds to its color, and for the independent set problem, the label corresponds to the indicator variable whether the node is in the independent set returned.

As in \cite{GKU19}, we do not explicitly output labels for edges. However, to apply our results to problems in which only edges are given output labels (such as matching, edge coloring, or sinkless orientation), we can simply redefine the problem as vertex-labeling on the \emph{line graph} (where the vertices represent edges of the input graph, with IDs and names given by Cartesian products of the IDs and names of their endpoints). For any normal graph class $\mathcal H$, the family of $\mathcal {L_H}$ of line graphs of graphs from $\mathcal H$ is also normal. Working on the line graph increases number of nodes $n$ at most quadratically and maximum degree $\Delta$ at most by a factor of $2$, and in \LOCAL requires only $1$ extra round. We will see that performing this conversion to the line graph will allow us to obtain results for edge-labeling problems without any asymptotic change in bounds. We can then convert back to recover a solution to the problem on the original graph.

A graph problem is then defined by a collection of \emph{valid} outputs for each possible pair (topology, IDs) of a legal input graph. Importantly, we do \emph{not} allow validity to be dependent on the \emph{names} of graph nodes (though these names are part of any legal input). That is, given a particular input graph topology and set of IDs, the collection of valid outputs must be consistent regardless of node names. This is because component-stable outputs are not allowed to depend on names, so most problems which allowed solution-dependency on names would be trivially unsolvable by component-stable algorithms. In any case, names were introduced solely to allow \MPC algorithms to distinguish nodes as objects, and should not be considered part of problems.

The goal of any algorithm for the problem is then to provide a valid output for the specific legal input it was given. For many problems it is useful to have a concept of the output of a \emph{particular node} being valid. The overall output is then valid if all nodes' outputs are valid. To capture this concept, we define the following sub-class of problems:

\begin{definition}
For $r \in \nat$, an \textbf{$r$-radius checkable problem} is defined by a collection of \textsf{valid} outputs for each $r$-radius centered graph equipped with unique IDs.\footnote{$r$-radius graphs are, by definition, connected, so component-unique IDs are unique IDs.} The output of a node $v$ in input graph $G$ is deemed valid if the centered graph given by its $r$-radius ball, and the outputs thereof, is a member of this valid collection. An overall output on $G$ is valid if all nodes' outputs are valid.
\end{definition}

An \emph{$r$-radius centered graph} here is simply a connected graph with a designated center node, from which all other nodes are of distance at most $r$.

One can see that $r$-radius checkable problems are precisely those whose solutions can be verified in $r$ rounds of \LOCAL. Note that the majority of graph problems of interest are $r$-radius-checkable for some $r\le n$: for example, the vertex coloring problem requires that each node outputs a color distinct from the colors output by its neighboring nodes, and thus is easily 1-radius-checkable. Similarly, all \LCL (\emph{locally-checkable labeling}, see, e.g., \cite{CKP19,NS95}) problems, a commonly studied class particularly from a lower-bounds perspective, are $O(1)$-radius checkable. Still, some natural problems are not $n$-radius checkable problems: most notably, approximation problems are not, since there is no notion of a node's validity, and nor can nodes determine overall validity by seeing their $n$-radius ball (i.e., their entire connected component). So, while some of our results concern $r$-radius checkable problems (such as those in \Cref{subsec:examps-complexity-separation}), our main lower bounds results will use a more general class of problems, see below, in order to incorporate approximation~problems.


\subsubsection{Replicable graph problems}
\label{subsec:which-are-replicable-problem}

We have seen, from \Cref{subsec:problems-with-def-comp-stab}, that to transfer \LOCAL lower bounds to \MPC, under a definition of component-stability that includes randomized algorithms (and so allows dependency on $n$), one must restrict the class of problems, since some (contrived) problems have $\Omega(n)$-round \LOCAL lower bounds and $O(1)$-round \MPC algorithms. Our goal in this section is to make the minimal restriction needed to facilitate such lower-bound lifting arguments.

During proof of \Cref{thm:lb} (our main lower-bound lifting theorem), we will consider multiple disjoint copies of the input graph enhanced by isolated nodes. To facilitate this concept in our analysis, we introduce the notion of \emph{replicable graph problems}.

\begin{definition}
\label{def:replicable}
A graph problem is $R$-\textbf{replicable} if it satisfies the following property. For any
\begin{itemize}
\item graph $G \in \mathbb{G}$ with $|V(G)| \ge 2$,
\item output labeling $L: V(G) \rightarrow \Sigma$,
\item individual output label~$\ell \in \Sigma$, and
\item graph $\Gamma_G$ which is a disjoint union of at least $|V(G)|^R$ disjoint copies of $G$ (with the same IDs as $G$) and fewer than $|V(G)|$ isolated nodes (with the same ID as each other),
\end{itemize}
let output labeling $L'$ on $\Gamma_G$ be given by $L$ on each copy of $G$, and $\ell$ on each isolated node. Then, if $L'$ is valid on $\Gamma_G$, $L$ must be valid on $G$.
\end{definition}

Note that replicability is monotonic in $R$, i.e., if $P$ is $R$-replicable then $P$ is also $(R+1)$-replicable, since any $\Gamma_G$ satisfying the construction of $(R+1)$-replicability also satisfies the construction for $R$-replicability.

The definition of replicability may seem unnatural: it is designed to align with a specific construction needed in proof of \Cref{lem:sens} (and ultimately \Cref{thm:lb}). However, we argue that the vast majority of natural graph problems are replicable. We first show all that $r$-radius-checkable problems (and hence all \LCL problems \cite{CKP19,NS95}) are replicable;

\begin{lemma}
\label{lemma:LCL-replicable}
Any $r$-radius-checkable problem is $0$-replicable.
\end{lemma}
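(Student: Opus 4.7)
The plan is to argue directly from the definition of $r$-radius checkability: validity of an output labeling is determined locally, by checking each node's $r$-radius ball (topology, IDs, and output labels) against a fixed collection of valid centered labeled graphs. So proving $0$-replicability amounts to showing that for every node $v$ of any fixed copy $G_i$ of $G$ in $\Gamma_G$, the $r$-radius ball of $v$ in $\Gamma_G$ under $L'$ is identical (as an ID-and-label-equipped centered graph) to the $r$-radius ball of $v$ in $G$ under $L$.

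First, observe that in $\Gamma_G$ the copies of $G$ are pairwise disjoint and also disjoint from the isolated nodes, so each copy $G_i$ is a union of connected components of $\Gamma_G$. Consequently, for any node $v\in G_i$ and any $r$, the $r$-radius ball of $v$ in $\Gamma_G$ is entirely contained inside $G_i$ and coincides with the $r$-radius ball of the corresponding node in $G$. By the construction in \Cref{def:replicable}, the IDs on $G_i$ agree with those on $G$, and $L'$ restricted to $G_i$ equals $L$; so the two balls, with their IDs and output labels, are identical centered labeled graphs.

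Now suppose $L'$ is valid on $\Gamma_G$. Since the problem is $r$-radius checkable, validity of $L'$ means that for every node $u$ of $\Gamma_G$, the centered labeled $r$-radius ball of $u$ lies in the collection of valid centered labeled graphs. Applying this in particular to every node $v$ in a single fixed copy $G_1$ of $G$, and using the identification of balls from the previous paragraph, we conclude that every node of $G$ under $L$ has a valid $r$-radius ball. Hence $L$ is valid on $G$, which is exactly what $0$-replicability requires (noting $|V(G)|^0 = 1$, so a single copy of $G$ suffices in the construction of $\Gamma_G$).

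There is no real obstacle here; the only subtle point is making sure to use the fact that copies of $G$ in $\Gamma_G$ are unions of connected components, which ensures that $r$-radius balls cannot leak across copies or reach the isolated nodes, and that IDs and labels inside each copy are inherited verbatim from $G$ and $L$. The argument makes no use of the label $\ell$ assigned to the isolated nodes, nor of how many isolated nodes are present, which is consistent with the claim that $r$-radius checkable problems are replicable with the strongest possible parameter $R=0$.
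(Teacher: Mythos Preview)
The proposal is correct and takes essentially the same approach as the paper: both argue that validity under an $r$-radius-checkable problem is determined within connected components, and since each copy of $G$ sits inside $\Gamma_G$ as a union of connected components with the same IDs and labels, validity of $L'$ on $\Gamma_G$ immediately forces validity of $L$ on $G$. Your version is just a more detailed unpacking of the paper's two-line proof, and is in fact slightly more careful in explicitly noting that a copy of $G$ is a \emph{union} of connected components (accommodating disconnected $G$) rather than a single one.
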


\begin{proof}
For any $r$-radius-checkable problem, the validity of the output of a connected component depends only on the IDs and topology of the connected component. Any $\Gamma_G$ satisfying the construction of $0$-replicability contains at least one copy of $G$ as a connected component. So, for the output on $L'$ on $\Gamma_G$ to be valid, the output $L$ must be valid on $G$.
\end{proof}

Further, a major strength of our framework is that \emph{most approximation problems are also replicable}. As an example, we show replicability for the problem of finding an independent set of size $\Omega(n/\Delta)$ (which is an $\Omega(1/\Delta)$-approximation of the maximum independent set), a problem for which we will later (in \Cref{sec:stable-nonstable-rand}) show a separation between component-stable and non-component-stable algorithms.

\begin{lemma}
\label{lem:IS-replicable}
The problem of finding an independent set of size $\Omega(n/\Delta)$ (on graphs with $\Delta\ge 1$) is $2$-replicable.
\end{lemma}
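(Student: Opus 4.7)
The plan is to take an arbitrary $G$ with $n := |V(G)| \geq 2$ and $\Delta := \Delta(G) \geq 1$, a labeling $L$, a label $\ell \in \{0,1\}$, and a graph $\Gamma_G$ meeting the construction of Definition~\ref{def:replicable} for $R=2$ — namely, $k \geq n^2$ vertex-disjoint copies of $G$ together with $\eta < n$ isolated nodes. Assume the combined labeling $L'$ is valid on $\Gamma_G$, i.e.\ it is an independent set of size at least $c \cdot |V(\Gamma_G)|/\Delta(\Gamma_G)$ for the constant $c > 0$ hiding in the $\Omega(n/\Delta)$ validity condition. The goal is to deduce that $L$ is a valid output on $G$.

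Independence is the easy half: every edge of $\Gamma_G$ lies inside a single copy of $G$ (isolated nodes contribute no edges), so the restriction of $L'$ to any copy — which by construction equals $L$ — must be independent in $G$.

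For the size, since $\Delta \geq 1$ the extra isolated vertices do not raise the maximum degree, so $\Delta(\Gamma_G) = \Delta$ and $|V(\Gamma_G)| = kn+\eta$. Decomposing $|L'| = k|L| + \eta \cdot \mathbf{1}[\ell = 1] \leq k|L| + \eta$ and combining with validity of $L'$ yields
\[
k|L| \;\geq\; \frac{c(kn+\eta)}{\Delta} - \eta \;\geq\; \frac{ckn}{\Delta} - \eta,
\]
so $|L| \geq cn/\Delta - \eta/k$. Plugging in $\eta < n$ and $k \geq n^2$ gives the key estimate $|L| > cn/\Delta - 1/n$.

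The final and subtlest step is to convert this near-threshold bound into an honest $\Omega(n/\Delta)$ lower bound on $|L|$. Since $|L|$ is an integer and the slack $1/n$ is strictly less than $1$, this is essentially tight: reading the validity condition on $G$ as $|L| \geq c' n/\Delta$ for some constant $c' \in (0,c)$, one checks that $cn/\Delta - 1/n \geq c' n/\Delta$ whenever $n^2(c-c') \geq \Delta$, which, using $\Delta \leq n-1$, holds for all $n$ past a fixed threshold depending only on $c-c'$; the finitely many small cases are absorbed into the constant. I expect this integrality step to be the main obstacle, and it is precisely why $R=2$ (not $R=0$ or $R=1$) is needed: the additive slack $\eta/k$ must be of smaller order than any constant fraction of the validity threshold $cn/\Delta$, so that after rounding we still land above $c' n/\Delta$.
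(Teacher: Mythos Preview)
Your proof is correct and follows essentially the same approach as the paper's. The only difference is cosmetic: the paper dispenses with your ``subtlest step'' in one line by noting that $cn/\Delta - 1/n \ge cn/(2\Delta)$ whenever $n \ge 2/c$ (using $\Delta < n$), so your remarks about integrality are unnecessary --- the point is simply that the slack $1/n$ is of lower order than the threshold $cn/\Delta$, allowing you to halve the constant.
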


\begin{proof}
Let $c > 0$ be some fixed constant. Consider a graph $G$ on $n$ nodes, with maximum degree $\Delta$, and a graph $\Gamma_G$ consisting of $k \ge n^2$ copies of $G$ and fewer than $n$ isolated nodes. Note that $\Gamma_G$ has at least $kn$ nodes. Assume we have some output valid labeling $L'$ on $\Gamma_G$, which corresponds to an independent set of size at least $ckn/\Delta$, and in which each copy of $G$ is labeled identically, as is each isolated node. More than $ckn/\Delta-n$ of the nodes in the independent set must be in copies of $G$ (since there are fewer than $n$ nodes not in copies of $G$). Since each copy of $G$ is labeled identically, each must contain more than
\begin{align*}
    \frac{ckn/\Delta-n}{k} &\ge
    \frac{cn}{\Delta} - \frac{1}{n} \ge
    \frac{cn}{2\Delta}
\end{align*}
nodes in the independent set (for $n\ge 2/c$), and therefore the output on $G$ is a valid $\Omega(n/\Delta)$-independent set.
\end{proof}

Similarly, we have a related lemma for approximate matching, one of the central problems demonstrating the power of our framework summarized in \Cref{thm:lb} (which will yield also the conditional hardness of the approximate maximum matching problem on \MPC). The same arguments can also straightforwardly show that $\Omega(1)$-approximation of maximum matching and minimum vertex cover are $O(1)$-replicable.

\begin{lemma}
\label{lem:approx-matching-replicable}
The problem of finding an $\Omega(1)$-approximation of maximum matching is $2$-replicable.
\end{lemma}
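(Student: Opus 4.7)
The plan is to leverage a clean observation that contrasts sharply with the independent-set case: isolated nodes are completely inert for the matching problem, because they have no incident edges and hence cannot participate in any matching. This gives us the exact identity $\mu(\Gamma_G) = k \cdot \mu(G)$, where $\mu(\cdot)$ denotes maximum matching size and $k$ is the number of copies of $G$ in the replicability construction. In particular, there is no corrective term analogous to the ``$-n$ isolated-node contribution'' that had to be absorbed in the proof of Lemma~\ref{lem:IS-replicable}, which makes the argument essentially a one-line division.

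More concretely, I would fix the constant $c^{*} > 0$ defining validity (so that $M$ is a valid output on $H$ iff $|M| \geq c^{*} \cdot \mu(H)$), take $G$ with $n = |V(G)| \geq 2$, and consider a $\Gamma_G$ that is a disjoint union of $k \geq n^{2}$ copies of $G$ together with fewer than $n$ isolated nodes. The replicability construction forces any candidate $L'$ to agree with a common labeling $L$ on every copy of $G$ and a common label $\ell$ on every isolated node. Because isolated nodes contribute no matching edges, we obtain the exact count $|L'| = k \cdot |L|$; combining this with validity of $L'$, which gives $|L'| \geq c^{*} \cdot \mu(\Gamma_G) = c^{*} \cdot k \cdot \mu(G)$, and dividing by $k$ immediately yields $|L| \geq c^{*} \cdot \mu(G)$.

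The one remaining thing to check is that $L$ itself is a valid matching on $G$, i.e.\ that no two edges of $L$ share an endpoint. This is automatic: every edge of $\Gamma_G$ lies entirely within a single copy of $G$, so the matching constraint on $L'$ restricted to any one copy is exactly the matching constraint for $L$ on $G$. Putting these pieces together, $L$ is a valid $\Omega(1)$-approximation of maximum matching on $G$, establishing $2$-replicability. There is no substantial obstacle here; the main conceptual point is simply recognizing that the absence of edges at isolated nodes makes the normalization exact, which sidesteps the careful arithmetic needed in the independent-set analogue. In fact the proof uses $k \geq n^{2}$ only in a trivial sense and would work for any $R \geq 0$; the exponent $R=2$ is chosen purely for uniformity with Lemma~\ref{lem:IS-replicable}, and the same argument extends verbatim (with ``matching'' replaced by ``vertex cover'') to handle the $\Omega(1)$-approximate minimum vertex cover problem, since isolated vertices similarly need not be covered.
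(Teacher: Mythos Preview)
Your argument is correct, but it follows a different encoding than the paper's. The paper, in keeping with its stated convention for edge-labeling problems, recasts maximum matching as maximum independent set on the \emph{line graph} and then essentially repeats the proof of Lemma~\ref{lem:IS-replicable}: because the isolated vertices added to $\Gamma_G$ \emph{can} sit in an independent set of the line graph, one must subtract an $n$-term and use $k\ge n^2$ to absorb it, exactly as in the IS case. Your route instead treats matching directly as a vertex-labeling problem on the original graph; since isolated vertices carry no incident edges they are genuinely inert, giving the exact identities $\mu(\Gamma_G)=k\,\mu(G)$ and $|L'|=k\,|L|$ with no correction. This is shorter and, as you observe, actually establishes $0$-replicability rather than merely $2$-replicability. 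The trade-off is that the paper's version stays within the line-graph convention it set up for edge problems, whereas yours implicitly chooses a different (but equally legitimate) vertex encoding of matching; it would be worth one sentence making that encoding explicit so the reader sees it fits the $L:V(G)\to\Sigma$ format of Definition~\ref{def:replicable}. Your closing remark about vertex cover is fine in spirit, though ``verbatim'' is a slight overstatement since the inequality directions reverse for a minimization problem.
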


\begin{proof}
To fit maximal matching into our vertex-labeling definition for problems, we characterize it as maximal independent set on the (normal) family of \emph{line graphs}, as discussed above. An $\Omega(1)$-approximation of maximum matching on an input graph corresponds to an $\Omega(1)$-approximate maximal independent set on its line graph. Let $G$ be such a line graph, on $n$ nodes, and let $c > 0$ be some fixed constant. We consider a graph $\Gamma_G$ consisting of $k \ge n^2$ copies of $G$ and fewer than $n$ isolated nodes. Again, $\Gamma_G$ has at least $kn$ nodes, and we note that (denoting $\Lambda(H)$ to be the size of the MIS of a graph $H$) $\Lambda(\Gamma_G) \ge k\Lambda(G)$. Assume we have some output valid labeling $L'$ on $\Gamma_G$, which corresponds to an independent set of size at least $c \Lambda(\Gamma_G)$, and in which each copy of $G$ is labeled identically, as is each isolated node. More than $c \Lambda(\Gamma_G)-n$ of the nodes in the independent set must be in copies of $G$ (since there are fewer than $n$ nodes not in copies of $G$). Since each of the $k$ copies of $G$ is labeled identically, each must contain more than
\begin{align*}
    \frac{c \Lambda(\Gamma_G)-n}{k} &\ge
	\frac{ck\Lambda(G)}{k} - \frac{1}{n} \ge
	\frac{c\Lambda(G)}{2}
\end{align*}
nodes in the independent set (for $n\ge 2/c$), and therefore the output on $G$ is a valid $\Omega(1)$-approximate MIS, corresponding to an $\Omega(1)$-approximate maximal matching in the input graph.
\end{proof}


\subsection{Algorithm definitions, and revised definition of component-stability}
\label{sec:component-stable-algorithms}

Once we have defined \emph{problems}, as considered in our paper, we may define \LOCAL and \MPC algorithms that solve them, and in particular, give a formal, amended definition of component-stable algorithms used in this paper, taking into account the discussion above.


\subsubsection{\LOCAL algorithms}
\label{subsubsec:LOCAL-algs}

Our formal definition of algorithms in the \LOCAL model used in this paper is as follows:

\paragraph{\textbf{Input.}}
\LOCAL algorithms receive as input an $n$-node graph $G$, with component-unique IDs for each node. Randomized algorithms also provide each node with access to a \emph{shared, unbounded, random seed $\mathcal{S}$}. Algorithms are provided with the exact value of the maximum degree $\Delta$, and an \emph{input size estimate} $N$ of $n$ such that $n\le N \le poly(n)$.\footnote{The reason we assume that only a polynomial estimate $N$ of $n$ is known here is that some \LOCAL lower bounds to which we wish to apply our lifting result only hold without exact knowledge of $n$ (e.g., that of \cite{KKSS19}). Most \LOCAL lower bounds, however, do hold under exact knowledge, and in these cases we can simply set $N=n$.}

The nodes of the input graph are the computational entities, and each initially has knowledge of its adjacent edges in $G$ (i.e., the IDs of their other endpoints). The computation proceeds in synchronous rounds, and in each round, a node may send an arbitrary message along each of its adjacent edges. At the termination of the algorithm, each node must give an output label from $\Sigma$.

\paragraph{\textbf{Output.}}
Correct deterministic algorithms must always provide a valid overall output labeling for the problem, on every output; randomized algorithms must give a valid labeling with probability at least $1-\frac 1N$, over the distribution of the random seed $\mathcal{S}$, for any input.

\paragraph{Shared randomness.}
Given that \MPC algorithms naturally allow shared randomness, it is important for our study of randomized \LOCAL algorithms to allow the nodes to have access to shared randomness too. The use of \emph{shared randomness} is non-standard in the \LOCAL model, where one typically assumes only private randomness. However, as shown by Ghaffari et al.\ \cite[Section~V]{GKU19}, many of the existing \LOCAL lower bounds can be extended (and without any asymptotic loss in their \LOCAL round complexity) also if the nodes have access to shared randomness. (Notice that the notion of shared randomness is only relevant to randomized algorithms, and hence, for deterministic complexity one can use the existing deterministic \LOCAL lower bounds without any constraints, as black box results.)


\subsubsection{\MPC algorithms}
\label{subsubsec:MPC-algs}

We use the standard definition of \MPC algorithms (see, e.g., \cite{ASSWZ18,BHH19,CLMMOS18,GGKMR18,GU19,KSV10}) amended to fit the framework of low-space \MPC{}s used in the paper.

\paragraph{\textbf{Input.}}
\MPC algorithms receive as input a legal $n$-node graph $G$, distributed arbitrarily over $poly(n)$ machines, each with local space $O(n^{\sparam})$ for some $\sparam<1$. Randomized algorithms also provide each node with access to a shared, random seed $\mathcal{S}$ of $poly(n)$ bits (again distributed arbitrarily among machines). We do not assume that maximum $\Delta$ or $n$ are given explicitly as input, but \MPC algorithms can determine them easily in $O(1)$ rounds, so we may assume knowledge thereof.

Computation proceeds in synchronous rounds, and in each round, a machine first perform an arbitrary local computations on its local data and then may send and receive a total of $O(n^{\sparam})$ information, divided between any other machines as desired. At the termination of the algorithm, each machine must give an output label from $\Sigma$ for each node it received in the initial distribution.

\paragraph{\textbf{Output.}}
Correct deterministic algorithms must always provide a valid overall output labeling for the problem, on every output; randomized algorithms must give a valid labeling with probability at least $1 - \frac{1}{n}$, over the distribution of the random seed $\mathcal{S}$, for any input.

\paragraph{Computation in \MPC algorithms.}
While we are mainly using the most standard setup of \MPC algorithms, closely following, e.g., \cite{ASSWZ18,BHH19,CLMMOS18,GGKMR18,GU19,KSV10}, occasionally we will use some features which (while often standard) are less commonly used.

The standard \MPC model assumes that in each synchronous rounds, each machine performs arbitrary local computations on its data (fitting its local memory of size $S = O(n^{\sparam})$) and then the machines simultaneously exchange messages, in a way that each machine is sender and receiver of up to $O(S)$ messages. While some papers also consider the sequential running time of any single \MPC machine in every round, the main focus of our study is primarily on the information theoretic aspects of understanding the round complexity in \MPC algorithms. (Notice that unbounded local computation assumption is standard in the classical distributed models as \LOCAL, \congest, and \congc.) As the result, while many of our algorithms perform only $\poly(n)$-time computations, occasionally we will allow \MPC machines to perform \emph{heavy local computations}, up to $2^{O(S)}$ local computations in a round; still, the space used on a single machine remains $S = O(n^{\sparam})$. Our results show that allowing such heavy computations might provide advantageous in the context of deterministic algorithms and derandomization, however they are not necessary to find examples of component-unstable deterministic algorithms which surpass component-stable conditional lower bounds. 

Furthermore, while typically one is concerned with the design of uniform \MPC algorithms, as it has been observed by Fish et al.\ \cite{FKLRT15}, the original setup of \MPC (e.g., \cite{KSV10}) leads naturally to the non-uniform model of computation. Most of the \MPC algorithms presented in our paper are uniform, but occasionally we use \emph{non-uniform algorithms}. In our setting, this means that the \MPC algorithm, on each single machine initially knows the number of nodes $n$ (or its estimation), and possibly different algorithms are used for different values of $n$. This can be also seen as having some non-uniform advice hardwired in the algorithms on individual \MPC machines (or as Boolean circuits; for more details, see, e.g., Section~7.1.1 in \cite{Vadhan12}).

Finally, some of the non-uniform \MPC algorithms we use are also \emph{non-explicit}. That is, we will be showing that there is a low-space \MPC algorithm for a specific task, but we will not be able to provide a procedure to explicitly design it (there is generally a brute-force procedure obvious from the proof, but one that requires at least exponential computation, and possibly also too much space to perform in low-space \MPC). In this paper, non-uniform and non-explicit \MPC algorithms will be occasionally used in the context of derandomization.


\subsubsection{Component-stable \MPC algorithms}
\label{subsubsec:component-stable-algorithms}

Now, after our discussion in \Cref{subsec:problems-with-def-comp-stab,subsec:graph-families,subsec:problem-def}, we are ready to provide a new definition of component-stable \MPC algorithms used in this paper.

\begin{definition}[\textbf{Component-stable \MPC algorithms}]
\label{def:component-stable-algorithms}
A randomized \MPC algorithm $A_{MPC}$ is \textbf{component-stable} if its output at any node $v$ is entirely, deterministically, dependent on the \emph{topology and IDs (but independent of names)} of $v$'s connected component (which we will denote $CC(v)$), $v$ itself, the exact number of nodes $n$ and maximum degree $\Delta$ in the \emph{entire} input graph, and the input random seed $\mathcal{S}$. That is, the output of $A_{MPC}$ at $v$ can be expressed as a deterministic function $A_{MPC}(CC(v),v,n,\Delta,\mathcal{S})$.

A deterministic \MPC algorithm $A_{MPC}$ is component-stable under the same definition, but omitting dependency on the random seed $\mathcal{S}$.
\end{definition}

Finally, let us state the main technical result demonstrating the power of our revised framework of component-stable \MPC algorithms, lifting unconditional lower bounds from the \LOCAL model to conditional lower bounds for low-space component-stable \MPC algorithms. The following theorem extends the main result in the component-stable algorithms framework due to Ghaffari et al. \cite[Theorem~I.4]{GKU19} to our framework and enhances it to include lower bounds against \emph{deterministic algorithms}, and lower bounds with \emph{dependency on maximum input degree $\Delta$}. Informally, similarly to \cite[Theorem~I.4]{GKU19}, \Cref{thm:lb} below states that, conditioned on the connectivity conjecture, for $O(1)$-replicable graph problems, any $T(N,\Delta)$-round lower bound in the \LOCAL model yields a $\Omega(\log T(N,\Delta))$-round lower bound for any low-space component-stable \MPC algorithm $\cA_{\MPC}$. Furthermore, the claim holds for both randomized and deterministic algorithms (deterministic algorithms were not studied in \cite{GKU19}).

The proof of \Cref{thm:lb} is given in \Cref{sec:lb-LOCAL->MPC} (the notion of constrained functions is defined in Definition \ref{def:constrained-functions}).

\begin{theorem}[\textbf{Lifting \LOCAL lower bounds to component-stable \MPC algorithms}]
\label{thm:lb}
Let $\mathcal{P}$ be a $O(1)$-replicable graph problem that has a $T(N,\Delta)$-round lower bound in the randomized \LOCAL model with shared randomness, for constrained function $T$, on graphs with input estimate $N$ and maximum degree $\Delta$, from some normal family $\mathcal{G}$. Suppose that there is a randomized $o(\log T(n,\Delta))$-round low-space component-stable \MPC algorithm $\cA_{\MPC}$ for solving $\mathcal{P}$ on legal $n$-node graphs with maximum degree $\Delta$ from $\mathcal{G}$, succeeding with probability at least $1-\frac 1n$. Then, there exists a low-space randomized \MPC algorithm $\cA^*$ that can distinguish one $n$-node cycle from two $\frac{n}{2}$-node cycles in $o(\log n)$ rounds, succeeding with probability at least $1-\frac 1n$.
	
The same holds if the \LOCAL lower bound and algorithm $\cA_{\MPC}$ are both deterministic (but the obtained algorithm $\cA^*$ remains randomized).
\end{theorem}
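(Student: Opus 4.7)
The plan is to follow the two-stage lifting template in the spirit of Ghaffari, Kuhn and Uitto~\cite{GKU19}, but adapted to the revised component-stability definition (which permits dependency on $n,\Delta$ while forbidding dependency on names and on the initial distribution) and generalized so that the argument passes through both randomized and deterministic algorithms and through $\Delta$-dependent lower bounds. Throughout I would use the ID/name conventions of Definition~\ref{def:legal-graphs} and the normality of $\mathcal G$ (Definition~\ref{def:normal-graphs}) to keep the constructed inputs legal for $\cA_{\MPC}$.

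The first stage is a \emph{sensitivity} (or \emph{farsightedness}) lemma derived from the $T(N,\Delta)$-round \LOCAL lower bound. By a standard indistinguishability argument, I would extract a pair of graphs $G,G'\in\mathcal G$ on the same number of nodes, with the same maximum degree, and with identical $\lfloor(T-1)/2\rfloor$-radius neighborhoods at a distinguished pair of corresponding nodes $u,u'$, but such that every valid labeling of $\mathcal P$ must disagree at $u$ and $u'$. Otherwise, one could stitch locally-consistent labelings into a $\lfloor(T-1)/2\rfloor$-round \LOCAL algorithm, contradicting $T$. The constrained-function hypothesis on $T$ is what allows us to fix the values of $n,\Delta$ needed by the Stage~2 construction while still retaining $o(\log T(n,\Delta))=o(\log n)$.

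The second stage reduces the one-cycle-vs-two-cycles problem to $\cA_{\MPC}$. Given such an input $H_0$ with designated endpoints $s,t$, I would build an \MPC simulation graph $H$ by taking induced subgraphs of $G$ and $G'$ obtained by deleting a single edge and ``splicing'' the resulting dangling endpoints through a fresh copy of $H_0$ attached at $s$ and $t$: when $s$ and $t$ lie in the same cycle of $H_0$, the spliced component collapses to a genuine copy of $G$ (respectively $G'$), otherwise it is strictly smaller. I would replicate this gadget at least $n^R$ times for the constant $R$ from $R$-replicability of $\mathcal P$, assign component-unique IDs and fresh fully-unique names as required by Definition~\ref{def:legal-graphs}, and pad with fewer than $|V(G)|$ isolated nodes so that $|V(H)|$ equals the value of $n$ that will be reported to $\cA_{\MPC}$. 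Because $\mathcal G$ is normal, $H\in\mathcal G$, so $H$ is a legal input. By Definition~\ref{def:component-stable-algorithms}, $\cA_{\MPC}$'s output at any node depends only on $CC(v)$, on $v$'s ID, on $n,\Delta$, and on $\mathcal S$, so in the \YES case its outputs on every intact $G$-component of $H$ coincide in distribution with its outputs on stand-alone $G$ (and similarly for $G'$). Combined with $R$-replicability (Definition~\ref{def:replicable}), validity of $\cA_{\MPC}$'s output on $H$ forces a valid labeling of $\mathcal P$ on every intact $G$-copy and every intact $G'$-copy, which by Stage~1 must disagree at $u,u'$; $\cA^*$ detects this disagreement in $O(1)$ extra rounds. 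For randomized $\cA_{\MPC}$, a union bound over the $\ge n^R$ copies combined with its $1-\tfrac 1n$ success guarantee ensures at least one pair of simultaneously-valid $G$- and $G'$-copies; for deterministic $\cA_{\MPC}$ validity is certain and one pair suffices. The same reduction therefore yields an $o(\log n)$-round distinguisher in both regimes, while $\cA^*$ itself is allowed to remain randomized as in the statement.

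The main obstacle is ensuring that $\cA_{\MPC}$'s output on an embedded copy of $G$ inside $H$ really matches its output on stand-alone $G$, and this is precisely the issue the revised framework is designed to resolve. Because our component-stable algorithms are allowed to depend on $n$ (necessary to incorporate any non-trivial randomized algorithm, as argued in \Cref{subsec:problems-with-def-comp-stab}), the stand-alone and simulated executions would a priori use different global node counts, which is the subtle gap in the analysis of \cite{GKU19}. The isolated-node padding in Stage~2 aligns $|V(H)|$ with the stand-alone value of $n$, and the $\ge n^R$-fold replication places us exactly in the hypothesis of Definition~\ref{def:replicable}, so validity on $H$ forces validity on each embedded $G$- and $G'$-copy. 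A secondary subtlety is the ID/name split: inside $H$ the IDs can only be assigned component-uniquely (legal by Definition~\ref{def:legal-graphs}), and Definition~\ref{def:component-stable-algorithms}'s explicit prohibition on name- and initial-distribution-dependence is what lets us reuse the IDs of $G$ and $G'$ across copies and freshly rename without perturbing $\cA_{\MPC}$'s behavior. Together these two revisions are what make the GKU-style reduction carry randomized, deterministic, and $\Delta$-dependent lower bounds uniformly.
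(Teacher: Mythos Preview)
There are two genuine gaps in the proposal that prevent it from going through.

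\textbf{Stage~1 extracts the wrong object.} A \LOCAL lower bound of $T$ rounds does \emph{not} yield two $T$-hop-identical centered graphs $G,G'$ on which ``every valid labeling of $\mathcal P$ must disagree at $u,u'$''. For essentially all problems of interest (MIS, coloring, matching, sinkless orientation) no such pair exists: many valid labelings agree at the centers. What the argument actually needs---and what the paper proves in its sensitivity lemma (Lemma~\ref{lem:sens})---is that the \emph{assumed \MPC algorithm} $\cA_{\MPC}$ must give different outputs at the centers of some $T$-hop-identical pair $G,G'$ with non-trivial probability. The proof is by contraposition: if $\cA_{\MPC}$'s output at every center is (almost) determined by the $T$-ball, then collecting the $T$-ball and simulating $\cA_{\MPC}$ is a $T$-round \LOCAL algorithm, contradicting the lower bound. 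Your Stage~2 then needs to compare $\cA_{\MPC}$'s \emph{actual outputs} on the two simulation graphs, not check validity and invoke a disagreement of valid labelings; the latter route simply isn't available.

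\textbf{Stage~2's splicing construction does not work for general $G,G'$.} The hard instances $G,G'$ coming out of Stage~1 are arbitrary graphs from $\mathcal G$ (trees, high-girth graphs, etc.), not paths or cycles. Deleting a single edge of $G$ and ``splicing the dangling endpoints through $H_0$'' does not in general recover $G$ as a connected component when $s,t$ are connected: you have inserted a long cycle segment into $G$, changing its topology. Moreover, the one-cycle-vs-two-cycles problem has no designated $s,t$, so the case split you describe is not well-defined. The paper's route is different and substantially more delicate: it first reduces to the intermediate \emph{$D$-diameter $s$--$t$ connectivity} problem (Lemma~\ref{lem:stconn}), building simulation graphs $G_H,G'_H$ by having each node $u$ of the connectivity instance $H$ pick a random layer index $h(u)\in[D]$ and simulate exactly the nodes of $G$ (resp.\ $G'$) at BFS distance $h(u)$ from the center. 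With the right pattern of $h$-values along the $s$--$t$ path, the center's component in $G_H$ is isomorphic to $G$ and in $G'_H$ to $G'$; otherwise the two components coincide. Component-stability (with matched $n,\Delta$) then transfers $\cA_{\MPC}$'s sensitivity on $G,G'$ to differing outputs on $G_H,G'_H$. A final, separate reduction (Lemma~IV.1 of \cite{GKU19}) goes from $D$-diameter $s$--$t$ connectivity to one-vs-two cycles. Replicability is used inside the sensitivity lemma (to reconcile the $n$ seen by $\cA_{\MPC}$ with the \LOCAL input-size estimate), not in the Stage~2 construction as you propose.
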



\subsection{Landscape of \MPC complexity classes and component-stability}
\label{subsec:complexity-classes}

In this section we define \MPC complexity classes considered in this paper. We study the \MPC model introduced in \Cref{sec:intro} and described in details \Cref{sec:component-stable-algorithms}, focusing on low-space \MPC{}s.

Let us begin with reminding the reader a fundamental obstacle to fully understand the computational complexity of problems in the low-space \MPC setting: a seminal work of Roughgarden et al.\ \cite{RVW18} showed that obtaining any unconditional lower bound in the low-space \MPC setting ultimately leads to breakthrough results in circuit complexity. In view of that, we will take a more modest approach and will rely on \emph{conditional} hardness results based on the widely believed \emph{connectivity conjecture}. The core hardness result considered in this paper is a revised framework lifting unconditional lower bounds from the \LOCAL model to conditional lower bounds for low-space component-stable \MPC algorithms (see \Cref{sec:prelim} and \Cref{sec:lb-LOCAL->MPC}, which amend the framework developed earlier in \cite{GKU19}). In particular, this framework can be used to obtain a number of deterministic and randomized lower bounds for low-space component-stable \MPC algorithms (these bounds are conditioned on the connectivity conjecture). Therefore, providing low-space non-component-stable \MPC algorithms that beat these bounds will demonstrate the \emph{conditional complexity gap} between low-space component-stable and non-component-stable \MPC algorithms --- which is the ultimate goal of this section.

\junk{
We will consider only graph problems. Let $\mathbb{G}$ be the collection of all input graph instances. We say that an \MPC algorithm $\cA$ has round complexity $T(n)$ with local space $S(n)$ on $M(n)$ machines, if for any $n \in \mathbb{N}$, for any input instance on $n$ nodes, it performs at most $T(n)$ rounds on \MPC with local space $S(n)$ and $M(n)$ machines. A deterministic algorithm $\cA$ solves a problem $\mathcal{P}$ if for any input instance $G \in \mathbb{G}$ it returns a correct solution for $\mathcal{P}$ (e.g., for decision problem, it is a correct \YES or \NO\Artur{In \Cref{subsec:problem-def}, we're assuming each nodes returns a label; hence one should rephrase the sentence.}). A randomized algorithm $\cA$ solves a problem $\mathcal{P}$ with probability $1-p(n)$, $0 \le p(n) \le 1$, if for any input instance $G \in \mathbb{G}$ with $n$ nodes, it returns a correct solution for $\mathcal{P}$ with probability at least $1-p(n)$, which by our convention in \Cref{subsubsec:MPC-algs} has $p(n) = \frac1n$. (For the sake of simplicity, we focus on Monte Carlo randomized algorithms, which have fixed time complexity but may have some probability $1-\frac1n$ to err and produce an inadmissible output.)
}

We focus on upper bound round complexity, local space, global space, and success probability which depend only on the number $n$ of the graph's nodes in the input instance from $\mathbb{G}$. Since our main focus is to study the low-space \MPC regime, we consider the following definitions for \MPC algorithms and \MPC component-stable algorithms (see \Cref{sec:component-stable-algorithms}).

\begin{definition}[\textbf{\MPCDetNonStable}]
Denote by $\MPCDetNonStable(T(n))$ the class of all graph problems for which there is a (possibly non-uniform and non-explicit) deterministic low-space \MPC algorithm $\cA$, such that for some positive constant $c$, algorithm $\cA$ has round complexity at most $c \cdot T(n)$.
\end{definition}

\begin{definition}[\textbf{\MPCDetStable}]
Denote by $\MPCDetStable(T(n))$ the subclass of $\MPCDetNonStable(T(n))$ restricted to component-stable algorithms. That is, $\MPCDetStable(T(n))$ is the class of all graph problems for which there is a (possibly non-uniform and non-explicit) deterministic low-space component-stable \MPC algorithm $\cA$, such that for some positive constant $c$, algorithm $\cA$ has round complexity at most $c \cdot T(n)$.
\end{definition}

\begin{definition}[\textbf{\MPCRandNonStable}]
Denote by $\MPCRandNonStable(T(n))$ the class of all graph problems $\mathcal{P}$ for which there is a randomized low-space \MPC algorithm $\cA$, such that for some positive constant $c$, algorithm $\cA$ solves $\mathcal{P}$ with probability $1 - \frac1n$ and has round complexity at most $c \cdot T(n)$.
%
\end{definition}

\begin{definition}[\textbf{\MPCRandStable}]
$\MPCRandStable(T(n))$ is the subclass of $\MPCRandNonStable(T(n))$ restricted to component-stable algorithms. That is, $\MPCRandStable(T(n))$ is the class of all graph problems $\mathcal{P}$ for which there is a randomized low-space component-stable \MPC algorithm $\cA$, such that for some positive constant $c$, algorithm $\cA$ solves $\mathcal{P}$ with probability $1 - \frac1n$ and has round complexity at most $c \cdot T(n)$.
%
\end{definition}

Clearly, observe that for any function $T(n)$, we have both $\MPCDetStable(T(n)) \subseteq \MPCDetNonStable(T(n))$ and $\MPCRandStable(T(n)) \subseteq \MPCRandNonStable(T(n))$. However, as mentioned in Introduction, it has been informally argued (for example, by Ghaffari et al.\ \cite{GKU19}) that most, if not all, \MPC algorithms are or can be easily made component-stable. Thus one could imagine that the pairs of sets $\MPCDetStable(T(n))$ and $\MPCDetNonStable(T(n))$, and $\MPCRandStable(T(n))$ and $\MPCRandNonStable(T(n))$ are in fact identical, especially in the regime of small $T(n)$. However, we will demonstrate that (assuming the connectivity conjecture) this is not the case, and that some low-space general \MPC algorithms can do better than their component-stable counterparts.

We begin with the study of \emph{deterministic} complexity. Our first lemma shows that, informally, $\MPCDetStable \ne \MPCDetNonStable$ (when omitting functions $T(n)$ during complexity class comparisons, we mean that there \emph{exists} some $T(n)$ for which the comparison holds, when both classes are parameterized by $T(n)$). In \Cref{subsec:sinkless-orientation} we consider a sinkless orientation problem for which, on one hand, (assuming the connectivity conjecture) there is no deterministic low-space component-stable \MPC algorithm running in $o(\log\log_\Delta n)$ rounds (\Cref{thm:sinkless-cs}) and, on the other hand, which can be solved by a low-space deterministic \MPC algorithm running in $poly(\Delta)+O(\log\log \log n)$ rounds (\Cref{thm:sinkless-ub}), surpassing the component-stable lower bound for $\Delta = \log^{o(1)}\log n$. In fact, these results hold even for forests. Similar results are also shown for some variants of edge-coloring (\Cref{thm:ecoloring-1,thm:ecoloring-2}) and vertex-coloring (\Cref{thm:vcoloring-1,thm:vcoloring-2}). Further, while the main deterministic upper bounds here use heavy local computation, for bounded degree graphs their local computation is $\poly(n)$, demonstrating that \emph{component-instability helps for deterministic algorithms even using polynomial computation}. (In \Cref{subsec:examps-complexity-separation}, we demonstrate a similar result for the class of \LOCAL \emph{extendable} algorithms, showing that instability also helps in deterministic algorithms for two cornerstone problems in low-space MPC: maximal independent set and maximal matching.)
This gives the following.

\begin{theorem}
\label{thm:complexity-SDet-Det}
There is some function $T(n)$ such that, conditioned on the connectivity conjecture,
\begin{align*}
    \MPCDetStable(T(n)) \subsetneqq \MPCDetNonStable(T(n))
	\enspace.
\end{align*}
\end{theorem}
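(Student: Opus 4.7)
The plan is to exhibit a single graph problem $\mathcal{P}$ together with a function $T(n)$ such that (i) $\mathcal{P} \in \MPCDetNonStable(T(n))$ unconditionally, but (ii) $\mathcal{P} \notin \MPCDetStable(T(n))$ conditioned on the connectivity conjecture. The natural candidate, flagged in the paragraph preceding the theorem, is \emph{sinkless orientation} on bounded-degree forests, parameterized so that the component-stable conditional lower bound outruns the component-unstable upper bound.

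First, I would fix $\Delta$ to lie in a regime where the two bounds separate. Concretely, restrict attention to inputs whose maximum degree is $\Delta(n) = \log^{o(1)} \log n$ (any slowly growing function, or even a sufficiently large constant, works; the point is that $\Delta$ must be super-constant to avoid a constant-round \LOCAL algorithm for sinkless orientation, yet grow slowly enough that $\operatorname{poly}(\Delta)$ stays below $\log\log_\Delta n$). For this range, \Cref{thm:sinkless-cs} yields, conditioned on the connectivity conjecture, that no deterministic component-stable low-space \MPC algorithm solves sinkless orientation in $o(\log\log_\Delta n) = \omega(1) \cdot \log\log n$ rounds on forests, since $\log\log_\Delta n = \log\log n - \log\log\Delta = (1-o(1))\log\log n$. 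On the other hand, \Cref{thm:sinkless-ub} gives an unconditional deterministic (component-unstable) low-space \MPC algorithm running in $\operatorname{poly}(\Delta) + O(\log\log\log n) = o(\log\log n)$ rounds in this regime.

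Next, I would pick $T(n)$ strictly between these two quantities, e.g.\ $T(n) = \operatorname{poly}(\Delta(n)) + \log\log\log n \cdot h(n)$ for a slowly growing $h(n) = o(\log\log n / \log\log\log n)$. By \Cref{thm:sinkless-ub}, sinkless orientation on bounded-degree forests lies in $\MPCDetNonStable(T(n))$. Conversely, any component-stable deterministic algorithm of round complexity $O(T(n))$ would still be $o(\log\log_\Delta n)$, contradicting \Cref{thm:sinkless-cs}; thus sinkless orientation is not in $\MPCDetStable(T(n))$. This establishes the strict inclusion.

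The routine obstacle is bookkeeping: sinkless orientation is defined by edge-orientations, so to fit the vertex-labeling convention of \Cref{subsec:problem-def} one passes to the line graph, and to fit the \emph{normal} family assumption of \Cref{thm:lb} (invoked inside \Cref{thm:sinkless-cs}) one restricts to forests rather than trees. Both adjustments are explicitly sanctioned in \Cref{subsec:graph-families,subsec:problem-def} and cost only constant factors in $n$ and $\Delta$, so they do not affect the asymptotic gap. The genuine content sits entirely in \Cref{thm:sinkless-cs,thm:sinkless-ub}, invoked as black boxes; the present theorem is then a clean corollary, and (as remarked in the paper) analogous separations using the edge-coloring or vertex-coloring variants of \Cref{thm:ecoloring-1,thm:vcoloring-1} would yield the same conclusion.
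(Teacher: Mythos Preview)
Your proposal is correct and matches the paper's approach: both exhibit sinkless orientation on forests with $\Delta = \log^{o(1)}\log n$ as the separating problem, invoking \Cref{thm:sinkless-cs,thm:sinkless-ub} as black boxes and picking $T(n)$ in the resulting gap between $\operatorname{poly}(\Delta)+O(\log\log\log n)$ and $\Omega(\log\log_\Delta n)$. One minor slip worth noting: your parenthetical that $\Delta$ must be super-constant ``to avoid a constant-round \LOCAL algorithm for sinkless orientation'' is backwards---for constant $\Delta\ge 3$ the deterministic \LOCAL complexity is $\Theta(\log n)$, not $O(1)$---so constant $\Delta\ge 3$ would in fact also yield the separation; this does not affect the validity of your argument.
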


Next, we move to the study of \emph{randomized} algorithms and show that, informally, $\MPCRandStable \ne \MPCRandNonStable$. This result follows directly from our \Cref{thm:approx-IS}, which shows that some variant of the independent set problem has a deterministic low-space non-stable constant-rounds \MPC algorithm and conditioned on the connectivity conjecture, there is no $o(\log\log^* n)$-rounds low-space component-stable \MPC algorithm that succeeds with probability at least~$1 - \frac1n$.

\begin{theorem}
\label{thm:complexity-SRan-Ran}
There is some function $T(n)$ such that, conditioned on the connectivity conjecture,
\begin{align*}
    \MPCRandStable(T(n)) \subsetneqq \MPCRandNonStable(T(n))
	\enspace.
\end{align*}
\end{theorem}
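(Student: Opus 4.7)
The plan is to obtain the separation as an immediate consequence of \Cref{thm:approx-IS}, using the approximate independent set problem as the witness. Concretely, I would let $\mathcal{P}$ be the problem of computing an independent set of size $\Omega(n/\Delta)$ on $n$-node graphs with $\Delta\in[1,n)$, and choose $T(n)$ to be a constant large enough to absorb the hidden constant in the $O(1)$-round upper bound supplied by \Cref{thm:approx-IS}.

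First I would invoke the upper bound half of \Cref{thm:approx-IS}, which produces a randomized component-unstable low-space \MPC algorithm for $\mathcal{P}$ running in $O(1)$ rounds and succeeding with probability at least $1-\tfrac{1}{n}$; hence $\mathcal{P}\in\MPCRandNonStable(T(n))$ for any sufficiently large constant $T(n)$. Next I would invoke the lower bound half: conditioned on the connectivity conjecture, every component-stable low-space \MPC algorithm for $\mathcal{P}$ with success probability at least $1-\tfrac{1}{n}$ requires $\Omega(\log\log^* n)$ rounds. Since $\log\log^* n \to \infty$ with $n$ (albeit extremely slowly), any constant round budget $c\cdot T(n)$ is eventually strictly smaller than this bound, so $\mathcal{P}\notin\MPCRandStable(T(n))$. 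Combined with the trivial inclusion $\MPCRandStable(T(n))\subseteq\MPCRandNonStable(T(n))$, this witnesses the strict inclusion asserted in the theorem.

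The one point worth verifying explicitly is that $\mathcal{P}$ really is covered by the framework that yields \Cref{thm:approx-IS} from \Cref{thm:lb}: this requires $\mathcal{P}$ to be $O(1)$-replicable and requires the underlying \LOCAL lower bound of \cite{KKSS19} to fit the (weaker) notion of lower bound permitted by our revised framework. Both are handled elsewhere in the paper — replicability by \Cref{lem:IS-replicable}, and the issue that \cite{KKSS19} only assumes a polynomial estimate of $n$ by the footnote accompanying \Cref{thm:approx-IS}. Consequently there is no substantive obstacle: essentially the whole content of \Cref{thm:complexity-SRan-Ran} has already been packaged into \Cref{thm:approx-IS}, and the remaining work is purely a translation into complexity-class notation.
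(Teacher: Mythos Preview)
Your proposal is correct and mirrors the paper's own justification: the paper states that the theorem follows directly from \Cref{thm:approx-IS}, using the $\Omega(n/\Delta)$-independent-set problem as the separating witness, with the $O(1)$-round component-unstable upper bound placing it in $\MPCRandNonStable(T(n))$ and the conditional $\Omega(\log\log^* n)$ lower bound excluding it from $\MPCRandStable(T(n))$. Your additional remarks about replicability (\Cref{lem:IS-replicable}) and the input-size-estimate issue are accurate elaborations of points the paper also records.
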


Next, we provide a conditional separation between randomized and deterministic algorithms (\Cref{thm:instability-provably-helps}). This separation follows by combining (i) the conditional lifting for randomized component-stable algorithms and deterministic component-stable algorithm using the framework in \Cref{thm:lb} with (ii) local problems for which there is provable gap in their randomized and deterministic \LOCAL complexity (e.g., \cite{BBHORS19,CQW+20,CKP19}). This yields the following.

\begin{theorem}
\label{thm:complexity-SDet-SRan}
There is some function $T(n)$ such that, conditioned on the connectivity conjecture,
\begin{align*}
    \MPCDetStable(T(n)) \subsetneqq \MPCRandStable(T(n))
	\enspace.
\end{align*}
\end{theorem}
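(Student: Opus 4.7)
The plan is to exhibit a graph problem $\mathcal{P}$ on a normal graph family, together with a function $T(n)$, such that $\mathcal{P} \in \MPCRandStable(T(n))$ but, conditioned on the connectivity conjecture, $\mathcal{P} \notin \MPCDetStable(T(n))$. The strategy is to leverage a known exponential gap between the randomized and deterministic \LOCAL complexities of a locally-checkable problem: graph exponentiation yields a fast randomized component-stable \MPC algorithm, while the deterministic case of \Cref{thm:lb} rules out a comparably fast deterministic component-stable algorithm.

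Concretely, take $\mathcal{P}$ to be the $\Delta$-coloring problem with $\Delta$ a fixed constant (say $\Delta = 3$), restricted to the normal family of forests of maximum degree $\Delta$. Observe that $\mathcal{P}$ is $1$-radius-checkable, hence $0$-replicable by \Cref{lemma:LCL-replicable}. By the results of Chang, Kopelowitz and Pettie \cite{CKP19}, $\mathcal{P}$ has randomized \LOCAL complexity $T_R(n) = \Theta(\log \log n)$ and a deterministic \LOCAL lower bound $T_D(n) = \Omega(\log n)$; both bounds carry over to the shared-randomness \LOCAL model without asymptotic loss via the arguments of \cite[Section~V]{GKU19}.

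For the upper bound, I would implement $\mathcal{P}$ in randomized \MPC via standard graph exponentiation: in $O(\log T_R(n)) = O(\log \log \log n)$ rounds, each node collects its $T_R(n)$-radius ball, which contains at most $\Delta^{T_R(n)} = \mathrm{polylog}(n)$ nodes and so fits in the $O(n^{\sparam})$ local space. Each node then locally simulates the \LOCAL algorithm on this ball using the shared random seed and outputs the result. By \Cref{def:component-stable-algorithms}, the resulting \MPC algorithm is component-stable, because each node's output depends only on its connected-component topology and IDs, the shared randomness, and the parameters $n$ and $\Delta$; the $1-1/n$ success probability is inherited from the \LOCAL guarantee. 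Setting $T(n) := \log \log \log n$, this yields $\mathcal{P} \in \MPCRandStable(T(n))$.

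For the lower bound, I would apply the deterministic case of \Cref{thm:lb} to $T_D(n) = \Omega(\log n)$. Since $\mathcal{P}$ is $O(1)$-replicable, forests of bounded degree form a normal family, and $T_D$ fits the constrained-function form required by the theorem, we obtain that, conditioned on the connectivity conjecture, every deterministic component-stable low-space \MPC algorithm for $\mathcal{P}$ requires $\Omega(\log T_D(n)) = \Omega(\log \log n)$ rounds. Since $\log \log n = \omega(\log \log \log n) = \omega(T(n))$, we conclude $\mathcal{P} \notin \MPCDetStable(T(n))$, establishing the strict inclusion. The main technical obstacle will be verifying that the specific \LOCAL bounds cited from \cite{CKP19} really satisfy the hypotheses of \Cref{thm:lb} --- in particular, that the deterministic lower bound can be cast as a bound on a constrained function $T(N,\Delta)$ in terms of the input-size estimate $N$ (Definition~\ref{def:constrained-functions}), and that it holds across the entire normal family of forests rather than just on specific trees --- but these points should follow from routine padding and the shared-randomness transfer of \cite[Section~V]{GKU19}.
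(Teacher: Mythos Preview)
Your proposal is correct and follows essentially the same approach the paper sketches: combine the deterministic case of \Cref{thm:lb} with a problem exhibiting an exponential randomized/deterministic gap in \LOCAL (the paper cites \cite{CKP19}, \cite{BBHORS19}, and \cite{CQW+20} as sources of such problems), and obtain the randomized component-stable upper bound by graph exponentiation.

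One clarification on the point you flag as an obstacle: the function $T_D(N) = \log N$ is \emph{not} constrained in the sense of Definition~\ref{def:constrained-functions}, since $T_D(N^c) = c\,T_D(N)$ forces $\gamma \ge 1$. The fix, which the paper uses elsewhere (e.g., for sinkless orientation), is to apply \Cref{thm:lb} to the strictly weaker but constrained lower bound $T(N,\Delta) = \Theta(\log^{\gamma} N)$ for some fixed $\gamma < 1/2$; since the true deterministic \LOCAL bound is $\Omega(\log N)$, this weaker bound certainly holds, and lifting it still yields an $\Omega(\log\log n)$ component-stable \MPC lower bound, which is what you need.
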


Further, in \Cref{sec:derand}, we study the power of deterministic low-space component-unstable \MPC algorithms and their relationship to the randomized ones. In Lemma \ref{lem:det-alg-many-machines} we prove that if there is a randomized \MPC algorithm that solves a graph problem $\cP$ on $n$-node graphs with maximum degree $\Delta$ in $T(n,\Delta)$ rounds, then one can derandomized such algorithm to solve the same problem in $O(T(n,\Delta))$ rounds. The resulting deterministic \MPC algorithm is component-unstable, non-uniform, and non-explicit, and it has the same local space as that of the randomized \MPC algorithm, and uses an $O(n^2)$-factor more machines. This yields the following.

\begin{theorem}
\label{thm:complexity-Det-Ran}
$\MPCDetNonStable(T(n)) = \MPCRandNonStable(T(n))$.
\end{theorem}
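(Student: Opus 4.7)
The plan is to prove the two inclusions separately. The direction $\MPCDetNonStable(T(n)) \subseteq \MPCRandNonStable(T(n))$ is essentially definitional: any deterministic low-space \MPC algorithm can be viewed as a randomized one that simply ignores its shared random seed $\mathcal{S}$ and always produces a valid output, trivially satisfying the success guarantee $1 - \frac{1}{n}$. This needs no further argument beyond observing that the set of legal algorithms for the randomized class is a superset of the deterministic one.

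For the reverse direction $\MPCRandNonStable(T(n)) \subseteq \MPCDetNonStable(T(n))$, the plan is to invoke Lemma~\ref{lem:det-alg-many-machines} as a black box: starting from any randomized low-space \MPC algorithm $\cA$ solving a graph problem $\cP$ in $T(n)$ rounds with probability at least $1 - \frac{1}{n}$, the lemma delivers a deterministic low-space \MPC algorithm solving $\cP$ in $O(T(n))$ rounds, at the cost of non-uniformity, non-explicitness, and an $O(n^2)$ blow-up in the number of machines. Since $\MPCDetNonStable$ explicitly permits non-uniform, non-explicit algorithms, and the global machine count remains $\poly(n)$ (as required by the low-space model), the derandomized algorithm certifies $\cP \in \MPCDetNonStable(O(T(n))) = \MPCDetNonStable(T(n))$ (absorbing the constant factor into the definition via the constant $c$).

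For completeness, the proof plan mentally reconstructs the engine inside Lemma~\ref{lem:det-alg-many-machines} in the style of Adleman's derandomization. First, one uses the $\poly(n)$ available machines to run polynomially many independent parallel copies of $\cA$ with fresh seeds, boosting the per-input failure probability from $\frac{1}{n}$ down to $\frac{1}{2^{\Omega(n^2)}}$ without increasing round complexity. Second, one notes that the number of legal $n$-node input graphs (accounting for topology, IDs, and names) is bounded by $2^{O(n^2 \log n)}$, so a union bound over all inputs shows that a uniformly random seed fails on some input with probability strictly less than $1$. Hence, for each $n$, there exists a universal seed $\mathcal{S}^*_n$ on which the amplified algorithm succeeds for every $n$-node input; hard-coding this seed on each machine (depending non-uniformly on $n$) yields the required deterministic algorithm.

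The main obstacle I would anticipate is the amplification step: for general graph problems where outputs cannot be locally or globally verified, it is not immediately obvious how to combine the outputs of the parallel copies. I would handle this inside Lemma~\ref{lem:det-alg-many-machines} rather than in the theorem proof, exploiting the fact that we are only ever analyzing the \emph{existence} of a good seed (via probabilistic method) rather than algorithmically selecting one at runtime --- so the parallel copies need not be combined, only analyzed together under a single joint seed with the union bound absorbing the failure probability across all inputs. From the vantage point of \Cref{thm:complexity-Det-Ran} itself, however, this is all encapsulated in the lemma, and the theorem reduces to the two-line inclusion argument above.
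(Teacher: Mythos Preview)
Your core argument—the trivial inclusion plus a black-box invocation of Lemma~\ref{lem:det-alg-many-machines} for the reverse direction—matches the paper exactly; the paper's entire proof is the single sentence ``Now Lemma~\ref{lem:det-alg-many-machines} yields the proof of \Cref{thm:complexity-Det-Ran}.''

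Where your proposal goes astray is the final paragraph's handling of the verification obstacle. Lemma~\ref{lem:det-alg-many-machines} carries an explicit hypothesis you omitted when invoking it: the correctness of the output must be checkable in $T(n,\Delta)$ \MPC rounds. The paper does not circumvent this assumption; its proof genuinely runs $O(n^2)$ parallel copies, \emph{verifies} each, and returns the output of an arbitrary correct one, thereby boosting success to $1-2^{-\omega(n^2)}$ before applying the Adleman-style Lemma~\ref{lem:det-large-prob}. Your suggested workaround—that ``the parallel copies need not be combined, only analyzed together under a single joint seed''—does not yield a well-defined amplified algorithm: without a mechanism to select among the $\ell$ outputs, there is no single output whose failure probability has been driven down, so the union bound over the $2^{O(n^2)}$ inputs cannot be applied. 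The probabilistic method would at best establish that some joint seed makes \emph{at least one} copy succeed on every input, but hard-coding which copy to use per input would require exponential advice, not $\poly(n)$. So the verification hypothesis is genuinely needed for the lemma (and hence the theorem) as proved; the paper simply inherits it, admittedly without restating it in the theorem.
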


Let us summarize the results in this section, assuming the connectivity conjecture, and allowing non-uniform \MPC algorithms. Our study demonstrates that for low-space \MPC{}s, component-unstable algorithms are provably stronger than their component-stable counterparts, both for deterministic and randomized algorithms (\Cref{thm:complexity-SDet-Det,thm:complexity-SRan-Ran}). Further, we see that for component-stable algorithms, randomized algorithms are provably stronger than their deterministic counterparts (\Cref{thm:complexity-SDet-SRan}), however, for arbitrary (possibly component-unstable) algorithms this is not the case: any randomized algorithm can be efficiently simulated by a deterministic one (Lemma \ref{lem:det-alg-many-machines} and \Cref{thm:complexity-Det-Ran}).




\section{Conditional \MPC Lower Bounds from \LOCAL}
\label{sec:lb-LOCAL->MPC}


In this section, we present a framework lifting unconditional lower bounds from the \LOCAL model to conditional lower bounds for low-space component-stable \MPC algorithms, extending and revising the analysis of Ghaffari et al. \cite{GKU19} to prove \Cref{thm:lb}.

While on a high level, our analysis follows closely the approach from \cite{GKU19}, our arguments diverge in several \emph{subtle} but \emph{crucial} places. On a technical level, we rely on the central but also quite subtle notions of \emph{replicable graph problems}, \emph{normal graph families}, \emph{$(D, \eps, n, \Delta)$-sensitive \MPC algorithms}, and a revised definition of \emph{component stability} (see Definition \ref{def:component-stable-algorithms}). The two major reasons behind these changes are:
\begin{itemize}
\item to make the arguments robust against the issues we have identified concerning component-stability, and incorporate the definitional changes that these issues necessitated, and
\item to extend the arguments to include lower bounds against \emph{deterministic algorithms}, and lower bounds with \emph{dependency on maximum input degree $\Delta$}.
\end{itemize}

After introducing some useful notation in \Cref{subsec:basic-def}, we show in \Cref{subsec:sens} that in our setting, for $R$-replicable graph problems for normal graph families, lower bounds for \LOCAL algorithms imply the existence of some graphs which can be distinguished by component-stable \MPC algorithms only by relying on non-local information. Then, in \Cref{subsec:stconn}, we apply this non-locality to provide a conditional \MPC lower bound for component-stable \MPC algorithms for $O(1)$-replicable graph problems in our setting (for normal graph families). In particular, conditioned on the connectivity conjecture, our main result (\Cref{thm:lb}) states, informally, that for $O(1)$-replicable graph problems, any $T(N,\Delta)$-round lower bound in the \LOCAL model yields a $\Omega(\log T(N,\Delta))$-round lower bound for any low-space component-stable \MPC algorithm $\cA_{\MPC}$. Furthermore, the claim holds for both randomized and deterministic algorithms.
%


\subsection{Basic definitions: \emph{normal graph families} and \emph{sensitive \MPC algorithms}}
\label{subsec:basic-def}

\begin{definition}
Two connected graphs $G = (V, E)$ and $G' = (V', E')$, with \emph{center} nodes $v \in V$ and $v'\in V'$ respectively, are \textbf{$D$-radius-identical} if the topologies and node IDs (but not necessarily names) of the $D$-radius balls around $v$ and $v'$ are identical.
\end{definition}

Our next definition (related to \cite[Definition~III.1]{GKU19}, though set up in our framework), formalizes the notion of \MPC algorithms depending on non-local information in the graph.

\begin{definition}[\textbf{$(D, \eps, n, \Delta)$-sensitive \MPC algorithms}]
\label{def:sensitive-MPC-algorithms}
For integers $D,n,\Delta \ge 0$ and some $\eps \in [0,1]$, a component-stable \MPC algorithm $\cA$ for some graph problem is called \textbf{$(D, \eps, n,\Delta)$-sensitive} with respect to two $D$-radius-identical centered graphs $G = (V, E)$ and $G' = (V', E')$, with centers $v \in V$ and $v'\in V'$, if the probability (over choice of seed $\mathcal{S}$) that $\cA(G, v, n, \Delta, \mathcal{S}) \ne \cA(G', v', n, \Delta, \mathcal{S})$ is at least $\eps$.
\end{definition}

We can apply this definition also to deterministic component-stable algorithms. Since these do not depend on the random seed, any $(D, \eps, n, \Delta)$-sensitive deterministic  component-stable \MPC algorithm with $\eps > 0$ is $(D, 1, n, \Delta)$-sensitive.

As one final point of notation, in this section our arguments will often involve several different graphs. For clarity, for any graph $G$ we will denote by $n_G$ its number of nodes.


\subsection{\LOCAL hardness yields indistinguishability of graphs locally}
\label{subsec:sens}

Our next lemma (cf. \cite[Lemma~III.1]{GKU19}) shows that for any $R$-replicable graph problem, a lower bound for a normal graph family for any \LOCAL algorithm implies some useful property for any component-stable \MPC algorithm for that problem: to distinguish some graphs from that normal family one needs to rely on non-local information, in the sense of Definition~\ref{def:sensitive-MPC-algorithms}.

\begin{lemma}
\label{lem:sens}
For any $N, \Delta, R \in \nat$, let $\mathcal{P}$ be an $R$-replicable graph problem for which there is no $T(N,\Delta)$-round \LOCAL algorithm (with shared randomness) that solves $\mathcal{P}$ on graphs from some normal family $\mathcal{G}$ of maximum degree $\Delta$, with input size estimate $N$ (i.e., satisfying $n_H \le N \le poly(n_H)$), and with probability at least $1-\frac1N$.

Suppose there is a component-stable \MPC algorithm $\cA_{\MPC}$ that solves $\mathcal{P}$ on all graphs $G \in \mathcal{G}$, with probability at least $1 - \frac{1}{n_G}$. Then, there are two $T(N,\Delta)$-radius-identical centered graphs $G, G' \in \mathcal{G}$ with at most $N$ nodes and maximum degree (exactly) $\Delta$, such that $\cA_{\MPC}$ is $(T(N,\Delta), \frac{1}{4N^2},N^{R+2},\Delta)$-sensitive with respect to $G,G'$.

The same claim holds if the \LOCAL lower bound and algorithm $\cA_{\MPC}$ are both \emph{deterministic}.
%
%
%
\end{lemma}

\begin{proof}
The proof is by contradiction. Denote $D = T(N,\Delta)$ and $\eps = \frac{1}{4N^2}$. Let us assume that there are no two $D$-radius-identical centered graphs $G, G' \in \mathcal{G}$ with at most $N$ nodes and maximum degree $\Delta$ such that the given \MPC algorithm $\cA_{\MPC}$ is $(D, \eps, N^{R+2}, \Delta)$-sensitive with respect to $G$ and $G'$. We use this fact to construct a $D$-round randomized \LOCAL algorithm $\cA_{\LOCAL}$ to solve $\mathcal{P}$ on graphs with valid size estimate $N$ with probability at least $1 - \frac1N$, which is a contradiction to the assumption of the lemma.


\paragraph{Design of the \LOCAL algorithm.}

The input for $\cA_{\LOCAL}$ is an unbounded random seed $\mathcal{S}$, a graph $H \in \mathcal{G}$, integers $N$, $\Delta$, and the assignment of the IDs to the nodes of $H$, such that:
\begin{itemize}
\item $N$ is the input size estimate, that is, $n_H \le N \le poly(n_H)$,
\item $\Delta$ is the maximum degree of $H$,
\item all IDs are in $[\poly(N)]$ and are component-unique.
\end{itemize}

\junk{
The input for $\cA_{\LOCAL}$ is an unbounded random seed $\mathcal{S}$ and an arbitrary graph $H \in \mathcal{G}$ with:
\begin{itemize}
\item a valid input size estimate $N$ (i.e., $n_H \le N$),
\item component-unique IDs in $[\poly(N)]$,
\item maximum degree exactly $\Delta$.
\end{itemize}
}

For a fixed $H$, let us take an arbitrary node $v$ and describe how to determine $v$'s output under $\cA_{\LOCAL}$, on a particular input $\mathcal{I} = \langle H, \mathcal{S} \rangle$. Node $v$ first collects its $D$-radius ball $B_D(v)$ in $H$. It then considers the family $\mathcal{H}_v$ of all graphs with at most $ N$ nodes, centered at $v$, with component-unique IDs from $[\poly(N)]$, and maximum degree $\Delta$ that are $D$-radius-identical to $B_D(v)$, i.e., the set of possible inputs to $\cA_{\LOCAL}$ for which $v$ would see $B_D(v)$ as its $D$-radius ball. For each $G\in \mathcal{H}_v$, $v$ creates a \emph{simulation graph $\Gamma_G$} (see Definition \ref{def:replicable}) as follows: $\Gamma_G$ consists of $\lfloor \frac{N^{R+2}}{n_G} \rfloor \ge n_G^R$ disjoint copies of $G$. One of these copies is arbitrarily designated the \emph{``true'' copy}, and its nodes use the IDs of $G$ as their own IDs and names. All other copies use the same IDs, but as names they use arbitrary unique values in $[\poly(N)]$. We also add enough isolated nodes to raise the number of nodes to exactly $N^{R+2}$ (i.e., $N^{R+2} - n_G\lfloor \frac{N^{R+2}}{n_G} \rfloor < n_G$ isolated nodes), sharing the same arbitrary ID, and with arbitrary unique names, in $[\poly(N)]$. Note that this construction corresponds closely to Definition \ref{def:replicable} of problem replicability, and is the reason we chose such a definition.

In order to determine the output of $\cA_{\LOCAL}$ at $v$, we will be running $\cA_{\MPC}$ on the simulation graph $\Gamma_G$. For the supply of randomness, denote $\mathcal{S}'$ to be $\mathcal{S}$ curtailed to the length required by $\cA_{\MPC}$, i.e., the first $\poly(N)$ bits. Since $\cA_{\MPC}$ is component-stable, its output at $v$ on $\Gamma_G$, using seed $\mathcal{S}'$, is $\cA_{\MPC}(G,v,N^{R+2},\Delta,\mathcal{S}')$.

Then, we set
\begin{align*}
    \cA_{\LOCAL} (H,v,\mathcal{S}) &:=
    \arg \max_{x} |\{G\in \mathcal{H}_v: \cA_{\MPC}(G,v,N^{R+2},\Delta,\mathcal{S}')=x \}|
    \enspace,
\end{align*}
i.e., $v$'s output label output by $\cA_{\LOCAL}$ is that returned by the most $\cA_{\MPC}$ simulations.

\paragraph{Correctness of the \LOCAL algorithm.}

We use our assumption that there are no two $D$-radius-identical centered graphs $G, G' \in \mathcal{G}$ with at most $N$ nodes and maximum degree $\Delta$ for which $\cA_{\MPC}$ is $(D, \eps, N^{R+2}, \Delta)$-sensitive with respect to $G$ and $G'$. Hence, for any $G, G' \in \mathcal{H}_v$, with probability at least $1-\eps$ over choice of $\mathcal{S}$, $\cA_{\MPC}(G, v, N^{R+2}, \Delta, \mathcal{S}')= \cA_{\MPC}(G', v, N^{R+2}, \Delta, \mathcal{S}')$. In particular, since the actual input graph $H$ is certainly in $\mathcal{H}_v$,
\begin{align*}
    \mathbf E_{\mathcal{S}}\left[ \frac {|\{G\in \mathcal{H}_v:\cA_{\MPC}(G,v,N^{R+2},\Delta,\mathcal{S}')\ne \cA_{\MPC}(H,v,N^{R+2},\Delta,\mathcal{S}') \}|}{|\mathcal{H}_v|}\right]
        & \le
    \eps
    \enspace.
\end{align*}

Then, for the random seed $\mathcal{S}$, the probability that
\begin{align*}
    \frac {|\{G\in \mathcal{H}_v:\cA_{\MPC}(G,v,N^{R+2},\Delta,\mathcal{S}') \ne \cA_{\MPC}(H,v,N^{R+2},\Delta,\mathcal{S}') \}|}{|\mathcal{H}_v|}
        &\ge
    \frac12
    \enspace.
\end{align*}

is at most $2\eps$ by Markov's inequality. We therefore have
\begin{align*}
    \Prob{\cA_{\LOCAL} (H, v, \mathcal{S}) =
        \cA_{\MPC}(H, v, N^{R+2}, \Delta, \mathcal{S}')}
        &\ge
    1 - 2 \eps
    \enspace.
\end{align*}

Taking a union bound over all nodes $v \in V(H)$, we have that
\begin{align*}
    \Prob{\text{for all $v \in V(H)$: }
        \cA_{\LOCAL} (H,v,\mathcal{S}) =
        \cA_{\MPC}(H,v,N^{R+2},\Delta,\mathcal{S}')}
        &\ge
    1 - 2 N \eps
    \enspace.
\end{align*}

In this case, the overall output of $\cA_{\LOCAL} (H, \mathcal{S})$ is equal to the output of $\cA_{\MPC}$ on the true copy of $H$ when run on $\Gamma_H$ with seed $S'$. Since $\Gamma_H \in \mathcal{G}$ (as it is a disjoint union of members of $\mathcal{G}$, see Definition \ref{def:normal-graphs}), $\cA_{\MPC}$ returns a valid output on $\Gamma_H$ with probability $1-1/n_{\Gamma_H} = 1- N^{-(R+2)}$. Note that $\Gamma_H$ is a disjoint union of at least $n_H^{R}$ copies of $H$ and fewer than $n_H$ extra isolated nodes. This satisfies the construction in the definition of $R$-replicability.\footnote{When $n_H \ge 2$; for $n_H \le 1$ Lemma \ref{lem:sens} is trivial since there can be no \LOCAL lower bounds on such graphs.} Since $\cA_{\MPC}$ is component-stable, it must return the same output labeling in all copies of $H$ in $\Gamma_H$ (since they share the same topology and IDs), and likewise must return the same output label on all isolated nodes. So, the output labeling given on $\Gamma_H$ is of the form required in the replicability definition, and so $\cA_{\MPC}$ returns a valid output on the true copy (and, indeed, on all copies) of $H$ in $\Gamma_H$.

Then, by a union bound, $\cA_{\LOCAL}$'s output is correct with probability at least $1-2N\eps - N^{-(R+2)} \ge 1 - \frac1N$, reaching a contradiction to our assumption of a \LOCAL lower bound. This proves the randomized part of the lemma.

To prove the deterministic analogue, we perform exactly the same construction, except that $\cA_{\LOCAL}$ does not have the random seed $\mathcal{S}$, and does not pass it to $\cA_{\MPC}$. Since our algorithm $\cA_{\LOCAL}$ is now deterministic, we have $\cA_{\LOCAL}(H,v) = \cA_{\MPC}(H,v,N^{R+2},\Delta)$ with certainty for all $v$, i.e., $\cA_{\LOCAL}$'s output is identical to $\cA_{\MPC}$'s output on an $N^{R+2}$-node graph containing $H$ as a connected component. Since $\cA_{\MPC}$ is also deterministic, this is a valid output on $H$ with certainty. We have then constructed a deterministic \LOCAL algorithm solving $\mathcal{P}$, contradicting our initial assumption. So, we must instead have that $\cA_{\MPC}$ is $(T(N,\Delta), 1,N^{R+2},\Delta)$-sensitive, and therefore, since it is deterministic, $(T(N,\Delta), 1,N^{R+2},\Delta)$-sensitive.
\end{proof}

\junk{
\begin{proof}
Denote $D = T(N,\Delta)$ and $\eps = \frac{1}{4N^2}$. Let us assume that there are no two centered graphs $G$ and $G'$ with $N$ nodes and maximum degree $\Delta$ such that the given \MPC algorithm $\cA_{\MPC}$ is $(D, \eps, N^{R+2}, \Delta)$-sensitive with respect to $G$ and $G'$. We use this fact to construct a $D$-round randomized \LOCAL algorithm $\cA_{\LOCAL}$ to solve $\mathcal{P}$ on graphs with valid size estimate $N$ with probability at least $1 - 1/N$, which is a contradiction to the assumption of the lemma.

The input for $\cA_{\LOCAL}$ will be a graph $H$ with:
\begin{itemize}
	\item a valid input size estimate $N$ (i.e., $n_H\le N$),
	\item component-unique IDs in $[\poly(N)]$,
	\item maximum degree exactly $\Delta$.
\end{itemize}

$\cA_{\LOCAL}$ also receives an unbounded random seed $\mathcal{S}$.

We now fix a node $v$ to consider, and describe how we determine $v$'s output under $\cA_{\LOCAL}$, on a particular input $I=(\mathcal{S},H)$. Node $v$ first collects its $D$-radius ball $B_D(v)$ in $H$. It then considers the family $\mathcal{H}_v$ of all graphs with at most $ N$ nodes, centered at $v$, with component-unique IDs from $[\poly(N)]$, and maximum degree $\Delta$ that are $D$-radius-identical to $B_D(v)$, i.e., the set of possible inputs to $\cA_{\LOCAL}$ for which $v$ would see $B_D(v)$ as its $D$-radius ball. For each $G\in \mathcal{H}_v$, $v$ creates a simulation graph $\Gamma_G$ as follows: $\Gamma_G$ consists of $\lfloor \frac{N^{R+2}}{n_G} \rfloor \ge n_G^r$ disjoint copies of $G$. One of these copies is arbitrarily designated the `true' copy, and its nodes use the IDs of $G$ as their own IDs and names. All other copies use the same IDs, but as names they use arbitrary unique values in $[\poly(N)]$. We also add enough isolated nodes to raise the number of nodes to exactly $N^{R+2}$ (i.e., $N^{R+2}-n_G\lfloor \frac{N^{R+2}}{n_G} \rfloor < n_G$ isolated nodes), sharing the same arbitrary ID, and with arbitrary unique names, in $[\poly(N)]$.

We will be running $\cA_{\MPC}$ on this simulation graph $\Gamma_G$. For the supply of randomness, denote $\mathcal{S}'$ to be $\mathcal{S}$ curtailed to the length required by $\cA_{\MPC}$, i.e., the first $\poly(N)$ bits. Since $\cA_{\MPC}$ is component-stable, its output at $v$ on $\Gamma_G$, using seed $\mathcal{S}'$, is $\cA_{\MPC}(G,v,N^{R+2},\mathcal{S}')$.

Then, we set

\begin{align*}
    \cA_{\LOCAL} (H,v,\mathcal{S}) &:=
    \arg \max_{x} |\{G\in \mathcal{H}_v: \cA_{\MPC}(G,v,N^{R+2},\Delta,\mathcal{S}')=x \}|
    \enspace,
\end{align*}
i.e., $v$'s output label is that returned by the most $\cA_{\MPC}$ simulations.

\paragraph{Correctness of the \LOCAL algorithm.}

By our assumption that $\cA_{\MPC}$ is not sensitive, we now have that for any $G, G'\in \mathcal{H}_v$, with probability at least $1-\eps$ over choice of $\mathcal{S}$, $\cA_{\MPC}(G,v,N^{R+2},\mathcal{S}')= A_{\MPC}(G',v,N^{R+2},\Delta,\mathcal{S}')$. So, in particular, since the actual input graph $H$ is certainly in $\mathcal{H}_v$,

\begin{align*}
\mathbf E_{\mathcal{S}}\left[ \frac {|\{G\in \mathcal{H}_v:A_{\MPC}(G,v,N^{R+2},\Delta,\mathcal{S}')\ne A_{\MPC}(H,v,N^{R+2},\Delta,\mathcal{S}') \}|}{|\mathcal{H}_v|}\right]\le\eps\enspace.
\end{align*}

Then, for random $\mathcal{S}$, the probability that
\[ \frac {|\{G\in \mathcal{H}_v:A_{\MPC}(G,v,N^{R+2},\Delta,\mathcal{S}')\ne A_{\MPC}(H,v,N^{R+2},\Delta,\mathcal{S}') \}|}{|\mathcal{H}_v|}\ge \frac 12\] is at most $2\eps$ by Markov's inequality. We therefore have \[\Prob{A_{\LOCAL} (H,v,\mathcal{S}) = A_{\MPC}(H,v,N^{R+2},\Delta,\mathcal{S}')}\ge 1-2\eps\enspace.\]

Taking a union bound over all nodes $v \in V(H)$, we have that with probability at least $1-2N\eps$ over $\mathcal{S}$, $\cA_{\LOCAL} (H,v,\mathcal{S}) = A_{\MPC}(H,v,N^{R+2},\Delta,\mathcal{S}')$ for all $v$. In this case, the overall output of $\cA_{\LOCAL} (H,\mathcal{S})$ is equal to the output of $\cA_{\MPC}$ on the true copy of $H$ when run on $\Gamma_H$ with seed $S'$. Since $\Gamma_H\in\mathcal{G}$ (as it is a disjoint union of members of $\mathcal{G}$),   $\cA_{\MPC}$ returns a valid output on $\Gamma_H$ with probability $1-1/n_{\Gamma_H} = 1- N^{-(R+2)}$. Note that $\Gamma_H$ is a disjoint union of at least $n_H^{R}$ copies of $H$ and fewer than $n_H$ extra isolated nodes. This satisfies the construction in the definition of $R$-replicability.\footnote{when $n_H \ge 2$; for $n_H=1$ the lemma is trivial since there can be no \LOCAL lower bounds on such graphs.} Since $\cA_{\MPC}$ is component-stable, it must return the same output labeling in all copies of $H$ in $\Gamma_H$ (since they share the same topology and IDs), and likewise must return the same output label on all isolated nodes. So, the output labeling given on $\Gamma_H$ is of the form required in the replicability definition, and so $\cA_{\MPC}$ returns a valid output on the true copy (and, indeed, on all copies) of $H$ in $\Gamma_H$.

Then, by a union bound, $\cA_{\LOCAL}$'s output is correct with probability at least $1-2N\eps - N^{-(R+2)} \ge 1-1/N$, reaching a contradiction to our assumption of a \LOCAL lower bound. This proves the randomized part of the lemma.

To prove the deterministic analogue, we perform exactly the same construction, except that $\cA_{\LOCAL}$ does not have the random seed $\mathcal{S}$, and does not pass it to $\cA_{\MPC}$. Since our algorithm $\cA_{\LOCAL}$ is now deterministic, we have $\cA_{\LOCAL}(H,v) = A_{\MPC}(H,v,N^{R+2},\Delta)$ with certainty for all $v$, i.e., $\cA_{\LOCAL}$'s output is identical to $\cA_{\MPC}$'s output on an $N^{R+2}$-node graph containing $H$ as a connected component. Since $\cA_{\MPC}$ is also deterministic, this is a valid output on $H$ with certainty. We have then constructed a deterministic \LOCAL algorithm solving $\mathcal{P}$, contradicting our initial assumption. So, we must instead have that $\cA_{\MPC}$ is $(T(N,\Delta), 1,N^{R+2},\Delta)$-sensitive, and therefore, since it is deterministic, $(T(N,\Delta), 1,N^{R+2},\Delta)$-sensitive.
\end{proof}
}


\subsection{Conditional \MPC lower bounds from \LOCAL (proving Theorem \ref{thm:lb})}
\label{subsec:stconn}

In this section we apply Lemma \ref{lem:sens} to obtain conditional lower bounds for component-stable \MPC algorithms for $O(1)$-replicable graph problems through their hardness in the \LOCAL model. Our main result, \Cref{thm:lb}, can be seen as a revision and extension of the main arguments due to Ghaffari et al. \cite{GKU19} in their Theorem~I.4, applied to low-space component-stable \MPC algorithms for $O(1)$-replicable graph problems and for normal graph families.

Since the complexity of our algorithms depends on two parameters, input size estimate $N$ and maximum degree $\Delta$, we define the type of functions for which we will study lower bounds:

\begin{definition}
\label{def:constrained-functions}
We call a function $T: \nat \times \nat \rightarrow \nat$ \textbf{constrained} if $T(N,\Delta) = O(\log^\gamma N)$ for all $\Delta \le N$ and $\gamma \in (0,1)$, and $T(N^c,\Delta) \le c \cdot T(N,\Delta)$ for any $c\ge 1$.
\end{definition}

This is similar to the corresponding requirement $T(N) = O(\log^\gamma N)$ in \cite{GKU19}, but allowing for dependency on $\Delta$. The definition of constrained functions further incorporates a `smoothness' condition, which is also implicitly required by \cite{GKU19} (there, it is used that when $T(N) = O(\log^\gamma N)$, $T(\poly(N)) = O(T(N))$; while this is true for smooth functions, it is not technically true in general).\footnote{As a counterexample, let $T(N)$ be a tower of 2s of height $\log^*N - 3$. Then, $T(N) = O(\log\log N)$, but $\limsup\limits_{N\to \infty}\frac{T(N+1)}{T(N)} = \infty$.}
%

With the notion of constrained functions at hand, we can leverage Lemma \ref{lem:sens} to relate the complexity of \LOCAL algorithms for $R$-replicable graph problems in normal graph families to the connectivity conjecture. In particular, given the conjectured hardness of this problem in the \MPC model, this will imply (see \Cref{thm:lb} for a precise statement) a lower bound of $\Omega(\log T(n,\Delta))$-rounds for any low-space component-stable \MPC algorithm for any $O(1)$-replicable graph problem in normal graph families with no $T(N,\Delta)$-round \LOCAL algorithm.

We begin with an auxiliary lemma that relates Lemma \ref{lem:sens} to the complexity of low-space \MPC algorithms for the $D$-diameter $s$-$t$ connectivity problem.\footnote{The \emph{$D$-diameter $s$-$t$ connectivity problem} is formally defined in \cite[Definition~IV.1]{GKU19} as follows: Given two special nodes $s$ and $t$ in the graph, the goal is to determine whether $s$ and $t$ are connected or not, in a way that provides the following guarantee: If $s$ and $t$ are in the same connected component and that component is a path of length at most $D$ with $s$ and $t$ at its two endpoints, then the algorithm should output \YES, with probability $1 - \frac{1}{\poly(n)}$. If $s$ and $t$ are in different connected components, the algorithm should output \NO, with probability $1 - \frac{1}{\poly(n)}$. In other cases, the output can be arbitrary.}

\begin{lemma}
\label{lem:stconn}
Let $\mathcal{P}$ be a $O(1)$-replicable graph problem that has a $T(N,\Delta)$-round lower bound in the randomized \LOCAL model with shared randomness, for constrained function $T$, on graphs with input size estimate $N$ and maximum degree $\Delta$, from some normal family $\mathcal{G}$. Suppose that there is a randomized $o(\log T(n,\Delta))$-round low-space component-stable \MPC algorithm $\cA_{\MPC}$ for solving $\mathcal{P}$ on legal $n$-node graphs with maximum degree $\Delta$ from $\mathcal{G}$, succeeding with probability at least $1-\frac1n$. Then, there exists a low-space randomized \MPC algorithm $B_{\text{st-conn}}$ with round complexity $o(\log T(n,\Delta))$ for $T(n,\Delta)$-diameter $s$-$t$ connectivity on graphs with $n$ nodes, succeeding with probability at least $1-\frac1n$.
%
	
The same holds if the \LOCAL lower bound and algorithm $\cA_{\MPC}$ are both deterministic (but the obtained \MPC algorithm $B_{\text{st-conn}}$ is still randomized).
\end{lemma}

One point of note in the theorem statement is that we do not make any constraint on the maximum degree of the $s$-$t$ connectivity instance; indeed, for that problem any node of degree over $2$ can be immediately discarded, so maximum degree is largely irrelevant. $\Delta$ appears only in the expression for the diameter $D$.

\begin{proof}
Let $c$ be a sufficiently large constant. We will require that $\mathcal{P}$ is $\frac{c}{\sparam}$-replicable; since replicability is monotone, this will be the case when $c$ is sufficiently large. We then construct our randomized \MPC algorithm for $T(n,\Delta)$-diameter $s$-$t$ connectivity $B_{\text{st-conn}}$ as follows:

We are given an $n$-node input graph $H$, and $\poly(n)$-bit random seed $\mathcal{S}$, for $T(n,\Delta)$-diameter $s$-$t$ connectivity. We set $N=n^{\sparam/2}$, denote $D:= T(N,\Delta)$, and apply Lemma \ref{lem:sens} to obtain\footnote{Lemma \ref{lem:sens} shows the \emph{existence} of such graphs, but a uniform \MPC algorithm must also be able to \emph{find} them (a non-uniform algorithm can simply have them `hard-coded' in). To do so, we can run a brute-force search on each machine; $O(n^{\sparam})$ space is required to identify each of the $2^{O(n^{\sparam})}$ pairs of graphs and check the sensitivity property. However, we require $2^{O(n^{\sparam})}$ local computation (which is also implicitly the case in \cite{GKU19}). For specific lower bounds there may be efficient constructions, but this must be determined on a case-by-case basis.} two $D$-radius-identical centered graphs $G,G'\in \mathcal{G}$ with at most $ N$ nodes and maximum degree $\Delta$, such that $\cA_{\MPC}$ is $(D, \frac{1}{4N^2},n^{c/\sparam+2},\Delta)$-sensitive with respect to $G,G'$.
		
Our algorithm $B_{\text{st-conn}}$ will work by combining the results of several parallel simulations, run on disjoint groups of machines. We first divide the seed $\mathcal{S}$ equally among these simulations; by choosing our input $\poly(n)$ seed length sufficiently high, we can guarantee that each simulation is given sufficient randomness. In each simulation, each node in $v\in V(H)$ uses part of the random seed assigned to the simulation to independently, uniformly randomly, choose a value $h(v) \in [D]$.

For each simulation we now construct a pair of graphs $G_H$ and $G'_H$ on which to run $\cA_{\MPC}$. In each case we begin with the graph $H$, and discard all nodes of degree higher than $2$. Then, for each remaining node in $H$ we collect its radius-$1$ neighborhood onto a machine (this takes $O(1)$ rounds and does not increase asymptotic space usage). Each node of $H$ then drops out if its radius-$1$ neighborhood is not consistent with being part of an $s-t$ path with consecutively increasing values of $h$ along the path (other than the value of $h(t)$, which we make no requirements for). In particular:
		
\begin{itemize}
\item Nodes $s$ and $t$ in $H$ must have degree $1$ (and otherwise we terminate the simulation immediately and output \NO).
\item Other nodes must have degree $2$, and the values of $h$ in their radius-$1$ neighborhood must be a consecutive triplet (other than $h(t)$).
\end{itemize}
		
Now, every remaining node (i.e., satisfying these conditions) in $H$ is assigned (copies of) nodes in $G$ and $G'$ as follows:
		
\begin{itemize}
\item Node $s$ in $H$ is assigned all nodes of distance at most $h(s)$ from $v$ in $G$ and $G'$.
\item Node $t$ in $H$ is assigned all nodes of distance more than $D$ from $v$ in $G$ and $G'$.
\item Other nodes $u$ in $H$ are assigned the nodes of distance exactly $h(u)$ from $v$ in $G$ and $G'$.
\end{itemize}
		
We can now define the new graphs $G_H$ and $G'_H$ as follows:

The node sets of $G_H$ and $G'_H$ consist of every assigned copy of nodes in $G$ and $G'$ respectively. We also add to $G_H$ and $G'_H$ an extra, full, copy of $G$, disconnected from the rest of the graph, whose sole purpose is to ensure that the maximum degree of $G_H$ and $G'_H$ is exactly $\Delta$, and enough isolated nodes to raise the number of nodes (currently $O(nN) = O(n^{1+\sparam/2})$) to exactly $n^{c/\sparam +2}$ (which is larger for sufficiently large constant $c$). Let $w_{u}$ in $G_H$ denote a copy of $w$ in $G$, assigned to node $u$ in $H$. We place an edge between $w_{u}$ and $\hat w_{\hat u}$ in $G_H$ iff $w\sim_G \hat w$ and $u\sim_H \hat u$ (or $u=\hat u$). As IDs, for a simulated node $w_{u}$, we use the original ID of $w$. We will show that these IDs remain component-unique in $G_H$ and $G'_H$. As names, we use $(name(u),name(w))$; these names are guaranteed to be fully unique. We arbitrarily choose fully unique names and IDs for the extra isolated nodes and copy of $G$.

The graphs $G_H$ and $G'_H$ can easily be constructed in $O(1)$ rounds of \MPC, using $O(n^{\sparam})$ local space, since machines each need only store $G$, $G'$, and $O(n^{\sparam})$ edges of $H$. We construct $G'_H$ from $G'$ analogously.
		
We see that the connected components of $G_H$ and $G'_H$ are isomorphic to induced subgraphs of $G$ and $G'$ respectively. Furthermore, the IDs of nodes in these components are identical to the relevant subgraphs of $G$ and $G'$, and are therefore component-unique (as they were in $G$ and $G'$). In particular, we consider the topology and IDs of the connected components $CC({v_{s}})$ and $CC'(v_{s})$ containing $v_{s}$ in $G_H$ and $G'_H$ respectively:
		
\begin{enumerate}
\item If $s$ and $t$ are endpoints of a path of $p\le D+1$ \emph{nodes} (and therefore of \emph{length} at most $D$) in $H$, $h(s)=p-D$, and $h(u_d) = p-D+d$ for each $u_d\ne t$ which is the node at distance $d$ along the path from $s$, then $CC({v_{s}}) =  G$ and $CC'(v_{s}) = G'$;
\item otherwise, $CC(v_{s})=CC'(v_{s})$.
\end{enumerate}
		
In the first case, this is because the nodes on the $s-t$ path are together assigned precisely one copy of each on the nodes in $G$ and $G'$, in such away that every two adjacent nodes in $G$ and $G'$ are assigned to adjacent nodes on this path. In the second case, the only nodes that differ in $G$ and $G'$ are assigned only to $t$, and no outputs of $h$ can cause these to be connected to $v_{s}$. Therefore the graphs of nodes that \emph{are} connected to $v_{s}$ are identical.
	
We now simulate $\cA_{\MPC}$ on $G_H$ and $G'_H$ using a sufficiently large $\poly(n)$-length part of $\mathcal{S}$ allocated to this simulation as the random seed (the same seed for both $G_H$ and $G'_H$). Since $G_H$ and $G'_H$ are in the \emph{normal} family $\mathcal{G}$ (they are disjoint unions of induced subgraphs of $G, G'\in \mathcal{G}$), these are valid inputs for $\cA_{\MPC}$.
		
If $s$ and $t$ are endpoints of a path of length at most $D$ in $H$, then the probability that all nodes $v$ on this path choose the `correct' value $h(v)$ (i.e., as specified in point $1$ above) is at least $D^{-D} = N^{-o(1)}$. In this case, we have $CC({v_{s}}) =  G\in \mathcal{G}$ and $CC'(v_{s}) = G'\in \mathcal{G}$. Since $\cA_{\MPC}$ is component-stable, its output for $v_{s}$ in each case is equal to $\cA_{\MPC}(G,v,n^{c/\sparam+2},\Delta,\mathcal{S})$  and $\cA_{\MPC}(G',v,n^{c/\sparam+2},\Delta,\mathcal{S})$ respectively, and by sensitivity, with probability at least $\frac{1}{4N^2}$, these are different. If this occurs, the simulation output, and the overall output, is correctly~\YES.

If $s$ and $t$ are not endpoints of a path of length at most $D$ in $H$, then in all simulations we have $CC({v_{s}})=CC'({v_{s}})$. Since $\cA_{\MPC}$ is component-stable, its output depends (deterministically) on these connected components (as well as equal values of $\Delta$, random seed $\mathcal{S}$, and $n_{G_H}=n_{G'_H}=n^{c/\sparam+2}$), and is therefore the same in both cases. The output of all simulations is then correctly~\NO.
				
The final output for $B_{\text{st-conn}}$ is as follows: after running $\poly(n)$-many concurrent simulations, we return \YES if any simulation returned \YES, and \NO otherwise. We have already shown that if $s$ and $t$ are not endpoints of a path of length at most $D$ in $H$, all simulations will return \NO, so $B_{\text{st-conn}}$ will correctly output \NO. It remains to show that $B_{\text{st-conn}}$ is also correct with high probability on \YES instances.

Each simulation has a $D^{-D} = N^{-o(1)}$ probability of the nodes in the s-t path choosing the correct values of $h(v)$, and a $\frac{1}{4N^2}$ probability of returning different outputs on $G_H$ and $G'_H$. Therefore, there is at least a $\frac{1}{4N^3}$ probability of the simulation returning \YES. Since simulations use independent randomness, and we can run sufficiently large $\poly(n)$ many of them in parallel, we can amplify the total probability of success to $1-\frac {1}{n}$ as required.

The round complexity of $B_{\text{st-conn}}$ is dominated by simulating $\cA_{\MPC}$ on graphs with $O(n^{c/\sparam+2})$ nodes and maximum degree $\Delta$, which, using that $T$ is a constrained function, is $O(T(n^{c/\sparam+2},\Delta)) = O(T(n,\Delta))$.
\end{proof}


While Lemma \ref{lem:stconn} already provides a strong conditional lower bound for $O(1)$-replicable graph problems in our setting, the lower bound is conditioned on the complexity of solving $D$-diameter $s$-$t$ connectivity. However, we can apply the framework \cite{GKU19} to extend the analysis to condition on the more standard connectivity conjecture --- to conclude our analysis with the result stated in \Cref{thm:lb}.

\junk{
\begin{theorem}
\label{thm:lb}
Let $\mathcal{P}$ be a $O(1)$-replicable graph problem that has a $T(N,\Delta)$-round lower bound in the randomized \LOCAL model with shared randomness, for constrained function $T$, on graphs with input estimate $N$ and maximum degree $\Delta$, from some normal family $\mathcal{G}$. Suppose that there is a randomized $o(\log T(n,\Delta))$-round low-space component-stable \MPC algorithm $\cA_{\MPC}$ for solving $\mathcal{P}$ on legal $n$-node graphs with maximum degree $\Delta$ from $\mathcal{G}$, succeeding with probability at least $1-\frac 1n$. Then, there exists a low-space randomized \MPC algorithm $\cA_{cycle}$ that can distinguish one $n$-node cycle from two $\frac{n}{2}$-node cycles in $o(\log n)$ rounds, succeeding with probability at least $1-\frac 1n$.
	
The same holds if the \LOCAL lower bound and algorithm $\cA_{\MPC}$ are both deterministic (but the obtained algorithm $\cA_{cycle}$ remains randomized).
\end{theorem}
}

\begin{proof}[\textbf{Proof of \Cref{thm:lb}}]
This reduction from Lemma \ref{lem:stconn} is taken directly from Section~V in \cite{GKU19}. First, we notice that \cite[Lemma~IV.1]{GKU19} applies to any \MPC algorithm, i.e., is not affected by our changes to the definition of component-stability. Therefore, assuming that any low-space \MPC algorithms distinguishing one $n$-node cycle from two $\frac{n}{2}$-node cycles requires $\Omega(\log n)$ rounds, we obtain that any low-space \MPC algorithm that solves $D$-diameter $s$-$t$ connectivity for $D \le \log^\gamma n$ for a constant $\gamma \in (0, 1)$ requires $\Omega(\log D)$ rounds.
By setting $D = T(n,\Delta)$, we can combine this fact with Lemma \ref{lem:stconn} to conclude the proof of the theorem.
\end{proof}

\junk{
\subsection{Conditional \MPC lower bounds via hardness of 1-vs-2-cycle conjecture}
\label{subsec:thm:lb}

While Lemma \ref{lem:stconn} already provides a strong conditional lower bound for $O(1)$-replicable graph problems in our setting, the lower bound is conditioned on the complexity of solving $D$-diameter $s$-$t$ connectivity, which has not been studied so extensively in the past. In what follows, we apply the approach from \cite{GKU19} to extend the analysis to condition on the more standard \emph{1-vs-2-cycle conjecture}. We begin with the following lemma (proven by Ghaffari et al.\ \cite{GKU19} in the setting of a different notion of component-stable algorithms, as discussed in \Cref{subsec:problems-with-def-comp-stab}).
%

\begin{lemma}
\label{lem:1v2cycle}
Suppose that every \MPC algorithm with local memory $\spac = n^\sparam$ for a constant $\sparam \in (0,1)$ and global memory $\poly(n)$ that can distinguish one $n$-node cycle from two $\frac{n}{2}$-node cycles requires $\Omega(\log n)$ rounds. Then, any \MPC algorithm with local memory $\spac = n^\sparam$ and global memory $\poly(n)$ that solves $D$-diameter $s$-$t$ connectivity for $D\le \log^\gamma n$ for a constant $\gamma \in (0, 1)$ requires $\Omega(\log D)$ rounds.
\end{lemma}

\begin{proof}
We use Lemma~IV.1 from \cite{GKU19}. Since this lemma does not require that the algorithm
is component-stable, our definitional changes have no effect and the proof from \cite{GKU19} applies.
\end{proof}

Now we can combine Lemma \ref{lem:stconn} with Lemma \ref{lem:1v2cycle} to obtain the following theorem.

\begin{theorem}
\label{thm:lb}
Let $\mathcal{P}$ be a $O(1)$-replicable graph problem that has a $T(N,\Delta)$-round lower bound in the randomized \LOCAL model with shared randomness, for constrained function $T$, on graphs with input estimate $N$ and maximum degree $\Delta$, from some normal family $\mathcal{G}$. Suppose that there is an $o(\log T(N,\Delta))$-round\Artur{It was $o(\log D)$-rounds before, and so it was a typo I guess.} low-space component-stable \MPC algorithm $\cA_{\MPC}$ for solving $\mathcal{P}$ on $\mathcal{G}$, succeeding with high probability\Artur{More specific? (define what is the failure bound)}. Then, there exists a low-space randomized \MPC algorithm that can distinguish one $n$-node cycle from two $\frac{n}{2}$-node cycles in $o(\log n)$ rounds, succeeding with high probability.
	
The same holds if the \LOCAL lower bound and algorithm $\cA_{\MPC}$ are both deterministic.
\end{theorem}

\begin{proof}
Setting $D = T(n,\Delta)$, by Lemma \ref{lem:stconn} we can obtain an $o(\log D)$-round low-space \MPC algorithm for solving $D$-diameter $s$-$t$ connectivity, succeeding with high probability. The theorem then follows from Lemma \ref{lem:1v2cycle}.
\end{proof}
}


\subsection{Applications of Theorem \ref{thm:lb}: conditional \MPC lower bounds}
\label{subsec:lb:applications}

Similarly as was been demonstrated by Ghaffari et al.\ \cite{GKU19}, the lifting arguments in \Cref{thm:lb}, when combined with the known lower bounds for \LOCAL algorithms, lead to a host of conditional lower bounds for low-space component-stable \MPC algorithms for many natural graph problems. In particular, we can recover the main applications of Ghaffari et al.\ \cite{GKU19} (see Theorems~I.1, I.2, and I.3 there):

\begin{theorem}
\label{thm:recovering-GKU}
Assuming the connectivity conjecture (that no low-space randomized \MPC algorithm can distinguish one $n$-node cycle from two $\frac{n}{2}$-node cycles in $o(\log n)$ rounds, with failure probability at most $\frac1n$), the following lower bounds hold for randomized low-space component-stable \MPC algorithms:
\begin{itemize}
\item $\Omega(\log\log n)$ rounds for a constant approximation of maximum matching (even on forests\hide{\Artur{Note to Peter: at some moment you mentioned that something is unclear in that claim for forests; could you please double-check it.} \Peter{That was for the deterministic lower bound for maximal matching/MIS - I changed it there}}), a constant approximation of vertex cover, a maximal independent set;
\item $\Omega(\log\log\log n)$ rounds for $(\Delta+1)$-coloring, unless the deterministic complexity of $(\Delta+1)$-coloring in the \LOCAL model is $2^{\log^{o(1)}\log n}$;
\item $\Omega(\log\log\log n)$ rounds for the Lov\'asz Local Lemma problem, or even for the special case of sinkless orientation where we should orient the edges of a $d$-regular graph, where $d \ge 4$, such that each node has at least one outgoing edge.
\end{itemize}
\end{theorem}

\begin{proof}
First, we notice that Ghaffari et al.\ \cite{GKU19} proved the following lower bounds for the randomized \LOCAL model with shared randomness:
\begin{itemize}
\item $\Omega(\sqrt{\log n/\log\log n})$ rounds for a $\polylog(n)$-approximate solution for the minimum vertex cover, maximum matching, or minimum dominating set problem, and for finding a maximal matching (even on trees) or a maximal independent set (Theorem~V.1 and Corollary~V.2 in \cite{GKU19});
\item $\Omega(\sqrt{\log\log n})$ rounds for $(\Delta+1)$-coloring unless the deterministic complexity of $(\Delta+1)$-coloring in the \LOCAL model is $o(\log n)$ (indirectly in Corollary~V.4 in \cite{GKU19});
\item $\Omega(\log\log n)$ rounds for the Lov\'asz Local Lemma problem, or even for the special instance known as sinkless orientation where we should orient the edges of a $d$-regular graph, where $d \ge 4$, such that each node has at least one outgoing edge \cite[Corollary~V.5]{GKU19}.
\end{itemize}

The proof of \Cref{thm:recovering-GKU} now follows immediately by combining these \LOCAL lower bound to \Cref{thm:lb}, by taking each time $\mathcal{G}$ to be the set of all graphs, and by noticing that all the problems considered are $O(1)$-replicable (see \Cref{subsec:which-are-replicable-problem}, and for example, Lemma \ref{lemma:LCL-replicable} for \LCL{}s and Lemma \ref{lem:approx-matching-replicable} for approximate maximum matching). The only issue worth mentioning is that since the class of trees is not normal (see Definition \ref{def:normal-graphs}), the lifting of the \LOCAL lower bound for finding a maximal matching on trees extends only to forests, that is, that any randomized low-space component-stable \MPC algorithm that returns a constant approximation of maximum matching on forests has $\Omega(\log\log n)$ rounds complexity.
\end{proof}

Comparing the results claimed by Ghaffari et al.\ \cite{GKU19} (Theorems~I.1, I.2, and I.3) with those proven in \Cref{thm:recovering-GKU}, there are only two differences: firstly, our lower bound for matching holds only for forests, while \cite{GKU19} claimed it for trees (for discussion of the reason for this difference, see Section \ref{subsec:graph-families}), and secondly, our result for $(\Delta+1)$-coloring is updated to hold on a weaker conjecture of no $2^{\log^{o(1)}\log n}$-round deterministic \LOCAL algorithm for $(\Delta+1)$-coloring (since the original result in \cite{GKU19} conditioned on the $2^{o(\sqrt{\log n})}$-deterministic hardness, which was recently disproved in \cite{RG20}).


\subsubsection{Deterministic conditional \MPC lower bounds}
\label{subsubsec:lb:deterministic-applications}

Finally, observe that \Cref{thm:lb} can be directly combined with existing deterministic \LOCAL lower bounds to obtain deterministic lower bounds for low-space deterministic \MPC algorithms. (Note that unlike in the case of randomized \MPC algorithms above, one can directly apply deterministic \LOCAL lower bounds as black-box in \Cref{thm:lb}, since the issue of shared randomness is of no importance for deterministic algorithms.) For example, we apply it in \Cref{thm:MIS-MM-1} to give $\Omega(\log\Delta +\log\log n)$-round lower bounds for deterministic low-space component-stable \MPC algorithms for maximal matching or maximal independent set (conditioned on the connectivity conjecture). Similarly, we present deterministic conditional lower bounds of $\Omega(\log\log_\Delta n)$ rounds for low-space component-stable \MPC algorithms for sinkless orientation, $(2\Delta-2)$-edge coloring, and $\Delta$-vertex coloring, holding even in forests (Theorems \ref{thm:sinkless-cs}, \ref{thm:ecoloring-1}, and \ref{thm:vcoloring-1}). For some more lower bounds to which the framework is applicable, see, e.g., the recent deterministic \LOCAL lower bounds for ruling sets in \cite{BBO20}. 


\section{Separation between Stable and Unstable Deterministic \MPC}
\label{sec:stable-nonstable-det}


We start by presenting a general statement for characterization of the graph family of local problems for which component-unstable \MPC algorithms provably help. Specifically, this includes problems for which there is a provable exponential gap between their \LOCAL deterministic and randomized complexities (e.g., as shown in \cite{BBHORS19,CQW+20,CKP19}). Notice that the obtained \MPC algorithms are \emph{non-unform}.

\begin{theorem}
\label{thm:instability-provably-helps}
Let $\mathcal{P}$ be a $O(1)$-replicable graph problem. Let $T_r(N)$ and $T_d(N)$ be the randomized and deterministic, respectively, \LOCAL round complexity of $\mathcal{P}$ on bounded-degree graphs with $n$ nodes, with an input size estimate $N$ and exact knowledge of $\Delta$. If $T_r(N) < \log^{\gamma} N$ for some constant $\gamma \in (0,1)$, and if there is a provable exponential gap between $T_r(N)$ and $T_d(N)$, then, assuming the connectivity conjecture, there is an exponential gap between component-stable and component-unstable deterministic low-space \MPC complexities for solving $\mathcal{P}$.
\end{theorem}

\begin{proof}
Since $T_r(N) < \log^{\gamma} N$ for some constant $\gamma \in (0,1)$ (see Definition \ref{def:constrained-functions}), by \Cref{thm:lb}, conditioning on the connectivity conjecture, any deterministic low-space \emph{component-stable} \MPC algorithm for $\mathcal{P}$ requires $\Omega(\log(T_d(n)))$ rounds. We will show that there exists a deterministic low-space \emph{component-unstable} \MPC algorithm for $\mathcal{P}$ running in $O(\log(T_r(n)))$ rounds using an $n^2$ factor more machines (the algorithm is possibly non-uniform and non-explicit).

We use Lemma \ref{lem:det-large-prob}, which says that it is sufficient to show a randomized low-space \MPC algorithm for $\mathcal{P}$ that runs in $O(\log(T_r(n)))$ rounds and succeeds with probability at least $1-2^{-n^2}$. The algorithm employs $n^2$ parallel repetitions of procedure $\mathcal{B}$, each uses a disjoint set of machines and a distinct random coins determined by a shared random string of length $\poly(n)$. The final output is determined by picking the \emph{most successful} repetition (i.e., with the maximal number of legal node outputs). Procedure $\mathcal{B}$ simply collects the $2T_r(n)$ ball of each node $u \in V(G)$, which can be done in $O(\log(T_r(n)))$ rounds. The machines then simulate the randomized local algorithm on balls using the shared seed. This completes the description of the simulation.

We first show that this algorithm succeeds with probability $1-2^{-n^2}$. By the properties of the \LOCAL algorithm, a single repetition succeeds with probability at least $1-\frac1n$ (over the randomness of the shared seed). Thus, by employing $n^2$ independent repetitions, we get the probability that all these repetitions fail is less than $2^{-n^2}$.

Notice that since $T_r(n) = o(\log n)$ and the input graph is bounded-degree, the $2T_r(n)$-ball of each node indeed fits the local space of each machine, and thus the \MPC algorithm can be implemented with $n^{o(1)}$ local space, and hence is low-space. Finally, let us observe that the obtained algorithm is \emph{component-unstable}, since it relies on globally agreeing on the outcome of all repetitions.
\end{proof}

Notice that \Cref{thm:instability-provably-helps} implies \Cref{thm:complexity-SDet-Det}, that is, demonstrates that, assuming the connectivity conjecture, there are some graph problems for which there are deterministic low-space component-unstable \MPC algorithms that are provably faster than their component-stable counterparts. However, the weakness of \Cref{thm:instability-provably-helps} is that the obtained deterministic low-space component-unstable \MPC algorithms are \emph{non-unform}. In the rest of this section we will address this issue and show that a similar claim holds also for uniform deterministic \MPC algorithms, and those which use almost-optimal global space.

\subsection{Derandomization tools}
\label{subsec:derand-tools}

In this section we present some basic derandomization tools used in our analysis.

\subsubsection{Strongly $(\epsilon,k)$-wise independent hash functions}

We will use the notion of \emph{strongly $(\epsilon,k)$-wise independent hash functions} which are defined as follows:

\begin{definition}
\label{def:strongly-independent-rvs}
Let $A$ and $B$ be two sets and let $\epsilon \ge 0$. A family $\mathcal{H}$ of functions $A \rightarrow B$ is \textbf{strongly $(\epsilon,k)$-wise independent} if, when $h \in \mathcal{H}$ is chosen uniformly at random, for any $t \le k$ and for any $t$ distinct $x_1, \dots, x_t \in A$ and any (not necessarily distinct) $y_1, \dots, y_t \in B$, we have
\begin{align*}
	\left|\Prob{h(x_i) = y_i, 1 \le i \le t} - \frac{1}{|B|^t}\right|
		&\le
	\epsilon
	\enspace.
\end{align*}
\end{definition}

That is, the probability of any particular $t\le k$ outputs of a random function from $\mathcal H$ differs from what one would expect under uniform full randomness only by $\epsilon$. In our applications, we will choose $\epsilon = n^{-c}$ for sufficiently large constant $c$, and we can then assume that these outputs are indeed fully independent (since the statistical difference is far below the failure probability of our algorithms).

Kurosawa et al. \ \cite{KJS01} presented the following result showing the existence of strongly $(\epsilon,k)$-wise independent families of size (and therefore requiring few bits to specify):

\begin{theorem}[\cite{KJS01}]
\label{thm:hash}
For any sets $A, B$, positive integer $k$, and positive real $\epsilon$, there exists a strongly $(\epsilon,k)$-wise independent family $\mathcal{H}$ of size
\begin{align*}
	2^{O(\log\log |A| + k \log |B| + \log \frac{1}{\eps})}
	\enspace.
\end{align*}
Furthermore, each $h\in \mathcal{H}$ can be specified and evaluated on any input in $\polylog(|A|,|B|,2^k,\frac 1\eps)$ time and space.
\end{theorem}

\begin{proof}
As noted in the proof of Theorem 2.9 of \cite{KJS01}, by applying propositions 1.2 and 2.3 of \cite{KJS01} an $(\epsilon,k)$-wise independent family of the desired size can be constructed from the independent sample spaces of Alon et al. \cite{AGHP92}.
\end{proof}

\junk{TURNS OUT I ACTUALLY DID NEED ALMOST-INDEPENDENCE
	\begin{definition}
		\label{def:k-wise-independent}
		For $N, L, k \in \nat$ such that $k \le N$, a family of functions $\mathcal{H} = \{h : [N] \rightarrow [L]\}$ is \textbf{$k$-wise independent} if for all distinct $x_1, \dots, x_k \in [N]$, the random variables $h(x_1), \dots, h(x_k)$ are independent and uniformly distributed in $[L]$ when $h$ is chosen uniformly at random from~$\mathcal{H}$.
	\end{definition}
	\noindent We will use the following well-known lemma (cf. \cite[Corollary~3.34]{Vadhan12}).
	
	\begin{lemma}
		\label{lem:hash}
		For every $a$, $b$, $k$, there is a family of $k$-wise independent hash functions $\mathcal{H} = \{h : \{0,1\}^a \rightarrow \{0,1\}^b\}$ such that choosing a random function from $\mathcal{H}$ takes $k \cdot \max\{a,b\}$ random bits, and evaluating a function from $\mathcal{H}$ takes time $\poly(a,b,k)$.
\end{lemma}}

\subsubsection{Pseudorandom generators}
A \emph{Pseudorandom Generator (PRG)} is a function that gets a short random seed and expands it to a long one which look random, in the sense that it is indistinguishable from a random seed of the same length for such a formula.

\begin{definition}[Computational indistinguishability, Definition 7.1 in \cite{Vadhan12}]
\label{def:computational-indistinguishability}
Random variables $X$ and $Y$ taking values in $\{0,1\}^m$ are $(t,\eps)$ \textbf{indistinguishable} if for every non-uniform algorithm $T$ running in time at most $t$, we have $|\Pr[T(X)=1]-\Pr[T(Y)=1]| \le \eps$.
\end{definition}

\begin{definition}[PRG, Definition 7.3 in \cite{Vadhan12}]
\label{def:PRG}
A deterministic function $\mathcal{G}:\{0,1\}^d \to \{0,1\}^m$ is an \textbf{$(t,\eps)$ pseudorandom generator (PRG)} if (1) $d < m$, and (2) $\mathcal{G}(U_d)$ and $U_m$ are $(t,\eps)$ indistinguishable.
\end{definition}

The following proposition uses the probabilistic method to show the existence of good PGRs.

\begin{proposition}[Proposition 7.8 in \cite{Vadhan12}]
\label{prop:perfect-PRG}
For all $m \in \mathbb{N}$ and $\eps>0$, there exists a (non-explicit) $(m,\eps)$ pseudorandom generator $\mathcal{G}: \{0,1\}^d \to \{0,1\}^m$ with seed length $d = \Theta(\log m+\log(1/\eps))$.
\end{proposition}

We next notice that $(m,\eps)$ PRGs of Proposition \ref{prop:perfect-PRG} can be computed by exhaustive search.


\begin{lemma}[Time and space complexity of perfect PRGs]
\label{lem:prg-alg}
%
For all $m \in \mathbb{N}$ and $\eps>0$, there exists an algorithm for computing the $(m,\eps)$ PRG of Proposition \ref{prop:perfect-PRG} in time $\exp(\poly(m/\eps))$ and space $\poly(m/\eps)$.
\end{lemma}

\begin{proof}
Our goal is to find an $(m,\eps)$ PRG from $\{0,1\}^d$ to $\{0,1\}^m$ with seed length $d = \Theta(\log m+\log 1/\eps)$, and the existence of one such PRG follows from Proposition \ref{prop:perfect-PRG}.

The procedure for computing the PRG iterates over all different functions $\mathcal{G}: \{0,1\}^d \to \{0,1\}^m$ according to some fixed order. For each fixed function $\mathcal{G}: \{0,1\}^d \to \{0,1\}^m$, it iterates over all non-uniform algorithms $\cA$ running in time $m$ according to some fixed order, as well. To examine if $\mathcal{G}$ $\eps$-fools $\cA$ (in the sense of Definition \ref{def:PRG}), it is required to compare the output distribution of $\cA$ run on a random $m$-bit string to the $m$-bit pseudo-random string obtained by evaluating the function $\mathcal{G}$ on a random $d$-bit vector. This is done by computing the output of algorithm $\cA$ under each $\mathcal{G}(x)$ for every $x \in \{0,1\}^d$, as well as evaluating the algorithm $\cA$ under each $y \in \{0,1\}^m$. The procedure picks the PRG function $\mathcal{G}^*:\{0,1\}^d \to \{0,1\}^m$ that $\eps$-fools all $m$-time algorithms; the existence of one such $\mathcal{G}^*$ follows from Proposition \ref{prop:perfect-PRG}.

We next bound the time and space complexity of this procedure. There are $(2^m)^{2^d} = 2^{O(m2^d)}$ different functions $\mathcal{G}$ mapping $d$ bits into $m$ bits, and there are $2^{O(m \log m)}$ non-uniform algorithms running in time $m$ (i.e., Boolean circuits of size $m$). The time complexity is bounded by $2^{O(m2^d)} \cdot 2^{O(m \log m)} \cdot 2^d \cdot 2^d = 2^{\poly(m/\eps)}$.
%

We next bound the space complexity. To iterate over all the possible PRG candidates $\mathcal{G}: \{0,1\}^d \to \{0,1\}^m$, one needs to store the index of the current candidate function and its representation, which uses $O(m \cdot 2^d)$ bits. To iterate over all the $m$-time algorithms according to some fixed order, one needs to store the index of the current $\cA$, and its representation which can be done with $O(m \log m)$ bits. Fixing the candidate PRG function to $\mathcal{G}$ and the given $m$-algorithm to $\cA$, the evaluation of the algorithm $\cA$ under each $\mathcal{G}(x)$ for every $x \in \{0,1\}^d$ can be done using $O(m \cdot 2^d)$ space. Finally, in order to evaluate $\cA$ under each $y \in \{0,1\}^m$, it is sufficient to store the current index of the vector $y$ considered. Altogether, the space requirements is $O(m \cdot 2^d)$;
since $d = \Theta(\log m+\log(1/\eps))$, this is $\poly(m/\eps)$.
\end{proof}

\subsection{Problems related to the Lov\'asz Local Lemma}
\label{subsec:LLL-related}

We first demonstrate a deterministic complexity separation between component-stable and component-unstable algorithms for a group of problems related to the distributed Lov\'asz Local Lemma: sinkless orientation, $(\Delta+o(\Delta))$-edge coloring, and $o(\Delta)$-coloring triangle-free graphs.

These problems are known to be hard in the \LOCAL model via proofs based on the \emph{round elimination} technique \cite{CQW+20}. We show that by derandomizing an algorithm for the constructive Lov\'asz Local Lemma and plugging this result into known algorithms for the problems, we can surpass the component-stable lower bounds we obtain when lifting the \LOCAL lower bounds to low-space \MPC.

\subsubsection{Algorithmic Lov\'asz Local Lemma}
We first present a deterministic low-space component-unstable \MPC algorithm for the Lov\'asz Local Lemma which uses \emph{heavy local computations} to obtain a small number of \MPC rounds (though for bounded-degree graphs, computation is still polynomial in $n$). Later we will demonstrate that this algorithm can be applied to obtained similar \MPC algorithms for sinkless orientation, edge-coloring, and vertex-coloring.

The \emph{algorithmic Lov\'asz Local Lemma (LLL)} is defined as follows:

\begin{definition}
Consider a set \Vars\ of independent random variables, and a family \ents\ of $n$ (bad) events on these variables. Each event $A \in\ents $ depends on some subset $\Vars(A) \subseteq \Vars$ of variables.
Define the dependency graph $G_\ents = (\ents , \{(A, B) \mid \Vars(A) \cap \Vars(B) \ne \emptyset\})$ that connects any two events which share at least one variable. Let $d$ be the maximum degree in this graph, i.e., each event $A \in \ents$
shares variables with at most $d$ other events $B \in \ents$. Finally, define $p = \max_{A\in\ents}\Prob{A}$.

The Lov\'asz Local Lemma shows that $\Prob{\cap_{A\in \ents} \bar A} > 0$ (i.e., there is some assignment of variables that does not satisfy any of the bad events) under the LLL criterion that $epd \le 1$. Our algorithmic goal is to \emph{find} such an assignment of the variables (possibly under a stronger criterion).

\end{definition}

We give the following deterministic low-space \MPC algorithm. 

\begin{lemma}
\label{lem:LLL}
Any LLL instance with $d=\log^{o(1)} n$ and $p\le \frac 1C d^{-C}$ (for sufficiently high constant $C$), in which each bad event has $\poly(d)$ dependent variables, and these variables are independent fair random bits, can be solved in $\poly(d)+O(\log\log\log n)$ rounds in deterministic low-space \MPC, using $n^{1+o(1)}$ global space and $n^{\poly(d)}$ local computation.
\end{lemma}

\begin{proof}
We derandomize the \LOCAL algorithm of Fischer and Ghaffari \cite{FG17}. The algorithm consists of an $\tilde O(d)+\log^* n$ initial deterministic coloring procedure from \cite{FHK16}, an $O(d^2)$-round randomized \emph{pre-shattering} part, and then a $\polyloglog(n)$-round deterministic \emph{post-shattering} part (which follows from plugging the network decomposition result of \cite{RG20} into the derandomization framework of \cite{GHK18} applied to the LLL algorithm of \cite{CPS17a}; see \cite{RG20} for details). We first collect the $O(d^2+\polyloglog(n))$-radius ball around each event onto a single machine; these balls are of size $d^{O(d^2+\polyloglog(n))} = n^{o(1)}$, and each event's dependence on its variables can be specified in $2^{\poly(d)}=n^{o(1)}$ bits, so the balls fit on single machines. This collection process can be performed in $O(\log d +\log\log\log n)$ rounds by graph exponentiation.

We can now directly and immediately simulate the initial coloring, and we will also be able to directly simulate the deterministic post-shattering part, so it remains to derandomize the pre-shattering part. We do so using hash functions. First, note that the proof of the Shattering Lemma \cite[Lemma~6]{FG17}, giving the crucial shattering property of the randomized part, requires independence only between the dependent variables of groups of $O(\poly(d)\log n)$ events at a time (the $O(1)$-radius neighborhoods of groups of $O(\log n)$ events). So, it requires variables to be sampled from their distributions with only $(\poly(d) \cdot \log n)$-wise independence.
	
By \Cref{thm:hash}, for any constant $c\in \nat$, there exists a strongly $(n^{-c},\poly(d)\log n)$-wise independent family $\mathcal{H}$ of functions $[\poly(n)]\rightarrow \{0,1\}$ of size $2^{\poly(d)\log n}$, i.e.,  requiring $\poly(d)\log n$ bits to specify each function. We use such a function to generate the values taken by all variables. By taking $c$ to be sufficiently high, we can ensure that the statistical error of the hash functions in negligible. Since the algorithm of \cite{FG17} succeeds with high probability in $n$ when variables are sampled independently at random, we have that using a uniformly random function from $\mathcal{H}$ to provide variable values also causes the algorithm to succeed w.h.p.
	
We can therefore perform a distributed implementation of the method of conditional expectations to deterministically fix a function which causes the algorithm to succeed. \cite{CDP20a,CDP20b} show how to implement this method, in low-space \MPC, in such a way that $\Theta(\log n)$ bits specifying the function can be fixed in a single round, provided success at any node, under any function from $\mathcal H$, can be checked locally on a machine. Here, we can check success locally by simply running the algorithm of \cite{FG17} to completion using the given function, since we have already collected sufficiently large local neighborhoods onto single machines.
	
We therefore require $\poly(d)$ \MPC rounds to perform the method of conditional expectations and derandomize the LLL algorithm. So, the total round complexity is $\poly(d)+O(\log\log\log n)$. The global space usage is dominated by the storage of an $O(d^2+\polyloglog(n))$-radius ball around each node, i.e., $n d^{O(d^2+\polyloglog(n))}= n^{1+o(1)}$ global space. The local computation is dominated by the method of conditional expectations, which requires evaluating all functions from $\mathcal{H}$, taking $n^{\poly(d)}$ computation.
\end{proof}

\subsubsection{Sinkless orientation}
\label{subsec:sinkless-orientation}

In this section, we combine the lifting of the \LOCAL lower bounds to deterministic low-space component-stable \MPC algorithms (\Cref{thm:lb}) with derandomization of the constructive Lov\'asz Local Lemma (Lemma \ref{lem:LLL}) to show that conditioned on the connectivity conjecture, deterministic low-space component-unstable \MPC algorithms for sinkless orientation are provably faster than their component-stable counterparts.

We define sinkless orientation to be the problem of orienting the edges of a graph of minimum degree at least $3$, such that each node has at least one outgoing edge (this minimum degree criterion is necessary, since the problem is not possible, e.g., on a path).

A lower bound for sinkless orientation in the \LOCAL model was first proven by \cite{BFH+16}, and extended to a stronger bound for deterministic algorithms by \cite{CKP19}. When combined with \Cref{thm:lb}, we obtain the following theorem.

\begin{theorem}
\label{thm:sinkless-cs}
Assuming the connectivity conjecture,
there is no deterministic component-stable low-space \MPC algorithm that computes a sinkless orientation in $o(\log\log_\Delta n)$ rounds, even in forests.
\end{theorem}

\begin{proof}
By \cite{BFH+16} and \cite{CKP19}, there is an $\Omega(\log_\Delta N)$ lower bound for deterministic sinkless orientiation even in $\Delta$-regular forests, and this holds even if nodes have unique $O(\log N)$-bit IDs and know the exact values of $n$ and $\Delta$. Since sinkless orientation is an edge-labeling problem, we must work on the line graph of the input graph in order to meet our vertex-labeling framework. We therefore let $\mathcal G$ be the (normal) class of all line graphs of forests, and, setting $T(N,\Delta):= \log_{\Delta}^{1/3}N$ (which is a constrained function) we have an $T(N,\Delta)$-round \LOCAL lower bound for the vertex-labeling version on line graphs. By \Cref{thm:lb}, we obtain an $\Omega(\log(T(N,\Delta))) = \Omega(\log\log_{\Delta} n)$-round lower bound for deterministic low-space \MPC algorithms, conditioned on the connectivity conjecture. Converting back from the line graph formulation to the original input graphs, this lower bound is on the family of forests.
\end{proof}

\Cref{thm:sinkless-cs} is complemented by the following result providing a deterministic low-space component-unstable \MPC algorithm for sinkless orientation that surpasses the component-stable lower bound for low-space \MPC.

\begin{theorem}
\label{thm:sinkless-ub}
There is a deterministic low-space \MPC algorithm that computes a sinkless orientation, in any graph of $\Delta =\log^{o(1)}\log n$ maximum degree, in $\poly(\Delta)+O(\log\log \log n) = o(\log\log_\Delta n)$ rounds, using $n^{1+o(1)}$ global space.
The algorithm is component-unstable, and uses  $n^{\poly(\Delta)}$ local computations.
\end{theorem}

\begin{proof}
We concentrate on the algorithm for the case of $d$-regular graphs with $d>500$; Ghaffari and Su \cite{GS17} show how extend to irregular graphs with minimum degree at least $3$ using $O(\log^*n)$ rounds of deterministic pre-processing and $O(1)$ rounds of deterministic post-processing in \LOCAL, which we can simulate directly in low-space \MPC.

In this case, sinkless orientation can be solved by a single application of the distributed LLL (see e.g. \cite{BFH+16}), with $d=\Delta$ and $p=2^{-\Delta}$ (setting $d>500$ is more than sufficient to make this instance satisfy the LLL criterion of Fischer and Ghaffari's algorithm), and the underlying variables are simple uniformly random choices of orientation for each edge (so can be represented with a single fair random bit). Applying Lemma \ref{lem:LLL}, we obtain a $\poly(\Delta) + O(\log\log\log n)$-round deterministic low-space \MPC algorithm using $n^{\poly(\Delta)}$ local computation.
\end{proof}

We also remark that, for $\Delta\le n^{o(1/\log\log n)}$, an $O(\log\log\log n)$-round component-stable \emph{randomized} algorithm exists for the problem, by simply collecting $\Theta(\log\log n)$-radius balls onto machines via graph exponentiation, and then simulating the randomized \LOCAL algorithm of \cite{GS17}. Sinkless orientiation is therefore an example of a problem with a (conditional) separation between randomized and deterministic algorithms, proving \Cref{thm:complexity-SDet-SRan}

\subsubsection{Edge-coloring}
\label{subsec:coloring}

Similarly to the sinkless orientation problem, the framework combining the lifting of the \LOCAL lower bounds to deterministic low-space component-stable \MPC algorithms (\Cref{thm:lb}) with derandomization of the constructive Lov\'asz Local Lemma (Lemma \ref{lem:LLL}) can be used to show that assuming the connectivity conjecture, for the classical edge-coloring there are deterministic low-space component-unstable \MPC algorithms that are provably faster than their component-stable counterparts. We begin with the following application of \Cref{thm:lb}.

\begin{theorem}
\label{thm:ecoloring-1}
Assuming the connectivity conjecture,
there is no deterministic component-stable low-space \MPC algorithm
that computes a $(2\Delta-2)$-edge coloring, even in forests, in $o(\log \log_\Delta n)$ rounds.
\end{theorem}

\begin{proof}
Chang et al. \cite{CQW+20} give a deterministic \LOCAL lower bound of $\Omega(\log_\Delta N)$ for $(2\Delta- 2)$-edge coloring, even in forests. To fit $(2\Delta- 2)$-edge coloring into our framework (in which problem outputs are labels on nodes) we must define it as vertex-coloring on the line graph. In \LOCAL, operations on the line graph can be simulated in the original graph and vice versa with only one round additive overhead. So, we have an $\Omega(\log_\Delta N)$-round deterministic lower bound for `$(2\Delta- 2)$-coloring the line graphs of forests with maximum degree $\Delta$'.
	
The family of line graphs of forests is both hereditary (deleting a node in the line graph corresponds to deleting an edge in the original graph, and forests are closed under edge deletion) and closed under disjoint union (which corresponds to disjoint union in the original graph), so it is normal under Definition \ref{def:normal-graphs}. We set $T(N,\Delta):=\log^{1/3}_\Delta N$, a constrained function. Then, by Theorem \ref{thm:lb}, we obtain a conditional $\Omega(\log\log_\Delta n)$ deterministic lower bound for $(2\Delta- 2)$-edge coloring forests with $n$ nodes and maximum degree $\Delta$.
\end{proof}

\begin{theorem}
\label{thm:ecoloring-2}
There is a deterministic low-space \MPC algorithm that computes a $(\Delta+\sqrt\Delta\log^3 \Delta)$-edge coloring, in any graph of maximum degree $\Delta =\log^{o(1)}\log n$, in $O(\poly(\Delta)+\log\Delta\log\log\log n)= o(\log\log_{\Delta} n)$ rounds, using $n^{1+o(1)}$ global space. The algorithm is component-unstable and uses $n^{\poly(\Delta)}$ local computation.
\end{theorem}

\begin{proof}
We apply the algorithm of \cite{CQW+20}. Setting $\eps =\frac{\log^3 \Delta}{\sqrt\Delta}$	in Theorem 4 of \cite{CQW+20} gives a running time dominated by $O(\log\Delta)$ applications of LLL, with $d = \poly(\Delta)$ and $p=\Delta^{-\omega(1)}$. Furthermore, the variables in the LLL instances are uniformly random choices of colors from each edge's palette (of some current palette size $P$). These can be generated from $\poly(d)$ fair random bits in such a way that the probability of choosing any particular color differs from $1/P$ by at most $2^{-\poly(d)}$; the probability of each bad event therefore also incurs error of $2^{-\poly(d)}$, and so remains $\Delta^{-\omega(1)}$.
	
Applying Lemma \ref{lem:LLL}, we can perform all applications of LLL in $O(\poly(\Delta)+\log\Delta\log\log\log n)$ rounds in deterministic low-space \MPC with $n^{1+o(1)}$ global space. The algorithm of \cite{CQW+20} then applies a final $5\Delta'$-edge coloring algorithm to finish off the remaining graph (where $\Delta'$ is the maximum degree of this graph); we can do this in $O(\log^* n)$ rounds by simulating Linial's (deterministic) vertex-coloring algorithm \cite{Linial92} on the line graph. The total round complexity is $O(\poly(\Delta)+\log\Delta\log\log\log n)$, the global space is $n^{1+o(1)}$, and the local computation is $n^{\poly(\Delta)}$.
\end{proof}

\subsubsection{Vertex-coloring triangle-free graphs}

A combination of \Cref{thm:lb} and Lemma \ref{lem:LLL} can be used to show a similar complexity gap between deterministic component-stable and component-unstable \MPC algorithms for vertex coloring.

\begin{theorem}
\label{thm:vcoloring-1}
Assuming the connectivity conjecture,
there is no deterministic component-stable low-space \MPC algorithm that computes a $\Delta$-vertex coloring, even in forests, in $o(\log \log_\Delta n)$ rounds.
\end{theorem}

\begin{proof}
Chang, Kopelowitz, and Pettie \cite{CKP19} give an $\Omega(\log_{\Delta} N)$-round deterministic lower bound for $\Delta$-coloring trees (and therefore also the normal class of \emph{forests}) of maximum degree $\Delta$ (for $\Delta\ge 3$). Setting $T(N,\Delta):=\log^{1/3}_\Delta N$ (a constrained function), by Theorem \ref{thm:lb}, we obtain a conditional $\Omega(\log\log_\Delta n)$ deterministic lower bound for $\Delta$-coloring forests with $n$ nodes and maximum degree $\Delta$.
\end{proof}

\begin{theorem}
\label{thm:vcoloring-2}
There is a deterministic low-space \MPC algorithm that computes a $O(\frac{\Delta}{\log\Delta})$-vertex coloring, in any triangle-free graph of $\Delta =\log^{o(1)}\log n$ maximum degree, in $\poly(\Delta) + O(\log\Delta \log\log \log n) = o(\log\log_{\Delta} n)$ rounds, using $n^{1+o(1)}$ global space. The algorithm is component-unstable and uses $n^{\poly(\Delta)}$ local computations.
\end{theorem}

\begin{proof}
We plug in our LLL algorithm into the algorithm of \cite{PS15}. When $\Delta = o(\log n)$, this algorithm consists of $O(k+\log^* n)$ applications of LLL in order to $O(\frac{\Delta}{k})$-color the graph, where $k \le (\frac 14 - o(1))\ln \Delta$. We choose some such $k=\Theta (\log\Delta)$.
	
Similarly to the proof of Theorem \ref{thm:ecoloring-2}, the LLL instances required have $d = \poly(\Delta)$ and $p=\Delta^{-\omega(1)}$, and the variables are uniformly random color choices, which we can generate from $\poly(d)$ fair coins while only increasing the probability of any bad event by $2^{-\poly(d)}$. So, we can apple Lemma \ref{lem:LLL}, and obtain a deterministic low-space \MPC algorithm for $O(\frac{\Delta}{\log\Delta})$-coloring triangle-free graphs in $O(\poly(\Delta) +\log\Delta \log\log \log n)$ rounds, using $n^{1+o(1)}$ global space, with $n^{\poly(\Delta)}$ local computation.
\end{proof}

For all the problems above, sinkless orientation, edge-coloring, and vertex-coloring, we have obtained component-unstable algorithms which surpass the conditional lower bounds for component-stable algorithms when $\Delta = \log^{o(1)}\log n$. Furthermore, though in general these algorithms use heavy local computation, for bounded degree ($\Delta= O(1)$) graphs their local computation is $\poly(n)$. Since we still surpass the lower bounds for sufficiently large constant $\Delta$, this demonstrates that \emph{component-instability helps for deterministic algorithms even using polynomial computation}.

\subsection{Extendable algorithms}
\label{subsec:examps-complexity-separation}

We next describe an explicit derandomization recipe for a particular class of local algorithms. This recipe allows one to derandomize $r$-round local algorithms within $O(\log r)$ low-space \MPC rounds, provided that the $r$-radius ball of each node in the graph $G$ fits the local space of the machines. As a consequence, we show component-unstable algorithms for maximal independent set and maximal matching that surpass the lower bounds for component-stable algorithms.

We call the class of \LOCAL algorithms we consider \emph{extendable}. Roughly speaking, these algorithms can extend any partial legal solution (e.g., a collection of legally colored nodes) into a complete solution (similar to the notion of \emph{greedy} algorithms). In addition, the local computation performed at each node in these algorithms must be polynomial. Even though the \LOCAL model does not account for the local computation time, most of the \LOCAL algorithms are in fact efficient in this regard. We next define this notion more formally.

\begin{definition}[\textbf{Extendable algorithms}]
\label{def:local-extend}
Let $\cA$ be a randomized \LOCAL algorithm for an \LCL problem $\mathcal{P}$ with round complexity at most $T(n,\Delta)$ (where $T$ is non-decreasing in $n$ and $\Delta$) on every $n$-node graph, when provided with the exact value of $n$ and maximum degree $\Delta$. Then, $\cA$ is \textbf{extendable} if:
\begin{enumerate}[(i)]
\item Any partially decided subgraph can be extended into a global solution for $\mathcal{P}$. Formally, $\cA$ returns an output labeling on $G$ giving each node a label in $L\cup \{\bot\}$, where $L$ is the output alphabet permitted by the problem $\mathcal{P}$. Then, re-labeling the nodes labeled $\bot$ with any valid output labeling on their induced graph must give a valid output labeling on $G$. This must hold with certainty (even though $\cA$ is a randomized algorithm).
\item In expectation, $\cA$ labels less than $\frac 12$ nodes $\bot$.	
\item In every round $i$ of $\cA$, each node $u$ performs at most $\Delta^{O(T(n,\Delta))}$ internal computation to determine its output for that round (e.g., the messages to be sent in the next round, internal states, and its final output in the last round).
\end{enumerate}
\end{definition}

We next show that using the PRG construction of Lemma \ref{lem:prg-alg} any extendable \LOCAL algorithm running in $t=T(n,\Delta)$ rounds can be simulated deterministically within $O(\log t)$ \MPC-rounds. This has various applications for derandomizing \LOCAL algorithms on low-degree graphs, which consequently yields a separation between stable and unstable deterministic \MPC algorithms. For simplicity, we consider \LOCAL algorithms for \LCL problems, however, this can be extended for the approximation variants of \LCL problems (e.g., approximate max-IS and maximal matching).

\begin{theorem}
\label{thm:derand-extend}
For every constant $\sparam \in (0,1)$, there is some constant $\gamma$ such that any $T(n,\Delta)$-round extendable local algorithm $\cA$ satisfying that $\Delta^{\gamma \cdot T(n,\Delta)} \leq n^{\sparam}$, in which each node uses at most $\Delta^{O(T(n,\Delta))}$ bits of randomness, can be derandomized\footnote{The deterministic \MPC algorithm in the derandomization relies on the PRG construction in Lemma \ref{lem:prg-alg}, and therefore it may perform \emph{heavy local computations}, or alternatively, this result can be stated as a \emph{non-uniform derandomization} where the PRG function is hard-coded in the \MPC machines, in which case local computation is polynomial.} within $O(\log(T(n,\Delta))+\log^*n)$ \MPC rounds with local space $O(n^\sparam)$ and global space $O(n^{1+\sparam})$.
\end{theorem}

\begin{proof}
Consider a fixed $n$-node graph $G=(V,E)$, with maximum degree $\Delta$, and let $t:=T(n,\Delta)$. First, the \MPC algorithm allocates a separate machine $M_u$ to each node $u$ that stores its $2t$-radius ball in $G$. This can be done in $O(\log t)$ rounds, by the standard graph exponentiation technique. Next, the algorithm computes a $\Delta^{4t}$-coloring in the graph $G^{2t}$. The purpose of this step is to reduce the name space of the nodes from $O(\log N)$ bits into $O(t \log \Delta)$ bits, such that in each $2t$-radius ball, the new IDs of the nodes (i.e., their colors) are unique. This coloring can be implemented within $O(\log^* N)$ deterministic rounds \cite{Kuhn09}.
	
We will simulate $O(1)$ iterations of algorithm $\cA$, each time running on the subset of nodes labeled with $\bot$ in the last iteration. That is, we start on our input graph $G$, and provide $\cA$ with the values $n$ and $\Delta$, and in subsequent iterations provide the current values $n^*\le n$ and $\Delta^* \le \Delta$. By extendability of $\cA$, we know that the resulting labeling can be extended to a valid full solution by relabeling $\bot$ with any valid labeling on their induced graph. It remains only to show that after $O(1)$ iterations, we can ensure that \emph{no} nodes remain labeled $\bot$.
	
In each iteration of $\cA$ we let machine $M_u$ determine the output for node $u$, and we use the PRG construction of Lemma \ref{lem:prg-alg} in order to deterministically fix a good random seed for the algorithm. We need to provide each node with $\Delta^{O(t)}$ random bits, and we do so based on the new IDs (i.e., nodes with the same new IDs receive the same random bits, but since these nodes are of distance at least $2t$ apart, this does not cause any dependency issues). So, our PRG will need to generate $\Delta^{O(t)}$ total pseudorandom bits.
	
By Lemma \ref{lem:prg-alg}, a $(\Delta^{O(t)}, n^{-\eps})$ PRG can be constructed using $e^{O(t\log\Delta+\eps\log n)}$ space. Setting $\eps$ to be a sufficiently small constant, this is $O(n^\sparam)$. Note that we always use this PRG (i.e., with parameters in terms of original number of nodes $n$), and do not update to the current value $n^*$.
	
When we run $\cA$, with a uniformly random seed from the PRG, on a graph of size $n_*\le n$ (since we run only on the induced graphs of subsets of nodes labeled $\bot$), the output at each node is $(\Delta^{O(t)}, n^{-\eps})$ indistinguishable from its output under full randomness. So, the expected number of nodes which output $\bot$ is at most $\frac 12+n_*\cdot n^{-\eps}$. After $O(1/\eps)$ iterations we therefore reduce the expected number of nodes labeled $\bot$ below $1$.
	
Now it remains only to \emph{deterministically} choose a seed from the PRG that achieves this expected value of undecided nodes (and, since the number of undecided nodes is an integer, it must then be $0$). To do so, we use a distributed implementation of the classical method of conditional expectations, a means of having all machines agree on some seed which achieves at most the expected value of some cost measure (in this case, the number of nodes labeled $\bot$). \cite{CDP20a,CDP20b} show how to implement this method, in low-space \MPC, in such a way that $\Theta(\log n)$ bits specifying the seed can be fixed in a single round, provided that the global cost measure is the sum of functions computable locally on single machines. Here, each machine $M_u$ can locally compute the indicator variable for the event $\{\text{$u$ is labeled $\bot$}\}$ under $\cA$ using any particular seed, and the global cost function is the sum (over all nodes $u\in V$) of these variables.
	
Since the seeds for the PRG are $O(t\log\Delta+\eps\log n) = O(\log n)$ bits long, we can perform the method of conditional expectations in $O(1)$ rounds in each iteration, fixing a globally agreed seed from our PRG which ensures that the number of nodes labeled $\bot$ is at most its expectation. Then, after $O(\frac{1}{\eps}) = O(1)$ iterations, we have no remaining nodes labeled $\bot$, and so have a complete valid solution for the problem. The total running time of the deterministic \MPC algorithm is therefore $O(\log t)$ (for initially collecting balls) plus $O(\log^* n)$ (for coloring).
\end{proof}


\subsubsection{Application to maximal independent set and maximal matching}
\label{subsubsec:MIS-MM}

To demonstrate the applicability of this derandomization recipe, we show how it can be used to improve the deterministic running times of two cornerstone problems in low-space \MPC: maximal independent set and maximal matching. The best prior round complexities for both problems is $O(\log \Delta+\log\log n)$ \cite{CDP20a}; here we improve the dependency on $n$ to triple-logarithmic, and surpass the component-stable lower bound for $\Delta=2^{\log^{o(1)} n}$.

\begin{theorem}
\label{thm:MIS}
For any constant $\sparam > 0$, a maximal independent set and maximal matching can be found deterministically in low-space \MPC in $O(\log\log\Delta + \log\log\log n)$ rounds when $\Delta=2^{\log^{o(1)} n}$, with local space $O(n^\sparam)$ and global space $O(n^{1+\sparam})$.
\end{theorem}

\begin{proof}
We focus on maximal independent set, since the results for maximal matching will then follow by reduction.
	
We will use the \LOCAL algorithm of Ghaffari \cite{Ghaffari16}, which, when combined with the polylogarithmic network decomposition of \cite{RG20}, runs in $t = O(\log\Delta + \polyloglog(n))$ rounds, on graphs with $n$ nodes, when provided with the values $n$ and $\Delta$, succeeding (globally) with probability at least $1-\frac {1}{n^2}$. We use the following final labeling: nodes which have been placed in the output independent set are labeled $\mathbb{IN}$, adjacent nodes are labeled $\mathbb{OUT}$, and all other nodes are labeled~$\bot$.
	
Ghaffari's algorithm has the following important property: even when the algorithm probabilistically fails, it never places two adjacent nodes in the output independent set; instead it merely fails to decide the status of some nodes. Therefore, any output labeling produced by Ghaffari's algorithm in this way is extendable to a full solution by any valid output (i.e., a valid MIS) on the induced graph of undecided nodes (those labeled $\bot$).
	
Another property we need is a bound on the number of random bits used by each node. Ghaffari's algorithm uses $O(t \log \Delta)$ random bits per node ($O(\log \Delta)$ per round $i$), since a node $v$'s only random choice each round is to `mark' itself with some probability $p_t(v)$, which is equal to $2^{-k_i}$ for some $k_i\in [\lceil \log \Delta \rceil]$. To perform this choice, $v$ can take $k_i$ random bits and mark itself if they are all $0$.
	
Furthermore, since with probability at least $1-\frac {1}{n^2}$ the algorithm succeeds globally (i.e., labels no nodes $\bot$), the \emph{expected} number of nodes labeled $\bot$ is at most $ \frac{n}{n^2}<\frac 12$.
	
Therefore the algorithm is extendable. We have $\Delta^{O(t)} = 2^{\log^{o(1)} N} \le N^{\sparam}$, and so we can apply Theorem \ref{thm:derand-extend} to solve MIS in $O(\log t) = O(\log\log\Delta +\log\log\log n)$ rounds deterministically in \MPC with local space $O(n^\sparam)$ and global space $O(n^{1+\sparam})$.
	
To perform Maximal Matching, we use standard reduction of finding an MIS on the line graph of the input graph. It is well-known that this corresponds to a maximal matching in the original graph, and in \LOCAL only increases the round complexity by $1$, which means that we still adhere to our \MPC space bounds.	
\end{proof}

\begin{corollary}
\label{cor:thm:MIS}
For any constant $\sparam > 0$, a maximal independent set and maximal matching can be found deterministically in low-space \MPC in $O(\log \Delta + \log\log\log n)$ rounds, with local space $n^\sparam$ and global space $n^{1+\sparam}$.
\end{corollary}

\begin{proof}
We use Theorem \ref{thm:MIS} when $\Delta=2^{\log^{o(1)} n}$; otherwise, we can use the low-space \MPC algorithm of \cite{CDP20a}, with running time $O(\log \Delta+\log\log n) = O(\log \Delta)$.
\end{proof}

\begin{theorem}
\label{thm:MIS-MM-1}
Assuming the connectivity conjecture,
there is no deterministic low-space component-stable \MPC algorithm that computes a maximal matching or maximal independent set in $o(\log\Delta +\log\log n)$ rounds.
\end{theorem}

\begin{proof}
Balliu et al. \cite{BBHORS19} show an $\Omega(\min\{\Delta,\frac{\log N}{\log\log N}\})$-round\footnote{That is, a deterministic \LOCAL algorithm cannot simultaneously have round complexity $o(\Delta)$ and $o(\frac{\log N}{\log\log N})$; algorithms do exist which are faster in one parameter at the expense of the other.} deterministic \LOCAL lower bound for both problems, even when exact values of $n$ and $\Delta$ are known. Using constrained function $T(N,\Delta) = \sqrt{\min\{\Delta,\log N\}}$, we apply Theorem \ref{thm:lb} to obtain an $\Omega(\min\{\log \Delta,\log \log N\})$-round conditional lower bound for component-stable low-space deterministic \MPC algorithms, (directly for maximal independent set, and using the standard conversion to the line graph for maximal matching).
\end{proof}

Hence, Theorem \ref{thm:MIS} surpasses this component-stable conditional lower bound when $\Delta=2^{\log^{o(1)} n}$.

\hide{
	\subsubsection{Maximal matching and maximal independent set}
	\label{subsec:MIS-MM}
	
	In this section we show that existing efficient \LOCAL algorithms for the cornerstone problems of maximal matching and maximal independent set are \emph{extendable}, and we can therefore apply Theorem \ref{thm:derand-extend} to obtain efficient algorithms in deterministic low-space \MPC, which surpass the lower bounds for component-stable algorithms.

	\begin{theorem}
		\label{thm:MIS-MM-2}
		For any $\alpha>0$ there is a deterministic $(\alpha,1+\alpha)$-\MPC algorithm that computes a maximal matching or maximal independent set, in any graph of maximum degree $\Delta=  2^{O(\alpha\sqrt{\log n})}$, in $O(\log\log\Delta + \log\log\log n)$ rounds.
	\end{theorem}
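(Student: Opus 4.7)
The plan is to derandomize a state-of-the-art randomized \LOCAL algorithm for MIS/MM using the pseudorandom-generator (PRG) based recipe developed in \Cref{subsec:examps-complexity-separation}, coupled with graph exponentiation. The starting point is a randomized \LOCAL algorithm for MIS (resp.\ MM, via the standard line-graph reduction) with round complexity $r(n,\Delta) = O(\log\Delta + \log\log n)$, for instance Ghaffari's shattering-based algorithm combined with the Rohzo\v{n}--Ghaffari polylogarithmic network decomposition. The target is to simulate it deterministically in $O(\log r + \log^* n)$ \MPC rounds.

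Next, I would perform graph exponentiation to collect, onto a single machine, the $O(r)$-radius ball around each node. Each such ball has at most $\Delta^{O(r)}$ nodes, and the hypothesis $\Delta = 2^{O(\alpha\sqrt{\log n})}$ yields
\begin{align*}
\Delta^{O(r)} = 2^{O(\alpha\sqrt{\log n}\,\cdot\,(\log\Delta + \log\log n))} = 2^{O(\alpha \log n)} = n^{O(\alpha)} \,,
\end{align*}
which fits in local space $n^{\alpha}$ after a constant rescaling of $\alpha$. Standard graph exponentiation doubles the collected radius per round, so this takes $O(\log r) = O(\log\log\Delta + \log\log\log n)$ \MPC rounds, and uses $O(n^{1+\alpha})$ global space (each node stores its own ball).

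Third, I would invoke the PRG-based derandomization: since each node locally holds all the input on which its output depends, replace the (unbounded) random seed of the \LOCAL algorithm by the output of a PRG of seed length $\polylog(n)$ fooling the algorithm's probabilistic analysis, and search for a seed whose induced output is a valid MIS/MM. This search is executed either by brute force across the $\poly(n)$ machines (when the PRG's seed space is $\poly(n)$) or, more generally, via the method of conditional expectations; the recipe of \Cref{subsec:examps-complexity-separation} shows this step can be completed in $O(\log^* n)$ additional \MPC rounds, using the heavy local computation allowance. Validity of a candidate seed is checked locally by each node using its collected ball and aggregated in $O(1)$ further rounds. Summing the two contributions yields $O(\log r + \log^* n) = O(\log\log\Delta + \log\log\log n + \log^* n) = O(\log\log\Delta + \log\log\log n)$, since $\log^* n \le \log\log\log n + O(1)$ for sufficiently large $n$, and the space budget is exactly $(\alpha, 1+\alpha)$ as claimed.

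The main obstacle is verifying that the chosen randomized \LOCAL MIS/MM algorithm fits the \emph{extendable} class required by the PRG recipe --- concretely, that its correctness analysis depends only on events whose joint distribution is fooled by a PRG of $\polylog(n)$ seed length, rather than on full mutual independence of per-node coin flips. The shattering step of Ghaffari's algorithm, where random priorities are used to kill half-neighborhoods, has been handled in prior derandomization work using bounded-independence hash families, so the main care is in combining these local fooling arguments across the $r$ iterations without the seed length blowing up past what the local space can accommodate and without letting the seed-search phase exceed the stated $O(\log^* n)$ rounds.
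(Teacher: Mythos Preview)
Your proposal is correct and follows essentially the same approach as the paper's own (sketch) proof: collect $O(r)$-radius balls via graph exponentiation with $r = O(\log\Delta + \poly\log\log n)$, then derandomize Ghaffari's MIS algorithm using a PRG together with the method of conditional expectations, and handle maximal matching via the line-graph reduction. The only minor difference is that the paper's sketch argues directly that $O(1)$ iterations of seed selection (each reducing the number of surviving nodes by an $n^{\Omega(1)}$ factor) suffice, whereas you invoke the general $O(\log r + \log^* n)$ recipe and then absorb $\log^* n$ into $\log\log\log n$; both routes give the stated bound.
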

	
	\begin{proof}
		To do - We begin by collecting balls of radius $O(\log\Delta+\log^6\log n)$, via graph exponentiation. We then perform Ghaffari's MIS algorithm, using the method of conditional expectations on our PRG. The PRG guarantees that each node is removed (either by joining the MIS or by one of its neighbors doing so) with probability at least $n^{-\delta}$. By the method of conditional expectations, we can therefore choose a seed which reduces the number of nodes in the graph by an $n^{-\delta}$ factor. Repeating this process $\frac 1\delta = O(1)$ times completes MIS.
		
		We can perform maximal matching in the same asymptotic round complexity by simulating this algorithm on the line graph of the input graph.
	\end{proof}
} 


\section{Separation between Stable and Unstable Randomized \MPC}
\label{sec:stable-nonstable-rand}


\hide{The framework developed in \Cref{sec:prelim} and culminating in \Cref{thm:lb} (see also \cite{GKU19}) is a host of conditional lower bounds for component-stable \MPC algorithms for a number of natural graph problems. The restriction of an \MPC algorithm being component-stable seems like an unimportant artifact of this framework, and in fact, as mentioned in Introduction, Ghaffari et al.\ \cite{GKU19} wrote \emph{``To the best of our knowledge, all known algorithms in the literature are component-stable or can easily be made component-stable with no asymptotic increase in the round complexity.''}}

In this section, we demonstrate the existence of a natural problem for which there is a (conditional) gap between \emph{randomized} component-stable and component-unstable algorithms. 
This will give us a proof of \Cref{thm:approx-IS}.

We consider the task of computing large independent sets in $n$-node graphs with maximum degree $\Delta$. Recently, \cite{KKSS19} has shown that for any $n$, there exists $n$-node graphs with maximum degree $\Delta = \Omega(n/\log n)$, for which any randomized \LOCAL algorithm for computing an independent set of size $\Omega(n/\Delta)$ with success probability $1-\frac 1n$  (in fact, even reaching a weaker success guarantee of $1-\frac {1}{\log n}$), requires $\Omega(\log^*n)$ rounds. (Here, and throughout this section, we assume that $\Delta\ge 1$ so that the problem is well-defined.) Since the cardinality of the maximum independent set in their lower bound graph can be bounded by $O(n/\Delta)$, their result can also be stated as a lower bound for computing a constant approximation for the maximum independent set. We start by providing a mild adaptation to the lower bound proof of \cite{KKSS19} so that it would fit the framework from \Cref{sec:prelim} and from \cite{GKU19}. In particular, to be able to lift this \LOCAL lower bound into the \MPC setting from \Cref{thm:lb} in \Cref{sec:prelim} and from \cite{GKU19}, it is required for the lower bound to hold even if nodes are equipped with shared randomness, and with an estimate $N$ on the number of nodes $n$ which is at least more than a $\log N$-factor loose, i.e., nodes cannot distinguish between $n=\Theta(N/\log N)$ and $n=\Theta(N)$. We begin with the following fact implicit in Corollary~V.4 of \cite{GKU19}.

\begin{fact}[\cite{GKU19}]
\label{fc:MIS-ring-LB}
Any randomized \LOCAL algorithm to compute a maximum independent set in $n$-node cycle graph requires $\Omega(\log^*n)$ rounds; this holds even if the nodes know $n$ (and, naturally, know $\Delta = 2$) and even if they have access to an unlimited amount of shared randomness.
\end{fact}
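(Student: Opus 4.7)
The plan is to prove the statement in two stages: first establish the $\Omega(\log^* n)$ lower bound for deterministic \local algorithms via a reduction to $3$-coloring cycles, and then extend it to shared randomness using a standard averaging argument.

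First, I would observe that for even $n$, any maximum independent set in an $n$-cycle has size exactly $n/2$ and must alternate around the cycle; assigning one label to IS members and a different label to non-members therefore produces a proper $2$-coloring, and in particular a proper $3$-coloring, of the $n$-cycle. Thus a $t$-round \local algorithm computing a maximum IS on even cycles yields a $t$-round \local algorithm for $3$-coloring $n$-cycles, so Linial's classical $\Omega(\log^* n)$ lower bound for $3$-coloring cycles transfers directly. For odd $n$ the argument is essentially the same at the cost of one extra round: a maximum IS of size $(n-1)/2$ produces a $2$-coloring everywhere except on the unique edge whose endpoints are both outside the IS, and this single local defect can be repaired by recoloring one of the two offending endpoints with a third color in one additional \local round.

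Second, I would extend the bound to shared randomness following the minimax-style argument from \cite[Section~V]{GKU19}. Suppose for contradiction that there is a shared-randomness $t$-round algorithm $\mathcal{A}$ with $t = o(\log^* n)$ that succeeds with probability at least $1 - \tfrac{1}{n}$ on every input. Consider the uniform distribution $\mathcal{D}$ over legal assignments of distinct identifiers from $[\poly(n)]$ to the $n$ nodes of the cycle. Averaging the success guarantee over the shared random seed produces a specific seed $s^{*}$ for which the induced deterministic algorithm $\mathcal{A}_{s^{*}}$ is correct on at least a $1 - \tfrac{1}{n}$ fraction of inputs drawn from $\mathcal{D}$. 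Combined with the reduction to $3$-coloring from the first stage, this would contradict a density version of Linial's lower bound, asserting that any $o(\log^* n)$-round deterministic $3$-coloring algorithm must err on at least a constant fraction of uniformly random ID-labeled $n$-cycles.

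The hardest step is this last one: upgrading Linial's lower bound from the mere existence of a single bad input to a \emph{density} statement about random ID assignments. The approach I would take is to revisit the Ramsey-theoretic core of Linial's neighborhood-graph argument and observe that the chromatic number of the iterated $t$-neighborhood-view graph remains at least $4$ even when restricted to views arising from most ID assignments in a polynomial-size universe; consequently a uniformly random ID-labeled cycle lands in a "chromatically bad" configuration with constant probability, so any deterministic $o(\log^* n)$-round $3$-coloring algorithm fails on a constant fraction of inputs from $\mathcal{D}$. Precisely such a robust, shared-randomness-compatible form of the lower bound is what \cite[Section~V]{GKU19} spells out for several classical \local lower bounds (including the $3$-coloring lower bound for cycles), so once their lifting is invoked as a black box the remainder of the argument is mechanical.
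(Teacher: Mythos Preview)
The paper does not supply its own proof of this fact; it simply records it as implicit in Corollary~V.4 of \cite{GKU19} and uses it as a black box. Your two-stage sketch is a faithful reconstruction of the standard argument that sits behind that corollary: the reduction from maximum independent set on cycles to $3$-coloring is correct (including the one-round local repair for odd $n$), and the Yao-style averaging over the shared seed is the right mechanism for passing from a shared-randomness algorithm to a deterministic algorithm that succeeds on most inputs under a fixed input distribution. You also correctly isolate the only genuinely nontrivial ingredient, namely the distributional strengthening of the Linial/Naor bound asserting that every $o(\log^* n)$-round deterministic $3$-coloring procedure must fail on a constant fraction of uniformly random ID-labeled cycles, and you ultimately appeal to \cite{GKU19} for it. That is precisely the dependence the paper has as well, so your proposal and the paper agree both in content and in where they lean on the cited reference; you have simply unpacked one more layer than the paper chose to.
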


We can then plug this lower bound, strengthened to hold against shared randomness, into the result of \cite{KKSS19}, yielding \Cref{thm:approxIS-LB}. The proof is identical to Theorem 4 of \cite{KKSS19}; it is unaffected by the change to shared randomness.

\begin{theorem}[\cite{KKSS19}]
\label{thm:approxIS-LB}
Any randomized \LOCAL algorithm to compute an independent set of size $\Omega(n/\Delta)$ in all $n$-node graphs (i.e., over the full range of $\Delta\in [1,n]$) requires $\Omega(\log^* n)$ rounds. This holds even if the nodes have the exact maximum degree $\Delta$, access to an unlimited amount of shared randomness, and an input size estimate $N$ of $n$ which is more than a $\log N$-factor loose, i.e., they cannot distinguish between $n=\Theta(N/\log N)$ and $n=\Theta(N)$.
\end{theorem}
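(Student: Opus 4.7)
The plan is to follow Theorem~4 of \cite{KKSS19} verbatim, reducing the task of computing an $\Omega(n/\Delta)$-sized independent set on arbitrary graphs to the task of computing a constant-factor approximate maximum independent set on a cycle, and then invoking Fact~\ref{fc:MIS-ring-LB}. The crucial observation, which is exactly what the ``proof is identical'' remark captures, is that this reduction is a purely local graph-theoretic transformation: it is insensitive to whether the nodes use private or shared randomness, and also to what the nodes are told about $n$, so the strengthened cycle lower bound lifts through it unchanged.

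Concretely, given $n$ and $\Delta\in[1,n]$, form a \emph{cluster graph} $G$ by taking a cycle $C_k$ on $k=\lfloor n/\Delta\rfloor$ vertices, replacing each cycle vertex by a clique on $\Delta$ vertices, and replacing each cycle edge by a complete bipartite graph between the two corresponding cliques. Then $G$ has $\Theta(n)$ vertices and maximum degree $\Theta(\Delta)$. Since each cluster is a clique, any IS in $G$ selects at most one vertex per cluster, and since consecutive clusters are completely joined, no IS selects vertices from two adjacent clusters. Consequently, every IS in $G$ projects to an IS of the underlying cycle, the maximum IS of $G$ has size exactly $\lfloor k/2\rfloor=\Theta(n/\Delta)$, and any $\Omega(n/\Delta)$-sized IS in $G$ descends to a constant-factor approximate MIS on $C_k$.

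Suppose for contradiction that a shared-randomness \LOCAL algorithm $\cA$ solves the $\Omega(n/\Delta)$-IS problem in $o(\log^* N)$ rounds under a $\log N$-loose size estimate. I would have each node of $C_k$ locally materialize its $\Delta$-clique, assigning component-unique IDs via pairs $(\textsf{cycleID},j)$ with $j\in[\Delta]$ encoded in a polynomial range, and have adjacent cycle nodes jointly simulate the $\Delta^2$ bipartite edges between their clusters in $O(1)$ real rounds per simulated round, passing the shared random tape through verbatim. The simulation runs in $o(\log^* N)$ rounds and outputs a constant-approximate MIS of $C_k$. The main (and only) delicate point is the bookkeeping around the size estimate: one must let $k$ (equivalently $n=k\Delta$) range over $[\Theta(N/\log N),\Theta(N)]$ so that $N$ remains a valid $\log N$-loose estimate for every instance produced by the reduction, and verify that $\log^* k=\log^* N-O(1)$ throughout — which is immediate since $\log^*$ is invariant under polylogarithmic perturbations of its argument. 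The resulting $o(\log^* k)$-round shared-randomness \LOCAL algorithm for constant-approximate MIS on $C_k$ contradicts Fact~\ref{fc:MIS-ring-LB}, and nothing else in the argument is affected by the switch from private to shared randomness.
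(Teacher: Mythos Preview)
Your overall approach---the cycle-of-cliques reduction from \cite{KKSS19}---is exactly right, and you correctly observe that the simulation is oblivious to shared versus private randomness. However, there is a genuine gap in the final step.

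Fact~\ref{fc:MIS-ring-LB} is stated for the \emph{exact} maximum independent set on a cycle, not for a constant-factor approximation. Your reduction only delivers an independent set of size $\Omega(k)$ on $C_k$, and that does not by itself contradict Fact~\ref{fc:MIS-ring-LB}. The missing ingredient (which the hidden proof sketch in the paper, following \cite{KKSS19}, calls ``the key challenge'') is to upgrade this approximate IS to an exact MIS by \emph{filling in the gaps} between consecutive selected nodes. For that you must first bound the maximum gap by $O(T(N))$, and this is precisely where the indistinguishability argument enters: since $\mathcal{B}$ runs for $T(N)$ rounds, a cycle node whose $T(N)$-ball contains no selected cluster cannot distinguish the full cycle-of-cliques $C_1$ from a short cycle of only $\Theta(T(N))$ cliques; but $\mathcal{B}$ must still output an $\Omega(n'/\Delta)$-sized IS on the short graph, forcing an IS node in every $O(T(N))$-window.

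This indistinguishability step is also where the $\log N$-looseness of the size estimate is actually used (not in the bookkeeping you describe): the same estimate $N$ must be simultaneously valid for both the long cycle-of-cliques and the short $\Theta(T(N))$-length one. This is what forces the specific regime $\Delta = \Theta(N/\log N)$, $k = \Theta(\log N)$, and explains why the lower bound fails if the algorithm is handed the exact value of $n$. Your proposal gestures at the estimate bookkeeping but misses this role entirely. Once the gap bound is in place, filling in the gaps takes $O(T(N))$ additional rounds and yields an exact MIS on $C_k$, at which point Fact~\ref{fc:MIS-ring-LB} applies.
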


\hide{
\begin{proof}[Proof Sketch]
The lower bound of \cite{KKSS19} is based on the $\Omega(\log^* n)$ lower bound for maximum independent set for the ring by Naor \cite{Naor91}. Since it is unclear if the lower bound in \cite{Naor91} also holds with shared randomness, instead we will be using the lower bound from Fact \ref{fc:MIS-ring-LB}. For the sake of completeness, we sketch that argument of \cite{KKSS19}.

Let $\mathcal{B}$ be a randomized algorithm for computing an independent set of size at least $n/(c\Delta)$ with probability $1-\frac 1N$ and using $T(N)$ rounds, when given input size estimate $N$ with $n\le N \le poly(n)$, and when all nodes have random IDs in $[poly(N)]$. We will show that this implies that there is a $O( T(n_0))$-round algorithm $\cA$, that with success probability at least $1 -\frac {1}{n_0}$ outputs a maximum independent set on $n_0$-node cycle graph, when provided with the exact value of $n_0$ (and maximum degree $\Delta = 2$).

Let $C$ be a cycle graph of $n_0$ nodes with random IDs, and let $C_1$ be a cycle of $n_0$ many $n_1$-cliques, for some $n_1$ with $\sqrt{n_0}\le n_1 \le n_0$, where neighboring cliques are connected as bicliques. Algorithm $\cA$ computes a maximum independent set on $C$ in the following manner. First, each node in $C$ simulates an $n_1$-clique (i.e., the overall simulated graph is $C_1$). Each node $v$ with ID $i_v$ uses a distinct set of $n_1 \cdot O(\log n)$ bits in the shared seed in order to assign the $n_1$ nodes in its simulated clique random IDs in $[poly(n_0,n_1)]$. At this point, the nodes in $C_1$ have random IDs, which are unique with high probability. Next, algorithm $\cA$ applies algorithm $\mathcal{B}$ to compute a large independent set $I'$ in the simulated graph $C_1$, providing it with input size estimate $N= c\cdot n_0\cdot n_1$. Since there is a one-to-one mapping between independent sets in $C_1$ and in $C$, this yields an independent set $I''$ in $C$. The maximum independent set computation is then completed by filling in the gaps between incident $I''$ nodes.

The key challenge is in showing that the round complexity of the algorithm is $O(T(N))$. This claim is shown by exploiting the fact that $\mathcal{B}$ is an $T(N)$-round distributed algorithm, and its output at a node $v$ must be the same whether the overall graph is $C_1$ or a $O(T(N))$-length cycle of $n_1$-cliques, since these cases are indistinguishable to $\mathcal B$ at $v$. The input size estimate $N$ satisfies both $n_0\cdot n_1 \le N \le poly(n_0\cdot n_1)$ and $T(N)\cdot n_1 \le N \le poly(T(N)\cdot n_1)$, so would be valid in either case. (The indistinguishability argument would fail, however, if $\mathcal{B}$ knew the exact number of nodes in its input graph).

By the approximation guarantee of $\cA$ it then holds that the $O(T(N))$-neighborhood of each node must contain at least one marked node (see \cite{Naor91}), and thus the gap between marked independent set nodes (output by Alg. $\mathcal{B}$) is indeed bounded by $O(T(N))$, as required. $\cA$ can then fill in these gaps and reach maximum independent set as required, in $O(T(N))$ rounds, with success probability at least $1-\frac{1}{T} \ge 1-\frac{1}{n_0}$.
\end{proof}}

\hide{This seems incomplete - it's not clear to me where the the $\Delta = \Theta(n/\log\log n)$ from the statement comes from}

\hide{
\begin{proof}[Proof Sketch]
	The lower bound of \cite{KKSS19} is based on the $O(\log^* n)$ lower bound for maximum independent set for the ring by Naor \cite{Naor91}. Since it is unclear if the lower bound in \cite{Naor91} also holds with shared randomness, instead we will be using the lower bound from Fact \ref{fc:MIS-ring-LB}. For the sake of completeness, we sketch that argument of \cite{KKSS19}.
	
	Let $\mathcal{B}$ be a randomized algorithm for computing an independent set of size at least $n/(c\Delta)$ with probability $1-\frac 1N$ and using $T(N)$ rounds, when all nodes have random IDs. We will show that this implies that for every integer $n_1$, there is a $O(c \cdot T(n_0 \cdot n_1))$-round algorithm $\cA$ that with probability of $1 - n_0 \cdot p(n_1)$ outputs a maximum independent set on $n_0$-node cycle graph.
	
	Let $C$ be a cycle graph of $n_0$ nodes with random IDs, and let $C_1$ be a cycle of $n_0$ many $n_1$-cliques where neighboring cliques are connected as bicliques. Algorithm $\cA$ computes a maximum independent set on $C$ in the following manner. First, each node $v$ with ID $i_v$ uses a distinct set of $n_1 \cdot O(\log n)$ bits in the shared seed in order to assign the $n_1$ nodes in its clique random IDs. At this point, the nodes in $C_1$ have random IDs. Next, algorithm $\cA$ applies algorithm $\mathcal{B}$ to compute a large independent set $I'$ in $C_1$. Since there is a one-to-one mapping into independent sets in $C$, this yields an independent set $I''$ in $C$. The maximum independent set computation is then completed by filling in the gaps between incident $I''$ nodes. The key challenge is in showing that the round complexity of the algorithm is $O(c \cdot T(n_0 \cdot n_1))$. This claim is shown by exploiting the fact that $\mathcal{B}$ is an $T(n_0 \cdot n_1)$-round distributed algorithm, and thus a node $v$ running $\cA$ cannot distinguish between $C_1$ and a $O(c \cdot T(n_0 \cdot n1))$-length cycle of cliques containing $v$, so long as the input size estimate $N$ we provide satisfies both $n\le N\le n^k$ and $n\le N\le n^k$ (which we can ensure, if $k$ is a sufficiently large constant). It is crucial to observe that this indistinguishability argument still holds even if the nodes of $C$ are given a polynomial estimate on $n_0$ (but it would fail upon knowing exactly $n$), and with shared random seed. Specifically, since $n_1$ is fixed, the nodes that simulate $\cA$ have an estimate of $\poly(n_0) \cdot n_1$ on the size of $C_1$ and therefore the algorithm indeed cannot distinguish between these cases. By the approximation guarantee of $\cA$ it then holds that the $O(c \cdot T(n_0 \cdot n1))$-neighborhood of each node must contain at least one marked node, and thus the gap between marked independent set nodes (output by Alg. $\mathcal{B}$) is indeed bounded by $O(c \cdot T(n_0 \cdot n1))$, as required.
\end{proof}}

We can now combine \Cref{thm:approxIS-LB} and Lemma \ref{lem:IS-replicable} with the lifting argument in \Cref{thm:lb}, using the constrained function $T(N,\Delta) := \log^* N$,  to obtain the following lower bound for component-stable \MPC algorithms.

\begin{lemma}[Super-constant lower bound for component-stable IS]
\label{lemma:stableMPC-LB}
Assuming that the connectivity conjecture holds, there is no $o(\log\log^* n)$-round low-space component-stable \MPC algorithm that computes an independent set of size $\Omega(n/\Delta)$, in all graphs with $n$ nodes and $\Delta\in [1, n)$, with success probability at least $1-\frac 1n$.
\end{lemma}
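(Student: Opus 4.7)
The plan is to apply the lifting framework of \Cref{thm:lb} to the \LOCAL lower bound of \Cref{thm:approxIS-LB}, using the replicability established in \Cref{lem:IS-replicable}. Concretely, I would set $T(N,\Delta) := \log^* N$ and first verify that this is a constrained function in the sense of \Cref{def:constrained-functions} (referenced by \Cref{thm:lb}): since $\log^*(N^c) = \log^* N + O(\log^* c)$, for any fixed $\gamma \in (0, \tfrac12)$ the inequality $T(N^c,\Delta) \le c^{\gamma} T(N,\Delta)$ holds for all sufficiently large $N$; the $\Delta$-direction is trivial since $T$ does not depend on $\Delta$.

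Next, I would check that all the remaining hypotheses of \Cref{thm:lb} are met. The underlying graph family is the class of all graphs, which is normal (it is hereditary and closed under disjoint unions). The problem of computing an independent set of size $\Omega(n/\Delta)$ is $2$-replicable by \Cref{lem:IS-replicable}, hence $O(1)$-replicable as required. The $\Omega(\log^* n) = \Omega(T(N,\Delta))$ \LOCAL lower bound provided by \Cref{thm:approxIS-LB} holds in the randomized \LOCAL model with shared randomness, under only a polynomial (in fact more than $\log N$ loose) estimate $N$ of $n$, which is precisely the form the lifting theorem requires.

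The proof is then a direct contradiction argument. Suppose, for contradiction, that there is a randomized low-space component-stable \MPC algorithm $\cA_{\MPC}$ for the problem, running in $r(n) = o(\log \log^* n) = o(\log T(n, \Delta))$ rounds with success probability at least $1 - \tfrac{1}{n}$. Plugging $\cA_{\MPC}$ into \Cref{thm:lb} produces a randomized low-space \MPC algorithm $\cA^*$ that distinguishes the $n$-cycle from the union of two $\tfrac{n}{2}$-cycles in $o(\log n)$ rounds with success probability at least $1-\tfrac{1}{n}$. This directly contradicts the connectivity conjecture, completing the proof.

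The main (minor) technical point is checking that $\log^* N$ fits the "constrained function" condition cleanly, since $\log^*$ is so slow-growing that the exponent $\gamma$ on the right-hand side is essentially immaterial; this should cause no difficulty. Everything else is bookkeeping: matching the \LOCAL lower bound of \cite{KKSS19} (strengthened to shared randomness via \Cref{fc:MIS-ring-LB} and a loose size estimate) to the precise hypotheses of \Cref{thm:lb}, and confirming that the $\Omega(n/\Delta)$-independent-set problem qualifies as an $O(1)$-replicable graph problem over a normal family. Once these are in place, the lifting theorem does all the real work.
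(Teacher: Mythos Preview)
Your proposal is correct and follows exactly the approach the paper takes: the paper simply states that the lemma follows by combining \Cref{thm:approxIS-LB} and \Cref{lem:IS-replicable} with the lifting argument of \Cref{thm:lb}, using the constrained function $T(N,\Delta) := \log^* N$. Your write-up is in fact more detailed than the paper's own justification, which does not spell out the verification that $\log^* N$ is constrained or that the class of all graphs is normal.
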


Finally, we show that there is a very simple (component-unstable) \MPC algorithm for computing large independent sets in a constant number of rounds.

\hide{
	
	 This algorithm in fact computes an independent set of cardinality $\Omega(\mathbb T(G))$, where $\mathbb T(G)$ is the Turan bound given by $\mathbb T(G) = \sum_v 1/(d_v+1)$, and $d_v$ is the degree of node $v$ in $G$. Notice that $\mathbb T(G) \ge \frac{n}{\Delta+1}$.}

\paragraph{$O(1)$-round randomized algorithm.}

\hide{
We first give a $O(1)$-round randomized algorithm that computes an independent set of $\Theta(\mathbb T(G))$ nodes, in expectation. By running $c \log n$ independent repetitions of this step (simultaneously in parallel) for some sufficiently large constant $c$, we get an independent set of cardinality $\Theta(\mathbb T(G))$ with high probability. First, each node computes its degree, which can be done in $O(1)$ rounds. The algorithm then consists of a single Luby's step (cf. \cite{Luby86}), where each node $v$ picks a number $\chi_v \in [0,1]$ uniformly at random. A node $v$ joins the independent set if $\chi_v < \chi_u$ for every neighbor $u$ of $v$. This can be verified in $O(1)$ rounds in the low-space \MPC setting. We next claim that in expectation at least $\mathbb T(G)/3$ nodes join the independent set. To see this observe that the probability that a node $v$ has the minimum $\chi_v$ value in its neighborhood is at least $1/d_v \cdot (1-1/d_v)^{d_v} \ge 1/(3d_v)$. Thus, in expectation, the number of nodes that joins the independent set is at least $1/3 \cdot \sum_v 1/d_v \ge \mathbb T(G)/4$. By Markov inequality, it holds that w.h.p at least one of the $c \log n$ independent repetitions computes an independent set with at least $\mathbb T(G)/8$ nodes.}

We first give a $O(1)$-round randomized algorithm that computes an independent set of $\Theta(n/\Delta)$ nodes, in expectation. By running $c \log n$ independent repetitions of this step (simultaneously in parallel) for some sufficiently large constant $c$, we get an independent set of cardinality $\Theta(n/\Delta))$ with high probability. First, each node computes its degree, which can be done in $O(1)$ rounds. The algorithm then consists of a single step of Luby's algorithm (see \cite{Luby86}), where each node $v$ picks a number $\chi_v \in [0,1]$ uniformly at random. A node $v$ joins the independent set if $\chi_v < \chi_u$ for every neighbor $u$ of $v$. This can be verified in $O(1)$ rounds in the low-space \MPC setting. We next observe that the probability that a node $v$ has the minimum $\chi_v$ value in its neighborhood is at least $1/(d_v+1) \ge 1/(\Delta+1)$. Thus, in expectation, the number of nodes that joins the independent set is at least $n/\Delta$. By Markov inequality, it holds that w.h.p. at least one of the $c \log n$ independent repetitions computes an independent set with at least $ n/2(\Delta+1)$ nodes.


\paragraph{$O(1)$-round deterministic algorithm.}

We slightly change the algorithm described above so that it would work with pairwise independence, and hence to become deterministic.

\begin{claim}
Consider a simulation of a single step of Luby's algorithm with pairwise independence. Then, the expected number of independent set nodes is $n/(4\Delta+1)$.
\end{claim}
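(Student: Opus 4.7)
The plan is to replace each continuous draw $\chi_v \in [0,1]$ by a pairwise-independent uniform sample from $\{1, 2, \dots, 2\Delta+1\}$, with ties broken by the (within-component unique) ID so that ``strict minimum in the closed neighborhood'' remains well-defined. Such a collection can be generated from a shared seed of $O(\log n)$ bits using a standard two-wise independent hash family over a finite field, so the one-round implementation in low-space \MPC is unaffected. The claim then follows by linearity of expectation combined with a one-line inclusion-exclusion lower bound on the probability that a single node becomes a strict local minimum.

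Concretely, I would fix a node $v$ of degree $d_v \le \Delta$ and let $A_v$ denote the event that $v$ joins the independent set. Writing $B_v := \{\chi_v = 1\}$ and $C_u := \{\chi_u = 1\}$ for each neighbor $u \in N(v)$, observe that whenever $B_v$ holds and none of the $C_u$ do, every neighbor $u$ has $\chi_u \ge 2 > \chi_v = 1$, so $v$ is a strict local minimum and $A_v$ occurs. A union bound thus gives
\begin{equation*}
\Pr[A_v] \;\ge\; \Pr[B_v] \;-\; \sum_{u \in N(v)} \Pr[B_v \cap C_u].
\end{equation*}
The crucial point is that this two-event inequality is precisely what pairwise independence is built to handle: for each neighbor $u$ one has $\Pr[B_v \cap C_u] = \Pr[B_v]\cdot \Pr[C_u] = 1/(2\Delta+1)^2$, while $\Pr[B_v] = 1/(2\Delta+1)$. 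Full mutual independence of the $\chi_v$'s is not needed anywhere in the argument.

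Substituting and using $d_v \le \Delta$ yields
\begin{equation*}
\Pr[A_v] \;\ge\; \frac{1}{2\Delta+1} - \frac{\Delta}{(2\Delta+1)^2} \;=\; \frac{\Delta+1}{(2\Delta+1)^2} \;\ge\; \frac{1}{4\Delta+1},
\end{equation*}
where the last inequality, after cross-multiplying by $(4\Delta+1)(2\Delta+1)^2$, reduces to the trivial bound $\Delta \ge 0$. Summing over all $n$ nodes by linearity of expectation gives the claimed expected independent-set size of at least $n/(4\Delta+1)$. There is no real obstacle here; the only design choice worth flagging is the support size $2\Delta+1$, which is tuned so that $\Pr[B_v]$ and the inclusion-exclusion correction balance to yield exactly the advertised constant. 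A minor bookkeeping point is that each machine must know $\Delta$ (or an $O(1)$-tight upper estimate) to agree on the sampling range, but this is already available in $O(1)$ \MPC rounds as noted earlier in the paper.
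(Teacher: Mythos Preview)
Your proof is correct and follows essentially the same approach as the paper: both isolate the event that $v$ takes a ``small'' value while no neighbor does, and both exploit that the resulting union-bound/inclusion-exclusion only ever needs to factor a joint probability of two variables, which is exactly what pairwise independence provides. The only cosmetic differences are that the paper keeps the sample space continuous and thresholds at $1/(2\Delta)$ (conditioning on $\chi_v < 1/(2\Delta)$ and then union-bounding over neighbors), whereas you discretize to $\{1,\dots,2\Delta+1\}$ and look at the event $\chi_v = 1$ directly; the arithmetic lands on $1/(4\Delta)$ versus $(\Delta+1)/(2\Delta+1)^2 \ge 1/(4\Delta+1)$ respectively, both of which suffice.
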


\begin{proof}
For a node $v$ we consider the event that $\chi_v < 1/(2\Delta)$ and that $\chi_u \ge 1/(2\Delta)$ for every neighbor $u$ of $v$. Note that if this event occurs then $v$ joins the independent set. To bound the probability of this event for node $v$, we first bound the probability that $\chi_u \ge 1/(2\Delta)$ for every neighbor $u$ \emph{conditioned} on the event that $\chi_v < 1/(2\Delta)$. Due to pairwise independence, for every neighbor $u$, it holds that $\chi_u \le 1/(2\Delta)$ with probability $1/2\Delta$ even when conditioning on $\chi_v < 1/(2\Delta)$. Thus, conditioned on $\chi_v < 1/(2\Delta)$, by the union bound, the probability that $v$ has some neighbor $u'$ with $\chi_{u'} \le 1/2\Delta$ is at most $1/2$. Overall, the probability for the event to hold is $1/4\Delta$.
\end{proof}

This step can be derandomized by applying $O(1)$ steps of graph sparsification, exactly in the same manner as done for the derandomization of Luby's step in the maximal independent set algorithm of \cite{CDP20a}.
We therefore have the following:

\begin{theorem}
\label{thm:stableMPC-LB}
There is a deterministic low-space \MPC algorithm that in $O(1)$ rounds computes an independent set of size $\Omega(n/\Delta)$, in all graphs on $n$ nodes with $\Delta = [1,n]$.
%
\end{theorem}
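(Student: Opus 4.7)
The plan is to combine the pairwise independence analysis of Luby's step (already established in the preceding claim) with a derandomization via the method of conditional expectations, executed in $O(1)$ \MPC rounds using the graph sparsification technique of \cite{CDP20a}. First, I would discretize the random choices so that each $\chi_v$ is drawn from a domain of size $\poly(n)$, and replace full independence with a pairwise independent distribution on these $n$ variables. Such a distribution admits a sample space of size $\poly(n)$ indexed by a seed of $O(\log n)$ bits, which fits on a single machine. By the claim proved just before the theorem, for a uniformly random seed the expected size of the independent set produced by one Luby-style step is at least $n/(4\Delta+1)$.

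Next, I would apply the method of conditional expectations to find a specific seed achieving the expectation. The contribution of each node $v$ to the expected IS size is a function only of the joint distribution of $\chi_v$ and $\chi_u$ over $u \in N(v)$, which by pairwise independence has a closed form depending only on $v$'s degree and the restricted seed. The derandomization proceeds by fixing the $O(\log n)$ bits of the seed one chunk at a time: in each round we want to compute, for each possible value of the next chunk, the conditional expectation $\Exp{|I| \mid \text{seed prefix}}$ and retain the chunk maximizing it. Splitting the seed into $O(1)$ chunks of $O(\log n)$ bits each means each chunk has $\poly(n)$ extensions to evaluate in parallel.

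The main obstacle is that in low-space \MPC a single machine cannot hold the neighborhood of a high-degree node, so a node cannot locally evaluate its contribution. This is where the graph sparsification technique of \cite{CDP20a} enters: for each candidate extension of the seed prefix, the pairwise contribution of a node $v$ decomposes into a sum over its neighbors of terms each of which depends only on an edge and the current seed prefix. We distribute these edge-terms across machines, aggregate their sums per node via $O(1)$-round sorting and prefix-sums, and then aggregate per candidate extension to get the total conditional expectation. Selecting the best extension in each phase and recursing a constant number of times fixes the whole seed. This yields a deterministic seed under which the Luby-style step produces an independent set of size at least $n/(4\Delta+1) = \Omega(n/\Delta)$, in $O(1)$ rounds of deterministic low-space \MPC, as required.
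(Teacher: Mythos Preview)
Your proposal is in the right spirit but diverges from the paper's argument, and one step is imprecise enough to count as a gap.

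The paper's proof is a two-stage reduction. When $\Delta > n^{\delta}$ (for a small constant $\delta<\sparam$), it first \emph{sparsifies the vertex set}: using the deterministic framework of \cite{CDP20a} (bounded-independence hashing plus conditional expectations), it derandomizes independent node sampling with probability $n^{\delta}/\Delta$, yielding an induced subgraph with $\Theta(n^{1+\delta}/\Delta)$ nodes and maximum degree $O(n^{\delta})$. Only then, with degrees low enough that each node's $2$-hop neighborhood fits on a single machine, does it derandomize a Luby step exactly as in \cite{CDP20a}, obtaining an independent set of size $\Omega((n^{1+\delta}/\Delta)/n^{\delta}) = \Omega(n/\Delta)$. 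So ``sparsification'' in the paper really means reducing the graph, not decomposing the objective.

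Your route is different: you skip sparsification and try to run conditional expectations directly on the full graph by asserting that ``the contribution of each node $v$ \dots\ has a closed form depending only on $v$'s degree and the restricted seed'' and ``decomposes into a sum over its neighbors.'' That assertion is not correct as stated. The event that $v$ joins the independent set is a conjunction over all neighbors, and pairwise independence does \emph{not} determine its probability; there is no closed form in $d_v$ alone. What \emph{does} work is to replace $|I|$ by the pessimistic estimator
\[
Q \;=\; \sum_v\Bigl(\mathbb{1}[\chi_v<\tfrac{1}{2\Delta}] - \sum_{u\sim v}\mathbb{1}[\chi_v<\tfrac{1}{2\Delta}]\cdot\mathbb{1}[\chi_u<\tfrac{1}{2\Delta}]\Bigr),
\]
which satisfies $Q\le |I|$ pointwise and $\Exp{Q}\ge n/(4\Delta)$ under pairwise independence (this is exactly the union-bound computation in the preceding claim). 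This surrogate \emph{is} a sum of per-vertex and per-edge terms, so its conditional expectation under any partial seed of a standard pairwise family is computable edge-by-edge and aggregatable in $O(1)$ \MPC rounds. With that fix your direct approach goes through, but you should make the use of a pessimistic estimator explicit; without it the ``closed form'' and ``edge decomposition'' claims are false, and the argument has a genuine hole.
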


\begin{proofs}
Let $\delta>0$ be a constant sufficiently smaller than $\sparam$. If $\Delta > n^\delta$, then by the deterministic sparsification framework of \cite{CDP20a}, we can derandomize the subsampling of each node with probability $\frac{n^\delta}{\Delta}$, using $O(1)$-wise independent hash functions coupled with the method of conditional expectations, in such a way that:

\begin{itemize}
\item The number of nodes in the resulting sparsified graph is $\Theta(\frac{n^{1+\delta}}{\Delta})$;
\item The maximum (induced) degree in the sparsified graph is $O(n^{\delta})$.
\end{itemize}	

Since degrees are now low enough to fit $2$-hop neighborhoods onto single machines (as $\delta$ is sufficiently lower than $\sparam$), we can now derandomize a step of Luby's algorithm on this subsampled graph (or on the original graph if $\Delta\le n^\delta$) exactly as in \cite{CDP20a}. In doing so, we find an independent set of size $\Omega(\frac{n^{1+\delta}}{\Delta}/{n^{\delta}})= \Omega(\frac{n}{\Delta+1})$.
\end{proofs}
\hide{


PETER - I had to take out this section, because	I don't think it works under our definition of \LOCAL algorithms. In particular, the proof only works if success probability depends on the number of nodes $n$ in the input graph, but we've defined \LOCAL algorithms to have the success probability depend on the input size estimate $N$ (which I think is the right definition, since their \emph{behavior} only depends on $N$ and not $n$). In this case, as far as I can see, the lemma isn't true.
	

\subsection{A barrier for exploiting instability for \LCL problems}

We then turn to ask whether one can exploit the same boosting effect to break the conditional lower bounds for \LCL problems as well. Interestingly, we show (see \Cref{sec:stable-nonstable-rand}) that unlike the problem of approximating maximum independent set, there are no improved \LOCAL randomized algorithms for any \LCL problem when relaxing the global success guarantee from $1 - \frac1N$ to $\frac1N$ for some $N = \poly(n)$.

\begin{lemma}
	\label{lem:rand-barrier}
	Let $\mathcal{P}$ be an \LCL problem that for some family of $n$-node graphs $\mathcal{G}$ requires $\Omega(T(N))$ rounds for any randomized \LOCAL algorithm that succeeds with probability at least $1-\frac1N$, where $N = \poly(n) \ge n$ and $T(N) = o(\log N)$. Then for any constant $c$, any randomized \LOCAL algorithm that succeeds with probability of $\frac{1}{N^c}$ on any $n'$-node graphs with $N^2 \ge n'$ requires $\Omega(T(N))$ rounds.
	\Peter{This lemma, and its proof, needs slightly updating to meet our definition/notation of \LOCAL algorithms (Section \ref{subsubsec:LOCAL-algs})}
	\Artur{Peter: could you please try to fix it?}
	%
\end{lemma}

\Cref{thm:approxIS-LB,thm:stableMPC-LB} show that there are $O(1)$-replicable problems where low-space component-unstable \MPC algorithms can provably help and beat known conditional lower bounds for low-space component-stable \MPC algorithms. Next, we show that such separation is more difficult to achieve for \LCL problems. We consider Lemma~\ref{lem:rand-barrier} and provide an indication that component-instability might not help randomized \LCL algorithms; at least not by means of boosting the success guarantee as provided for the approximate maximum independent set problem. In particular, we show that relaxing the success guarantee of the randomized local algorithm on $n$-node graphs to be merely $1/n^c$, instead of the high probability bound of $1-1/n^c$, cannot improve the \LOCAL complexity of the problem. This rules out the possibility of obtaining improved component-unstable \MPC algorithms by means of simulating such a relaxed \LOCAL algorithm (and consequently boosting the success guarantee into $1-1/n^c$ by applying parallel repetitions of this simulation).

\begin{proof}[Proof of Lemma \ref{lem:rand-barrier}]
The proof is by contradiction. Assume that there is an algorithm $\cA$ that on every $n'$-node graph $G$ succeeds in $t=o(T(N'))$ rounds with probability at least $\frac{1}{(N')^{c/2}}$, with the estimate $N' = N^2$ and $n' \in [n, n^2 \cdot N \cdot \log N]$. We will show that this implies that $\cA$ solves the problem $\mathcal{P}$ with probability at least $1-1/N$ on every $n$-node graph $G \in \mathcal{G}$ as well, thus leading to a contradiction.

For every graph $G \in \mathcal{G}$ and for every $v \in V(G)$, define the graph $G^v$ to be made of $\ell = c \cdot n \cdot N \cdot \log N$ disconnected copies of the $2t$-radius ball of $v$ in $G$, namely, $B_{2t}(v,G)$. Let $p_v$ be the probability that $v$'s output is correct when running $\cA$ on $G$. Note that this probability depends only on the $2t$-radius ball of $v$ in $G$. It then holds that in each of the $\ell$ copies of $G_v$, the centered node (copy of $v$) is correct in $G^v$ with probability of $p_v$ as well.

Note that $|V(G_v)|\in [n,n \cdot \ell]$ and thus $N^2$ is indeed a polynomial estimate for $|V(G_v)|$. Since $\cA$ succeeds on $G_v$ with probability of $1/N^c$, and since the outputs of each $v$'s copy are independent, we get that $p_v^{\ell} \ge 1/N^c$, concluding that $p_v \ge 1-1/(nN)$. We therefore get that each node $v$ succeeds in the original $G$ with probability of $1-1/nN$, and thus by the union bound, $\cA$ succeeds on $G$ with probability of at least $1-1/N$. We conclude that $t=\Omega(T(N))$ as desired.
\end{proof}

\begin{lemma}
	\label{lem:rand-barrier}
	Let $\mathcal{P}$ be an \LCL problem that for some family of $n$-node graphs $\mathcal{G}$ requires $\Omega(T(N))$ rounds for any randomized \LOCAL algorithm that succeeds with probability at least $1-\frac1N$, where $N = \poly(n) \ge n$ and $T(N) = o(\log N)$.
	
Assume that there is a \LOCAL algorithm $\cA$ that that solves an $n$-radius checkable problem $\mathcal{P}$, on $n$-node graphs, given exact knowledge of $n$ (this is a weaker algorithm, and therefore a weaker assumption, than one that works with a polynomial estimate of $n$), in $T(n)$ rounds with success probability $\frac{1}{n^c}$. Then, there is a \LOCAL algorithm $\mathcal B$ also solves $\mathcal{P}$, on $n$-node graphs, given knowledge of input size estimate $N$ with $n\le N \le n^C$ (for some constant $C$), with success probability $\frac{1}{N}$.
	
	Then for any constant $c$, any randomized \LOCAL algorithm that succeeds with probability of $\frac{1}{N^c}$ on any $n'$-node graphs with $N^2 \ge n'$ requires $\Omega(T(N))$ rounds.
	\Peter{This lemma, and its proof, needs slightly updating to meet our definition/notation of \LOCAL algorithms (Section \ref{subsubsec:LOCAL-algs})}
	\Artur{Peter: could you please try to fix it?}
\end{lemma}

\begin{proof}[Proof of Lemma \ref{lem:rand-barrier}]
	
To determine its output on a node $v$, $\mathcal B$ collects the $T(VVVV)$-radius ball around $v$. It then simulates $\cA$ on the graph given by this ball with $VVV$ added disconnected nodes. $\cA$ returns a

Consider the outputs of $\cA$ at a node $v$ in $n$-node graph $G$, and in a larger graph $G'$ consisting of $G$ with $n^c-n$ extra disconnected nodes, when in both cases it is given input size estimate $N \in [n^{C-c},n^C]$ (which VVV). These two cases are indistinguishable from nodes in $G$, under any \LOCAL algorithm, the the output on (the copy of) $G$ must be the same in both cases. Since the \emph{validity} of this (portion of the) output depends only the topology and IDs of $G$, and $\cA$ succeeds on $G'$ with probability

\end{proof}}

\section{General Non-Uniform and Non-Explicit Derandomization of \MPC Algorithms}
\label{sec:derand}


The conditional lifting arguments for \emph{component-stable} \MPC algorithms in \Cref{thm:lb} imply that for some ($O(1)$-replicable, and hence, e.g., \LCL{}s) graph problems there is a (conditional) exponential gap between randomized and deterministic algorithms, e.g., for problems that admit such an exponential gap in the \LOCAL\ model \cite{CQW+20,CKP19}. In this section we provide a proof of \Cref{thm:complexity-Det-Ran} that in the \MPC model with polynomially many machines, no such gap exists from a complexity perspective. That is, we show that any randomized low-space \MPC algorithm with round complexity $T(n,\Delta)$ and which uses a polynomial number of machines, can be transformed into a deterministic \MPC algorithm that runs in $O(T(n,\Delta))$ rounds and also uses a polynomial number of machines. This deterministic algorithm is component-unstable, and it is also non-uniform and non-explicit.

We begin with an auxiliary lemma that adapts the approach for \LOCAL algorithms from \cite{CKP19} (see also \cite{GK19}) to \MPC algorithms. (Whereas in \cite{CKP19,GK19} this claim holds only for \LCL problems, in our case we will use it in Lemma \ref{lem:det-alg-many-machines} for any problem whose solution can be verified efficiently in the \MPC setting.)

\begin{lemma}[Implicit in \cite{CKP19}]
\label{lem:det-large-prob}
Let $\cA$ be (a possibly non-uniform) randomized \MPC algorithm that solves a graph problem $\cP$ on $n$-node graphs with maximum degree $\Delta$ in $T(n,\Delta)$ rounds, with probability at least $1-2^{-n^2}$. Then, there is a \emph{non-uniform}, \emph{non-explicit} \textsf{deterministic} \MPC algorithm $\cA'$ that solves $\cP$ on $n$-node graphs with maximum degree $\Delta$ in $O(T(n,\Delta))$ rounds.
%

The local and global space of algorithm $\cA'$ is asymptotically the same as that of~$\cA$.
\end{lemma}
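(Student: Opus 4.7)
My plan is to adapt the classical probabilistic-method argument of Chang, Kopelowitz, and Pettie to low-space \MPC. The intuition is that, per-input, the failure probability $2^{-n^2}$ is so small that a union bound over all $n$-node inputs still leaves room for a single random seed to succeed on every one of them simultaneously; such a seed then serves as non-uniform, non-explicit advice hard-wired into $\cA'$.

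First I would normalize the initial distribution of the input to machines: using the $O(1)$-round low-space sorting primitive (the same primitive invoked elsewhere in the paper), any input can be transformed into a canonical layout across machines. This step identifies $\cA$'s behavior on any two inputs that are identical as ID-and-name-labeled graphs but differ only in initial layout, so the set of distinct inputs $\mathcal{I}_{n,\Delta}$ we need to worry about reduces to legal $n$-node graphs with maximum degree $\Delta$ and IDs and names drawn from $[\poly(n)]$. A direct count gives $|\mathcal{I}_{n,\Delta}| \leq 2^{Cn^2}$ for an absolute constant $C$, so the hypothesis that $\cA$ fails with probability at most $2^{-n^2}$ on each fixed input is, up to absorbing the constant $C$ into $n^2$ (or via $O(1)$ parallel repetitions, which preserve round complexity), enough for the union bound to yield $\Prob{\exists I \in \mathcal{I}_{n,\Delta} : \cA(I,S) \text{ errs}} < 1$.

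Therefore a specific seed $S^*_{n,\Delta}$ exists that makes $\cA$ correct simultaneously on every $I \in \mathcal{I}_{n,\Delta}$. The deterministic algorithm $\cA'$ is now built by hard-coding $S^*_{n,\Delta}$ as non-uniform advice, distributing its $\poly(n)$ bits across the machines exactly as the random seed would have been distributed, canonicalizing the input, and then deterministically simulating $\cA$. Correctness on every input is immediate from the choice of seed, while the local space, global space, and asymptotic round complexity all match those of $\cA$. The algorithm is non-uniform because $S^*_{n,\Delta}$ depends on $(n,\Delta)$, both of which $\cA'$ computes in $O(1)$ \MPC rounds, and non-explicit because identifying $S^*_{n,\Delta}$ by brute force would require enumerating $2^{\poly(n)}$ candidate seeds and checking each one against every input in $\mathcal{I}_{n,\Delta}$.

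The main subtlety I expect is in the counting step: one must be careful to quotient out initial-distribution permutations (which the $O(1)$-round canonicalization handles) and to keep the total count of distinct inputs within $2^{O(n^2)}$ so that the union bound goes through against the $2^{-n^2}$ budget in the hypothesis. Once that is pinned down, the remainder is essentially a verbatim port of the CKP19 \LocalDet\ argument, with the hard-coded seed playing the role of the non-uniform advice.
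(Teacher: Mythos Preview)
Your proposal is correct and follows essentially the same probabilistic-method argument as the paper's proof sketch: count the inputs (at most $2^{O(n^2)}$), apply a union bound against the $2^{-n^2}$ failure probability, and hard-code the resulting universal seed as non-uniform advice. You are in fact slightly more careful than the paper on two points---explicitly canonicalizing the initial input layout via sorting, and flagging the constant $C$ in the $2^{Cn^2}$ count (the paper simply asserts $|\mathcal{G}_{n,\Delta}|\le 2^{n^2}$ without discussing IDs and names)---but these are refinements rather than a different approach.
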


\begin{proof}[Proof Sketch]
As we defined it in \Cref{subsubsec:MPC-algs}, the randomized \MPC algorithm $\cA$ has access to shared randomness $\mathcal{S}$ of polynomial length. Once the string $\mathcal{S}$ is fixed, $\cA$ is deterministic.
Let $\mathcal{G}_{n,\Delta}$ be the family of graphs with at most $n$ nodes and maximum degree $\Delta$. Notice than $|\mathcal{G}_{n,\Delta}| \le 2^{n^2}$. Therefore, if we run $\cA$ on each seed and each possible seed fails on at least one graph in $\mathcal{G}_{n,\Delta}$, the success guarantee of the algorithm $\cA$ cannot be better than $1-1/2^{n^2}$. Therefore, since $\cA$ succeeds with probability at least $1-1/2^{n^2}$, there must be at least one seed $\mathcal{S}^*$ that when provided to the \MPC algorithm, the algorithm is \emph{correct for every graph} in $\mathcal{G}_{n,\Delta}$.
%

This gives us a deterministic algorithm $\cA'$, which uses the seed $\mathcal{S}^*$ to solve $\mathcal{P}$ on every graph in $\mathcal{G}_{n,\Delta}$. Note that the resulting deterministic algorithm is non-explicit in the sense that the desired seed should be hard-coded into the machines. Furthermore, it is non-uniform in the sense that a different seed is hard-coded for each $n$, and so we use a different algorithm for each $n$ hard-code seeds for all possible $n$ into the same algorithm (in order to fit within our $poly(n)$ global space bound).
\end{proof}

\junk{
\begin{remark}
As in \cite{CKP19}, 
we notice that Lemma \ref{lem:det-large-prob} works equally when the complexity $T$ is possibly a function of other quantitative global graph parameters, and so it may depends on measures of local sparsity, arboricity/degeneracy, or neighborhood growth.
\end{remark}
}

\begin{lemma}
\label{lem:det-alg-many-machines}
Let $\cA$ be (a possibly non-uniform) randomized \MPC algorithm that solves a graph problem $\cP$ on $n$-node graphs with maximum degree $\Delta$ in $T(n,\Delta)$ rounds, with probability at most $1-\frac1n$ using $n^\alpha$ local space and $n^{\beta}$ global space. In addition, assume that the correctness of the algorithm can be checked in $T(n,\Delta)$ \MPC rounds, using $n^\alpha$ local space and $n^{\beta}$ global space. Then, there exists a component-unstable, non-uniform, non-explicit deterministic \MPC algorithm $\cA'$ that solves $\cP$ on $n$-node graphs with maximum degree $\Delta$ in $O(T(n,\Delta))$ rounds. The local space of $\cA'$ is $n^{\alpha}$ and the global space is $\tilde O(n^{2+\beta})$.
%
\end{lemma}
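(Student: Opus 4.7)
The plan is to apply Lemma~\ref{lem:det-large-prob} after boosting the success probability of $\cA$ from $1-\tfrac{1}{n}$ up to $1-2^{-n^2}$ via parallel repetition, using the assumed checkability of solutions as the mechanism for selecting a successful copy. The resulting boosted algorithm will be inherently component-unstable (since the selection of which copy to output must be coordinated globally), but this is consistent with the conclusion of the lemma.

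First, I would construct an intermediate randomized \MPC algorithm $\cB$ as follows. Allocate $k = c n^2$ disjoint groups of machines, for a sufficiently large constant $c$, each large enough to host one execution of $\cA$ (so the global space becomes $O(n^{2+\beta})$ and the local space stays at $n^\alpha$). Partition the shared random seed $\mathcal{S}$ into $k$ blocks of independent fresh randomness, one per group, and run $\cA$ in parallel on each group. Each individual copy fails with probability at most $1/n$, so the probability that every one of the $k$ copies fails is at most $(1/n)^{cn^2} \le 2^{-n^2}$. Next, in each group invoke the assumed correctness-checking procedure in $T(n,\Delta)$ rounds within the same space budget; each machine in a group learns a single bit indicating whether its copy produced a valid solution. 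Using standard $O(1)$-round low-space \MPC aggregation (e.g., tree-based sum/min aggregation, cf.\ sorting in \cite{GSZ11}), elect the lowest-indexed group whose copy is valid, broadcast its index, and have every machine adopt the output of that group for the nodes it owns. The total round complexity is $T(n,\Delta) + T(n,\Delta) + O(1) = O(T(n,\Delta))$, and algorithm $\cB$ succeeds whenever at least one of the $k$ copies succeeds, i.e., with probability at least $1-2^{-n^2}$.

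Now I would apply Lemma~\ref{lem:det-large-prob} to $\cB$, which yields a non-uniform, non-explicit deterministic \MPC algorithm $\cA'$ with the same round complexity $O(T(n,\Delta))$ and the same local/global space bounds as $\cB$, namely $n^\alpha$ local and $\tilde O(n^{2+\beta})$ global (the $\tilde O$ absorbs polylogarithmic overhead from aggregation and from storing the hard-coded seed). The resulting algorithm is component-unstable because the ``winning'' group index is a globally computed object that depends on outcomes across the whole graph, so nodes in one component may effectively read off a choice determined in part by behavior on other components.

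The main obstacle to be careful with is the checking step: the lemma requires that correctness be verifiable within the same asymptotic time and space as $\cA$, and we must ensure this verification runs inside each group without exceeding the local-space budget and without the groups interfering with one another. This is precisely why we pay a factor of $O(n^2)$ in global space, so that each of the $n^2$ copies of $\cA$ (and its verifier) has a private dedicated pool of machines. A secondary subtlety is that Lemma~\ref{lem:det-large-prob} produces a non-uniform and non-explicit algorithm because the good seed is only guaranteed to exist by a counting argument over the $\le 2^{n^2}$ legal input graphs on $n$ nodes; this is reflected in the statement of the current lemma and does not require any additional work beyond invoking Lemma~\ref{lem:det-large-prob} as a black box.
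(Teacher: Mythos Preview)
Your proposal is correct and follows essentially the same approach as the paper: boost the success probability to $1-2^{-n^2}$ via $O(n^2)$ parallel independent repetitions, use the assumed checker to select a correct copy, and then invoke Lemma~\ref{lem:det-large-prob}. The paper's proof is slightly terser (it simply picks an ``arbitrary correct simulation'' rather than spelling out the aggregation/election step), but the argument and the resulting bounds are the same.
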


\begin{proof}
We first boost the success guarantee of $\cA$ by running multiple parallel simulations: we will run $\ell = O(n^2)$ simulations of algorithm $\cA$ in parallel, using a distinct set of machines for each simulation. We can determine the correctness of all simulations in $T(n,\Delta)$ \MPC rounds, and the final output is determined by choosing an arbitrary correct simulation, if one exists. The probability that the algorithm fails in all these simulations is at most $(\frac1n)^{\ell} = 2^{-\omega(n^2)}$. Thus, if we incorporate this algorithm in Lemma \ref{lem:det-large-prob}, there exists a non-uniform, non-explicit deterministic \MPC algorithm $\cA'$ that solves the problem in $T(n,\Delta)$ rounds using the same local space as $\cA$ and $\tilde O(n^{2+\beta})$ global space.

Let us finally mention that the obtained algorithm $\cA'$ is component-unstable, since it relies on globally agreeing on the outcome of all simulations.
\end{proof}

Now Lemma \ref{lem:det-alg-many-machines} yields the proof of \Cref{thm:complexity-Det-Ran}; informally, $\MPCDetNonStable = \MPCRandNonStable$ holds in the regime of polynomially many machines and for non-uniform, non-explicit low-space \MPC algorithms. 


\section{Conclusions}
\label{sec:conclusions}


In this paper, we investigate the power of component-instability for solving local graph problems in the low-space \MPC model. Our main conclusion is that component instability is useful mainly in two (possibly related) aspects: amplification of the success guarantee and derandomization. In the context of randomized algorithms, it allows one to boost the success guarantee of the algorithm. This appears to be useful especially for approximation problems (e.g., maximizing or minimizing a subset of edges or vertices with a given property). As we note, it is unlikely to help in reducing the complexity for \LCL problems. In the context of derandomization, it allows one to efficiently simulate the randomized local algorithm while globally foraging for a short seed. Amplification and derandomization are both obtained by a global computation regardless of the connected components of the graph. A major open problem left by this work is to provide conditional hardness results for graph problems (say, first, for \LCL problems) that hold also for unstable algorithms.



	
	\newcommand{\Proc}{Proceedings of the~}
	\newcommand{\APPROX}{International Workshop on Approximation Algorithms for Combinatorial Optimization Problems}
	\newcommand{\CACM}{Commununication of the ACM}
	\newcommand{\CSR}{International Computer Science Symposium in Russia (CSR)}
	\newcommand{\DISC}{International Symposium on Distributed Computing (DISC)}
	\newcommand{\FOCS}{IEEE Symposium on Foundations of Computer Science (FOCS)}
	\newcommand{\ICALP}{Annual International Colloquium on Automata, Languages and Programming (ICALP)}
	\newcommand{\IPCO}{International Integer Programming and Combinatorial Optimization Conference (IPCO)}
	\newcommand{\IPL}{Information Processing Letters}
	\newcommand{\ISAAC}{International Symposium on Algorithms and Computation (ISAAC)}
	\newcommand{\JACM}{Journal of the ACM}
	\newcommand{\JCSS}{Journal of Computer and System Sciences}
	\newcommand{\NIPS}{Conference on Neural Information Processing Systems (NeurIPS)}
	\newcommand{\OPODIS}{International Conference on Principles of Distributed Systems (OPODIS)}
	\newcommand{\OSDI}{Conference on Symposium on Opearting Systems Design \& Implementation (OSDI)}
	\newcommand{\PODS}{ACM SIGMOD Symposium on Principles of Database Systems (PODS)}
	\newcommand{\PODC}{ACM Symposium on Principles of Distributed Computing (PODC)}
	\newcommand{\RSA}{Random Structures \& Algorithms}
	\newcommand{\SICOMP}{SIAM Journal on Computing}
	\newcommand{\SIJDM}{SIAM Journal on Discrete Mathematics}
	\newcommand{\SIROCCO}{International Colloquium on Structural Information and Communication Complexity (SIROCCO)}
	\newcommand{\SODA}{Annual ACM-SIAM Symposium on Discrete Algorithms (SODA)}
	\newcommand{\SPAA}{Annual ACM Symposium on Parallelism in Algorithms and Architectures (SPAA)}
	\newcommand{\STACS}{Annual Symposium on Theoretical Aspects of Computer Science (STACS)}
	\newcommand{\STOC}{Annual ACM Symposium on Theory of Computing (STOC)}
	\newcommand{\TALG}{ACM Transactions on Algorithms}
	\newcommand{\TCS}{Theoretical Computer Science}
	
\bibliographystyle{alpha}
\bibliography{references}





	

\end{document}